

\documentclass[11pt]{article} 

\usepackage[utf8]{inputenc} 


\usepackage{cite}

\usepackage{hyperref}
\hypersetup{colorlinks=false}

\usepackage{geometry,amsmath} 
\geometry{a4paper} 
 \geometry{margin=0.8in} 

\usepackage{graphicx} 


%


\usepackage{booktabs} 
\usepackage{array} 
\usepackage{paralist} 
\usepackage{verbatim} 
\usepackage{subfig} 

\usepackage{float}

\usepackage{todonotes}

\usepackage{amssymb}
\usepackage{amsmath}
\usepackage{amsthm}
\usepackage{mathrsfs}

\usepackage{xcolor,pict2e}

\theoremstyle{plain}
\newtheorem{thm}{Theorem}[section]
\newtheorem{cor}[thm]{Corollary}
\newtheorem{lem}[thm]{Lemma}
\newtheorem{prop}[thm]{Proposition}
\newtheorem{exm}[thm]{Example}
\newtheorem{defn}[thm]{Definition}
\newtheorem{rem}[thm]{Remark}

\usepackage{fancyhdr} 
\pagestyle{fancy} 
\lhead{}\chead{}\rhead{}
\lfoot{}\cfoot{\thepage}\rfoot{}

\usepackage{sectsty}
\allsectionsfont{\sffamily\mdseries\upshape} 

\usepackage[nottoc,notlof,notlot]{tocbibind} 
\usepackage[titles,subfigure]{tocloft} 


\title{Symmetries, conservation laws and Noether's theorem for differential-difference equations}
\author{Linyu Peng\footnote{Email: L.Peng@aoni.waseda.jp}\vspace{0.4cm}
\\
{\it Department of Applied Mechanics and Aerospace Engineering, Waseda University, }\\ {\it Ohkubo, Shinjuku-ku, Tokyo 169-8555, Japan}}

\date{} 

\begin{document}
\maketitle

\abstract{
This paper mainly contributes to the extension of Noether's theorem to differential-difference equations. For that purpose, we first investigate the prolongation formula for continuous symmetries, which makes a characteristic representation possible. The relations of symmetries, conservation laws and the Fr\'echet derivative are also investigated. For non-variational equations, since Noether's theorem is now available, the self-adjointness method is adapted to the computation of conservation laws for differential-difference equations. A couple of differential-difference equations are investigated as illustrative examples, including the Toda lattice and semi-discretisations of the Korteweg-de Vries (KdV) equation. In particular, the Volterra equation is taken as a running example.
}
\vspace{0.2cm}

{\bf Keywords:} Differential-difference equations, symmetries, conservation laws, Noether's theorem, self-adjointness and formal Lagrangians.

\section{Introduction}
The power of symmetry methods for differential equations has mainly been revealed during the last century; even though Sophus Lie's pioneering work was done in the 1880s (e.g. \cite{Ackerman1975,AH1976}).  In the popularisation of the applications of Lie's method to differential equations, Peter Olver's book \cite{Olver1993} is certainly one of the most inspiring and comprehensive masterpieces. In particular, great advances of Noether's (first and second) theorems \cite{Noether1918} are discussed therein, either for the broadest possibility of symmetries for variational problems and the corresponding conservation laws for Euler-Lagrange equations, or for infinite dimensional symmetries yielding differential relationships among Euler-Lagrange equations. Using different language, Vinogradov also made great developments in Noether theory at almost the same era (e.g. \cite{Kra1999,Vinogradov1984}). For a complete history, the reader should refer to Kosmann-Schwarzbach's book \cite{Kosmann2011}. For the interest of the broader community, other excellent textbooks are also available, see for example  \cite{Bluman1989,Hydon2000,Ibragimov1985,Mansfield2010,Marsden1999,Olver1995,Stephani1989}.

For (finite) difference equations, the methods of continuous symmetries and conservation laws, as well as their applications have been well adapted during the last few decades, see, for instance\cite{Dorod2010,Grant2013,Hydon2014,Levi2006,Marsden2001,Mikhailov2011,Quispel1984,Rasin2007a,Rasin2007b,Xenitidis2012}. The first effort to understand Noether's theorems was made by Maeda \cite{Maeda1981}, who considered symmetries of discrete mechanical systems. Since then, Noether's theorems for difference variational problems have been extensively investigated, see, for example \cite{Dorod2001,Hydon2004} for Noether's first theorem, and \cite{Hydon2011} for Noether's second theorem.
Surprisingly, a general differential-difference version of Noether's theorems has not yet been available, while the investigations of continuous symmetries and conservation laws (e.g. \cite{GH1999,Goktas1997,Jiang1998,Li2008,Shen2010}), integrability (e.g. \cite{AC1991,Cherdantsev1995,Yamilov2006}) and variational property (e.g. \cite{Kuper1985}) of differential-difference equations (DDEs) started relatively early. To fill this gap, the first purpose of this paper is to extend Noether's (first) theorem to DDEs. 

For non-variational differential or difference equations, the self-adjointness method has shown its efficiency in the computation of conservation laws from symmetries of given differential or difference equations, see, for instance \cite{Ibragimov2007} for a differential version and \cite{Peng2015} for a difference version. The adjoint system comes from the Euler-Lagrange equations governed by a formal Lagrangian, which is defined using the original system. A system of differential or difference equations is said to be self-adjoint if, via a substitution, the adjoint system holds on all solutions of the original system. In this paper we extend the self-adjointness method to DDEs. As Noether's theorem is available, it is immediately applicable to various DDEs, in particular to some of those from Yamilov's classification of integrable DDEs \cite{Yamilov2006}.

To commence our discussion for DDEs, it is helpful to begin with the counterparts for differential and difference equations, which are introduced in Section 2. We provide a simple proof of Noether's theorem for both cases by using the corresponding Fr\'echet derivatives. The self-adjointness method is summarised. It is already well understood that the continuous KdV equation is self-adjoint \cite{Ibragimov2007}; we show here that the discrete KdV equation is also self-adjoint. 
In Section 3, we start with the prolongation formula of continuous generalised symmetries, and conclude that, in the most general cases, only prolongations of regular infinitesimal generators can be equivalently expressed of the evolutionary form. Noether's theorem then follows for regular variational symmetries. The relation of the Fr\'echet derivative and symmetries or conservation laws is also investigated. Section 4 is devoted to the self-adjointness method and its application to the computation of conservation laws of non-variational DDEs. Illustrative examples are provided, especially integrable DDEs.

\section{Noether's theorem for differential \& difference Equations}
In this section, we review relevant facts related with Noether's theorem. We also prove several new results for difference equations.

\subsection{Differential equations}

We first consider a differential version of Noether's theorem. We mainly follow Olver's language \cite{Olver1993}. Other practical textbooks are also available, for instance \cite{Bluman1989,Hydon2000,Ibragimov1985}. In particular, Kosmann-Schwarzbach's book \cite{Kosmann2011} includes a complete history of Noether's (two) theorems.  

 Let $x=(x^1,x^2\ldots,x^p)\in \mathbb{R}^p$ be the independent variable and let $u=(u^1,u^2,\ldots,u^q)\in\mathbb{R}^q$ be the dependent variable. Partial derivatives of $u^{\alpha}$ are written in the multi-index form $u_J^{\alpha}$ where $J=(j_1,j_2,\ldots,j_p)$. For the differential case, each index $j_i$ is a non-negative integer which denotes the number of derivatives with respect to $x^i$. Namely
\begin{equation}
u^{\alpha}_J=\frac{\partial^{|J|} u^{\alpha}}{\partial (x^1)^{j_1}\partial (x^2)^{j_2}\ldots \partial (x^p)^{j_p}},
\end{equation}
where $|J|=j_1+j_2+\cdots +j_p$. Locally, a system of differential equations can be written as
\begin{equation}\label{sode}
\mathcal{A}=\{F_{\alpha}(x,[u])=0\},
\end{equation}
where $[u]$ is shorthand for $u$ and finitely many of its partial derivatives. Consider a one-parameter group of Lie point transformations as follows:
\begin{equation}\label{lptde}
\widetilde{x}=\widetilde{x}(\varepsilon;x,u),\quad \widetilde{u}=\widetilde{u}(\varepsilon;x,u),
\end{equation}
subject to $\widetilde{x}(e;x,u)=x$, $\widetilde{u}(e;x,u)=u$. Here $\varepsilon=e$ is the identity element of the one-parameter group.
Define the {\bf total derivative} with respect to $x^i$ as 
\begin{equation}
D_i=\frac{\partial}{\partial x^i}+\sum_{\alpha,J}u_{J+\bold{1}_i}^{\alpha}\frac{\partial}{\partial u_J^{\alpha}},
\end{equation}
where $\bold{1}_i$ is the
$p$-tuple with only one nonzero entry $1$ in the $i$-th place. In this paper, we often write the first-order derivatives $u_{\bold{1}_i}$ (and first-order forward shifts for difference equations) as $u_i$ for simplicity.
The prolonged transformations to higher-order can then be immediately obtained via the chain rule, e.g.
\begin{equation}
\widetilde{u_{\bold{1}_i}^{\alpha}}=\frac{\partial\widetilde{u}^{\alpha}}{\partial \widetilde{x}^i}=\sum_j\frac{D_j\widetilde{u}^{\alpha}}{D_j\widetilde{x}^i}.
\end{equation}
The corresponding {\bf infinitesimal generator} is 
\begin{equation}
X=\xi^i(x,u)\partial_{x^i}+\phi^{\alpha}(x,u)\partial_{u^{\alpha}},
\end{equation}
where 
\begin{equation}
\xi^i=\frac{\operatorname{d}\! \widetilde{x}^i}{\operatorname{d}\!\varepsilon}\Big|_{\varepsilon=e},\quad \phi^{\alpha}=\frac{\operatorname{d}\!\widetilde{u}^{\alpha}}{\operatorname{d}\! \varepsilon}\Big|_{\varepsilon=e}.
\end{equation}
[Note. The Einstein summation convention is used here and all throughout the paper.] The prolonged generator $\bold{pr}^{(k)}\!X$ can be written in terms of $u$, $\xi$, $\phi$ and their derivatives, by virtue of the prolonged transformations:
\begin{equation}
\bold{pr}^{(k)}\!X=X+\sum_{\alpha}\sum_{|J|=1}^k\phi_J^{\alpha}\frac{\partial}{\partial u_J^{\alpha}},
\end{equation}
where
\begin{equation}
\phi^{\alpha}_{J+\bold{1}_i}=D_i\phi_J^{\alpha}-(D_i\xi^j)u_{J+\bold{1}_j}^{\alpha}.
\end{equation}
\begin{defn}[Infinitesimal invariance criterion \cite{Olver1993}]
Suppose the differential system (\ref{sode}) satisfies the maximal rank condition. The transformations (\ref{lptde}) form a symmetry group of $\mathcal{A}$ if its infinitesimal generator $X$ satisfies the following {\bf linearized symmetry condition}, that is
\begin{equation}\label{lscde}
\bold{pr}^{(k)}X(F_{\alpha}(x,[u]))=0,\text{ whenever } \{F_{\alpha}(x,[u])=0\},
\end{equation}
where $k$ is a properly chosen finite integer. For the sake of simplicity, we will omit the index $k$ and just write $\bold{pr}X$ in the current paper.
\end{defn}
Such symmetries are usually called {\bf Lie point symmetries}. More generally, the infinitesimal generator may also depend on derivatives of $u$, namely $\xi=\xi(x,[u])$ and/or $\phi=\phi(x,[u])$, which we call {\bf generalised symmetries}. In both cases, it is often more convenient to write prolonged generators in terms of {\bf characteristics} of symmetries $Q^{\alpha}=\phi^{\alpha}-\xi^iu^{\alpha}_{\bold{1}_i}$, that is
\begin{equation}
\bold{pr}X=\xi^iD_i+\sum_{\alpha,J}(D_JQ^{\alpha})\frac{\partial}{\partial u_J^{\alpha}}.
\end{equation}
Here we use the shorthand that $D_J=D_1^{j_1}D_2^{j_2}\ldots D_p^{j_p}$. As can be seen from the linearized symmetry condition, equivalently it is sufficient to consider evolutionary infinitesimal generators only, namely
\begin{equation}
\bold{pr}X=\sum_{\alpha,J}(D_JQ^{\alpha})\frac{\partial}{\partial u_J^{\alpha}}.
\end{equation}
The $\xi^iD_i$ part is trivial.

\begin{rem}
In the rest of this section, when we mention a group of symmetries, we always mean a group of generalised symmetries whose generators are of the evolutionary form unless otherwise specified. 
\end{rem}

Various methods for calculating symmetries for differential equations are available. For concrete examples the reader should refer to, for instance\cite{Ackerman1975,Olver1993,Kra1999,Bluman1989,Hydon2000,Ibragimov1985,Stephani1989,Ovsiannikov1982}.
\begin{rem}\label{glo}
 It is worth mentioning that the linearized symmetry condition (for vector fields of the evolutionary form) is equivalent to the following condition that
 \begin{equation}
 (\bold{D}_F)_{\alpha\beta}Q^{\beta}=0, \text{ whenever } \{F_{\alpha}(x,[u])=0\},
 \end{equation}
 where $Q$ is the characteristic and the {\bf Fr\'echet derivative} with respect to a tuple of functions $F_{\alpha}(x,[u])$ is defined as
 \begin{equation}
 \bold{D}_F(P):=\frac{\operatorname{d}}{\operatorname{d}\!\varepsilon}\Big|_{\varepsilon=0}F(x,[u+\varepsilon P(x,[u])]).
 \end{equation}
 It has also been called the {\bf linearisation operator} (e.g. \cite{Kra1999}). In components, it is  \begin{equation}
 (\bold{D}_F)_{\alpha\beta}=\sum_J\frac{\partial F_{\alpha}}{\partial u_J^{\beta}}D_J.
 \end{equation}
\end{rem}
%

Another important ingredient for Noether's theorem is a {\bf conservation law}, which is a divergence expression about a $p$-tuple $P$,
\begin{equation}
\operatorname{Div}P=0
\end{equation}
which vanishes on all solutions of a differential system $\mathcal{A}$. The divergence operator is defined as
\begin{equation}
\operatorname{Div}P=D_iP^i.
\end{equation}
Those we often encounter in fluid dynamics and mechanics include, for instance, the conservations of mass, energy or momenta. If the system $\mathcal{A}$ in (\ref{sode}) is totally nondegenerate (see \cite{Olver1993}) or analytic (see \cite{Ibragimov1985}), a divergence $\operatorname{Div}P$ vanishes on all of its solutions if and only if there exist functions $B^{\alpha}_J(x,[u])$ such that 
\begin{equation}
\operatorname{Div}P=\sum_{\alpha,J}B^{\alpha}_J(D_JF_{\alpha}).
\end{equation}
In particular, a system of the Kovalevskaya form satisfies the nondegeneracy condition. This can then be integrated by parts, which yields
\begin{equation}
\operatorname{Div}P=\operatorname{Div}R+Q^{\alpha}F_{\alpha}.
\end{equation}
Here 
\begin{equation}
Q^{\alpha}=\sum_J(-D)_JB^{\alpha}_J, 
\end{equation}
where the adjoint operator of $D_J$ is 
\begin{equation}
(-D)_J=(-1)^{|J|}D_J.
\end{equation}
Replacing $P$ by $P-R$, we get an equivalent conservation law (with a difference of trivial conservation law $R$)
\begin{equation}
\operatorname{Div}P=Q^{\alpha}F_{\alpha}.
\end{equation}
This is called the {\bf characteristic form} of a conservation law, and $Q$ is called the {\bf characteristic} of the given conservation law. 

Consider an action
\begin{equation}
\mathscr{L}[u]=\int L(x,[u])\operatorname{d}\!x^1\wedge\operatorname{d}\!x^2\wedge\cdots\wedge\operatorname{d}\!x^p.
\end{equation}
The corresponding Euler-Lagrange equations are obtained by varying each $u^{\alpha}$ and then integrating by parts. These equations can be written more compactly using the {\bf Euler operator} $\bold{E}_{\alpha}$, and the Euler-Lagrange equations become
\begin{equation}
\bold{E}_{\alpha}(L)=0,
\end{equation}
where
\begin{equation}
\bold{E}_{\alpha}:=\sum_J(-D)_J\frac{\partial}{\partial u_J^{\alpha}}.
\end{equation}
\begin{rem}
A function $L(x,[u])$ is a null Lagrangian in the sense that $\bold{E}(L)\equiv0$ if and only if $L$ is a total divergence.
\end{rem}

\begin{rem}\label{ccld}
It has been proven geometrically that for normal equations, characteristic of a conservation law lies in the kernel of the adjoint of the corresponding Fr\'echet derivative (e.g. \cite{Kra1999}). In \cite{Olver1993}, Olver provided a simple proof using the fact that 
\begin{equation}
\bold{E}(Q\cdot F)=\bold{D}_Q^{\ast}(F)+\bold{D}_F^{\ast}(Q),
\end{equation}
where for tuples $A$, $B$ and $F$ of proper orders, the Fr\'echet derivative $\bold{D}$ and its adjoint operator satisfy the following identity 
\begin{equation}
A\bold{D}_F(B)=B\bold{D}_F^{\ast}(A)+\operatorname{Div}P
\end{equation}
for a certain tuple $P$. Assume $Q$ is the characteristic of some conservation law for a system $\{F_{\alpha}(x,[u])=0\}$, then $Q\cdot F$ is a null Lagrangian and hence $\bold{E}(Q\cdot F)\equiv0$. On solutions of the system, $\bold{D}_Q^{\ast}(F)$ also vanishes since the adjoint operator is given by
\begin{equation}
(\bold{D}^{\ast}_Q)_{\alpha\beta}=\sum_J(-D)_J\cdot \frac{\partial Q^{\beta}}{\partial u_J^{\alpha}}.
\end{equation}
Therefore, $\bold{D}_F^{\ast}(Q)=0$ holds on solutions of the system if $Q$ is a characteristic of one of its conservation laws.
This is, sometimes, called the multiplier method for classifying conservation laws. 
\end{rem}

For a system of Euler-Lagrange equations $\bold{E}(L)=0$, the associated Fr\'echet derivative is always self-adjoint (e.g. \cite{Olver1993}), namely
\begin{equation}
(\bold{D}_{\bold{E}(L)})^{\ast}=\bold{D}_{\bold{E}(L)}.
\end{equation}
Hence Remark \ref{ccld} in some sense implies Noether's theorem through the relation between characteristics of symmetries and conservation laws. It is also interesting to realise that self-adjointness of a Fr\'echet derivative is sufficient but not necessary for constructing a relation between symmetries and conservation laws. For instance, skew self-adjointness is also sufficient, namely $\bold{D}_F^{\ast}=-\bold{D}_F$ for a system $\{F_{\alpha}(x,[u])=0\}$. This can further be generalised to 
\begin{equation}
\bold{D}^{\ast}_F=B(x,[u])\bold{D}_F
\end{equation}
for some non-zero functions $B(x,[u])$. A simple example is the following equation
\begin{equation}
u_t-u_{xxx}=0.
\end{equation}
The Fr\'echet derivative is
\begin{equation}
\bold{D}=D_t-D_x^3,
\end{equation}
which satisfies $\bold{D}^{\ast}=-\bold{D}$.

A generalised vector field 
\begin{equation}
X=\xi^i(x,[u])\frac{\partial}{\partial x^i}+\phi^{\alpha}(x,[u])\frac{\partial}{\partial u^{\alpha}}
\end{equation}
is a (divergence) {\bf variational symmetry} of an action $\mathscr{L}[u]$ if and only if there exists a $p$-tuple $P$ such that
\begin{equation}
\bold{pr}X(L)+L\operatorname{Div}\xi=\operatorname{Div}P.
\end{equation}
Equivalently, this can be re-organised into evolutionary form 
\begin{equation}
\begin{aligned}
\sum_{\alpha,J}(D_JQ^{\alpha})\frac{\partial L}{\partial u_J^{\alpha}}&=\operatorname{Div}P-L\operatorname{Div}\xi-\xi^iD_iL\\
&=D_i(P^i-\xi^iL).
\end{aligned}
\end{equation}
It is now well known that variational symmetries of a functional $\mathscr{L}[u]$ are still symmetries for the associated Euler-Lagrange equations. {\bf Noether's theorem} for differential systems (e.g. \cite{Olver1993,Noether1918,Kra1999}) states that  the characteristic $Q$ of a variational symmetry is a characteristic of a conservation law for the Euler-Lagrange equations. Namely, there exists a $p$-tuple $P$ such that
\begin{equation}
\operatorname{Div}P=Q^{\alpha}\bold{E}_{\alpha}(L).
\end{equation}

\subsection{Difference equations}

Now we consider finite difference equations. For varying discrete steps, the reader should refer to, e.g. \cite{Dorod2010,Dorod2001}. In this paper, however, we are mainly focused on fixed discrete steps, in particular difference equations on $\mathbb{Z}^p$. In this situation, $n\in\mathbb{Z}^p$ is understood as a vector of independent variables while $u_n\in\mathbb{R}^q$ is a vector of dependent variables. A first construction of Noether's theorem for ordinary difference equations was studied by Maeda \cite{Maeda1981}; a general  extension has now been well studied, see for instance\cite{Hydon2014,Dorod2001,Hydon2004,Hickman2003,Mansfield2006,Peng2013}. In particular, Kupershmidt's book \cite{Kuper1985} includes fundamental analysis for variational principles with respect to difference as well as differential-difference equations.

More precisely, let $n=(n^1,n^2,\ldots,n^p)\in \mathbb{Z}^p$ and
$u_n=(u_n^1,u_n^2,\ldots,u_n^q)\in U\subset \mathbb{R}^q$ be the
independent and dependent variables respectively. The shift operator
(or map) $S$ is defined as
\begin{equation}
\begin{aligned}
S_k:n\mapsto n+\bold{1}_k,
\end{aligned}
\end{equation}
while its generalisation to a function $f(n)$ is
\begin{equation}
\begin{aligned}
S_k:f(n)\mapsto f(S_kn).
\end{aligned}
\end{equation}
We will also use the notation $S_{\bold{1}_k}=S_k$, and the
composite of shift operators using multi-index notation is given by $S_J=S_1^{j_1}S_2^{j_2}\ldots S_p^{j_p}$,
where $J=(j_1,j_2,\ldots,j_p)$ is a $p$-tuple. However, different from the differential case, now each index is $j_i\in\mathbb{Z}$. The inverse of the shift operator $S_{\bold{1}_k}$ is given by
 $S_{-\bold{1}_k}:n\to n-\bold{1}_k$. The inverse operator $S_{-J}$ of
the composite of shifts is similarly denoted.

Consider a one-parameter (Lie point) transformations, which keeps $n$ invariant, namely
\begin{equation}
\widetilde{u}_n=\widetilde{u}(\varepsilon;n,u_n) \text{ and } \widetilde{u}_n|_{\varepsilon=e}=u_n.
\end{equation}
The corresponding {\bf infinitesimal generator} is 
\begin{equation}
X=Q^{\alpha}_n(n,u_n)\frac{\partial}{\partial u_n^{\alpha}},
\end{equation}
where  the {\bf characteristic} $Q_n=(Q_n^1,Q_n^2,\ldots,Q^q_n)$ is
\begin{equation}
Q_n^{\alpha}(n,u_n)=\frac{\operatorname{d}\!\widetilde{u}_n^{\alpha}}{\operatorname{d}\! \varepsilon}\Big|_{\varepsilon=e}.
\end{equation}
For generalised (or higher-order)
transformations , the characteristic is $Q_n^{\alpha}=Q_n^{\alpha}(n,[u])$ where $[u]$ denotes $u_n$ and a finite number of 
its shifts in the difference case. Prolongations
of the vector field $X$ can be immediately achieved via the identities 
\begin{equation}
S_J\widetilde{u}_n=\widetilde{u}_{n+J}
\end{equation}
and we have
\begin{equation}
\bold{pr}X=\sum_{\alpha,J}\left(S_JQ^{\alpha}_n\right)\frac{\partial}{\partial
u_{n+J}^{\alpha}}.
\end{equation}
For the sake of
simplicity, we also write 
$u^{\alpha}_{n+J}$ as $u^{\alpha}_J$. When $J$ vanishes, we sometimes omit the indices, e.g. $Q^{\alpha}=Q_n^{\alpha}$ and $u^{\alpha}=u^{\alpha}_n$. 


\begin{defn}
Consider a system of difference equations
\begin{equation}
\mathcal{A}^{\vartriangle}=\{F_{\alpha}(n,[u])=0\}. \label{fde}
\end{equation}
 A one-parameter group of generalised transformations  is a symmetry group if the associated
infinitesimal generator satisfies the following {\bf linearized symmetry condition}
\begin{equation}
\bold{pr}X(F_{\alpha})=0,
\end{equation}
when (\ref{fde}) holds.
\end{defn}

 For a system of
difference equations given by (\ref{fde}), a {\bf conservation law}
is defined as a {\bf difference divergence} expression
\begin{equation}\label{defcl}
\operatorname{Div}^{\vartriangle}P(n,[u]):=\sum_{i=1}^p(S_i-\operatorname{id})P^i(n,[u]),
\end{equation}
which vanishes on all solutions of the system. Here $P$ is a $p$-tuple and the operator $\operatorname{id}$ is the identity operator. For a difference system of the Kovalevskaya form (see, e.g. \cite{Grant2013,Hydon2014,Peng2013}) , there exist functions $B^{\alpha}_J(n,[u])$ such that 
\begin{equation}\label{deltacl}
\operatorname{Div}^{\vartriangle}P(n,[u])=\sum_{\alpha,J}B^{\alpha}_J(S_JF_{\alpha}).
\end{equation}
In the difference case, integration by parts is replaced by the formula of summation by parts 
\begin{equation}\label{sbps}
(Sf)g=(S-\operatorname{id})(fS_{-1}g)+fS_{-1}g=\operatorname{Div}^{\vartriangle}(fS_{-1}g)+fS_{-1}g
\end{equation}
for any functions $f(n,[u])$ and $g(n,[u])$. Here the operator $S_{-1}$ is the inverse (or adjoint) of $S$; it is also called the backward shift operator. Hence, the difference conservation law becomes 
\begin{equation}
\operatorname{Div}^{\vartriangle}P=\operatorname{Div}^{\vartriangle}R+Q^{\alpha}F_{\alpha}.
\end{equation}
Similarly as in the differential case, the $p$-tuple $R$  also contributes to a trivial conservation law and if we replace $P$ by $P-R$, the conservation law is written in the {\bf characteristic form}
\begin{equation}
\operatorname{Div}^{\vartriangle}P=Q^{\alpha}F_{\alpha},
\end{equation}
where the {\bf characteristic} $Q^{\alpha}$ is given by
\begin{equation}
Q^{\alpha}(n,[u])=\sum_{J}S_{-J}B_J^{\alpha}(n,[u]).
\end{equation}
The operators $S_{-J}$ and $S_J$ are adjoint to each other with respect to the $\ell_2$ inner product. 
 [Note. In \cite{Mikhailov2011}, characteristics of conservation laws were called cosymmetries.]



Consider a difference variational problem (or an action)
\begin{equation}\label{dl1}
\mathscr{L}[u]=\sum_nL_n=\sum_nL(n,[u]).
\end{equation}
Its variation with respect to $u$ amounts to the discrete Euler-Lagrange equations 
\begin{equation}
\bold{E}_{\alpha}^{\vartriangle}(L_n)=0,
\end{equation}
where the {\bf discrete Euler operator} $\bold{E}^{\vartriangle}$ is defined by
\begin{equation}
\bold{E}_{\alpha}^{\vartriangle}=\sum_JS_{-J}\frac{\partial}{\partial u_J^{\alpha}}.
\end{equation}
\begin{rem}
A function $L(n,[u])$ contributes to a null Lagrangian, i.e. $\bold{E}_{\alpha}^{\vartriangle}\equiv0$, if and only if $L_n$ is of total divergence form, namely $L_n=\operatorname{Div}^{\vartriangle}P$. It is also worth mentioning that variational symmetries are still symmetries of the underlying Euler-Lagrange equations (see, e.g. \cite{Peng2014}); this is not true when discrete steps are not fixed (see, e.g. \cite{Dorod2010,Dorod2001}).
\end{rem}
The action $\mathscr{L}[u]$ is invariant with respect to a local group of (continuous) transformations if 
\begin{equation}
\sum_nL(n,[\widetilde{u}])=\sum_nL(n,[u]).
\end{equation}
 The group is then called a group of (divergence) {\bf variational symmetries} for the discrete action $\mathscr{L}[u]$. Infinitesimally, it is equivalent to the existence of a $p$-tuple $R$ such that
\begin{equation}\label{scd}
\bold{pr}X(L_n)=\operatorname{Div}^{\vartriangle}R,
\end{equation}
where $X$ is the infinitesimal generator of a symmetry group.
 A {\bf discrete version of
Noether's theorem} exists (see, e.g.
\cite{Hydon2014,Maeda1981,Dorod2001,Mansfield2006,Peng2013}); it also relates symmetries for
difference variational problems and conservation laws of the
associated Euler-Lagrange equations via their characteristics. Namely, there exists a $p$-tuple $P$ such that the characteristic $Q^{\alpha}$ of a variational symmetry satisfies
\begin{equation}
\operatorname{Div}^{\vartriangle}P=Q^{\alpha}\bold{E}_{\alpha}^{\vartriangle}(L_n).
\end{equation}

For a difference system (\ref{fde}), the linearized symmetry condition is equivalent to the following condition (e.g. \cite{Peng2014})
\begin{equation}
0=\bold{D}_F^{\vartriangle}(Q):=\frac{\operatorname{d}}{\operatorname{d}\!\varepsilon}\Big|_{\varepsilon=0}F(n,[u+\varepsilon Q(n,[u])]),
\end{equation}
where $Q$ is the characteristic and $\bold{D}^{\vartriangle}$ is the {\bf Fr\'echet derivative} in the difference case. Its adjoint is then
\begin{equation}
(\bold{D}_F^{\vartriangle})^{\ast}_{\alpha\beta}=\sum_{J}S_{-J}\cdot \frac{\partial F_{\beta}}{\partial u_J^{\alpha}}.
\end{equation}
\begin{lem}[\cite{Peng2014}]\label{lempeng}
For two $r$-tuples $A(n,[u])$ and $B(n,[u])$, the following identity holds
\begin{equation}
\bold{E}^{\vartriangle}(A\cdot B)=(\bold{D}^{\vartriangle}_A)^{\ast}(B)+(\bold{D}^{\vartriangle}_B)^{\ast}(A).
\end{equation}
\end{lem}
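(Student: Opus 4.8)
The plan is to verify the identity componentwise, working directly from the two definitions recorded just above: the discrete Euler operator $\bold{E}_{\alpha}^{\vartriangle}=\sum_J S_{-J}\,\partial/\partial u_J^{\alpha}$ and the adjoint Fr\'echet derivative $(\bold{D}^{\vartriangle}_F)^{\ast}_{\alpha\beta}=\sum_{J}S_{-J}\cdot \partial F_{\beta}/\partial u_J^{\alpha}$. Writing the two $r$-tuples in components, $A\cdot B=\sum_{\beta}A_{\beta}B_{\beta}$, the first step is to apply the ordinary Leibniz rule to the partial derivatives $\partial/\partial u_J^{\alpha}$ appearing inside $\bold{E}_{\alpha}^{\vartriangle}$, which splits
\[
\frac{\partial (A\cdot B)}{\partial u_J^{\alpha}}=\sum_{\beta}\left(\frac{\partial A_{\beta}}{\partial u_J^{\alpha}}\,B_{\beta}+A_{\beta}\,\frac{\partial B_{\beta}}{\partial u_J^{\alpha}}\right).
\]

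The second step is to substitute this into $\bold{E}_{\alpha}^{\vartriangle}$ and distribute the linear operator $\sum_J S_{-J}$ over the two summands, obtaining
\[
\bold{E}_{\alpha}^{\vartriangle}(A\cdot B)=\sum_{\beta}\sum_J S_{-J}\!\left(\frac{\partial A_{\beta}}{\partial u_J^{\alpha}}\,B_{\beta}\right)+\sum_{\beta}\sum_J S_{-J}\!\left(A_{\beta}\,\frac{\partial B_{\beta}}{\partial u_J^{\alpha}}\right).
\]
By the component formula for the adjoint Fr\'echet derivative, the first family of terms is precisely $[(\bold{D}^{\vartriangle}_A)^{\ast}(B)]_{\alpha}$, and the second is $[(\bold{D}^{\vartriangle}_B)^{\ast}(A)]_{\alpha}$, with the roles of $A$ and $B$ interchanged. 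Reassembling the $\alpha$-components then yields the claimed identity. This is the exact discrete counterpart of Olver's product rule $\bold{E}(Q\cdot F)=\bold{D}_Q^{\ast}(F)+\bold{D}_F^{\ast}(Q)$ invoked in Remark \ref{ccld}, so the structure of the argument mirrors the continuous one.

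I do not expect a genuine obstacle here; the content is careful bookkeeping rather than a hard estimate, and the main points requiring attention are organisational. First, the multi-index $J$ now ranges over $\mathbb{Z}^p$ rather than over non-negative tuples, but since each $A_{\beta}$ and $B_{\beta}$ depends on only finitely many shifts of $u$, every sum over $J$ is finite and the rearrangement of terms is legitimate. Second, one should confirm that $(\bold{D}^{\vartriangle}_F)^{\ast}$ is indeed obtained from $\bold{D}^{\vartriangle}_F=\sum_{\beta,J}(\partial F_{\beta}/\partial u_J^{\alpha})\,S_J$ by transposing the component indices and replacing each shift $S_J$ with its $\ell_2$-adjoint $S_{-J}$; this is exactly where the summation-by-parts formula (\ref{sbps}) enters implicitly. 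With these two points settled, the two displays above match term by term and the proof closes immediately.
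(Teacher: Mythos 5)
Your proof is correct and follows exactly the route the paper itself takes: the lemma is only cited from \cite{Peng2014} here, but the paper's proof of its differential-difference analogue (Theorem \ref{leib}) is precisely this one-line Leibniz-rule expansion of $\bold{E}^{\vartriangle}_{\alpha}(A\cdot B)$ followed by identifying the two families of terms with the component formula for the adjoint Fr\'echet derivative. The only cosmetic remark is that since the paper \emph{defines} $(\bold{D}^{\vartriangle}_F)^{\ast}$ by the displayed component formula, your second ``point requiring attention'' (deriving it from $S_J$ via summation by parts) is not actually needed for the identity, which is purely formal algebra given the definitions.
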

This lemma leads to a difference counterpart to the conclusion in Remark \ref{ccld}.
\begin{thm}
For a system of difference equations $\{F_{\alpha}(n,[u])=0\}$ of the Kovalevskaya form, characteristics $Q$ of its conservation laws lie in the kernel of the adjoint of the corresponding Fr\'echet derivative $\bold{D}_F^{\vartriangle}$ restricted to solution sections, namely 
\begin{equation}
(\bold{D}^{\vartriangle}_F)^{\ast}(Q)=0 \text{ whenever } \{F_{\alpha}(n,[u])=0\}.
\end{equation}
\end{thm}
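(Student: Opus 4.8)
The plan is to mirror the differential argument recalled in Remark~\ref{ccld}, now using the difference product rule for the Euler operator established in Lemma~\ref{lempeng}. The starting point is the characteristic form of a conservation law: since the system is of the Kovalevskaya form, any difference divergence $\operatorname{Div}^{\vartriangle}P$ that vanishes on all solutions can be rewritten, after summation by parts via \eqref{sbps}, as $\operatorname{Div}^{\vartriangle}P=Q^{\alpha}F_{\alpha}$ for the characteristic $Q$. Thus $Q^{\alpha}F_{\alpha}$ is itself a total difference divergence, hence a null Lagrangian, and so $\bold{E}^{\vartriangle}(Q\cdot F)\equiv 0$ identically in $[u]$, not merely on solutions.

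Next I would apply Lemma~\ref{lempeng} with $A=Q$ and $B=F$ to split this null Euler expression as
\begin{equation}
0=\bold{E}^{\vartriangle}(Q\cdot F)=(\bold{D}^{\vartriangle}_Q)^{\ast}(F)+(\bold{D}^{\vartriangle}_F)^{\ast}(Q),
\end{equation}
an identity valid for all $[u]$. The remaining task is to show that the first term vanishes once we restrict to the solution section $\{F_{\alpha}(n,[u])=0\}$. Writing out the adjoint componentwise,
\begin{equation}
(\bold{D}^{\vartriangle}_Q)^{\ast}(F)_{\alpha}=\sum_{\beta,J}S_{-J}\!\left(\frac{\partial Q^{\beta}}{\partial u_J^{\alpha}}\,F_{\beta}\right),
\end{equation}
every summand carries an explicit factor $F_{\beta}$. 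Since the shift map is an algebra homomorphism, $S_{-J}\bigl(\tfrac{\partial Q^{\beta}}{\partial u_J^{\alpha}}F_{\beta}\bigr)=S_{-J}\bigl(\tfrac{\partial Q^{\beta}}{\partial u_J^{\alpha}}\bigr)\,S_{-J}F_{\beta}$, and on the solution section every shift $S_{-J}F_{\beta}$ again vanishes because $F_{\beta}$ is zero at every lattice point. Hence $(\bold{D}^{\vartriangle}_Q)^{\ast}(F)=0$ whenever $\{F_{\alpha}=0\}$, and the displayed identity then forces $(\bold{D}^{\vartriangle}_F)^{\ast}(Q)=0$ on solutions, as claimed.

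The step I expect to require the most care is precisely this vanishing of $(\bold{D}^{\vartriangle}_Q)^{\ast}(F)$ on the solution section. In the differential case Olver invokes total nondegeneracy to argue that total derivatives of expressions vanishing on solutions still vanish there; in the difference setting the corresponding point is cleaner, since a backward shift of the identically-zero function $F_{\beta}$ is again zero, so no nondegeneracy hypothesis beyond the Kovalevskaya form, which is used only to secure the characteristic form of the conservation law, is needed. I would make explicit what \emph{restricted to solution sections} means, namely that $F_{\alpha}$ together with all its shifts $S_JF_{\alpha}$ vanish, and note that for a system defined on all of $\mathbb{Z}^p$ the vanishing of $F_{\alpha}(n,[u])$ for every $n$ already entails the vanishing of every shift.
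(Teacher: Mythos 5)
Your proposal is correct and follows essentially the same route as the paper: apply Lemma~\ref{lempeng} with $A=Q$ and $B=F$, use that $Q\cdot F$ is a null Lagrangian, and observe that $(\bold{D}^{\vartriangle}_Q)^{\ast}(F)$ vanishes on solutions. The only difference is that you spell out the details the paper leaves implicit (the componentwise expansion of $(\bold{D}^{\vartriangle}_Q)^{\ast}$ and the fact that backward shifts of $F_{\beta}$ vanish on the solution section), which is a faithful elaboration rather than a different argument.
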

\begin{proof}
Choose $A=Q$ and $B=F$ in Lemma \ref{lempeng}. The theorem is then verified since $Q\cdot F$ is a null Lagrangian and $(\bold{D}^{\vartriangle}_Q)^{\ast}(F)=0$ on solutions of the difference system.
\end{proof}
\begin{rem}
The Fr\'echet derivative of an Euler-Lagrange system $\bold{E}^{\vartriangle}(L_n)=0$ is always self-adjoint (e.g. \cite{Hydon2004,Peng2014,Kuper1985}), namely 
\begin{equation}
(\bold{D}^{\vartriangle}_{\bold{E}^{\vartriangle}(L_n)})^{\ast}=\bold{D}^{\vartriangle}_{\bold{E}^{\vartriangle}(L_n)}.
\end{equation}
 This hence implies the {\bf discrete version of Noether's theorem} through identifying characteristics of variational symmetries (hence symmetries of the Euler-Lagrange equations) and conservation laws. 
\end{rem}

\begin{exm}
The difference equation
\begin{equation}
u_2=\frac{u_1^2}{u}
\end{equation}
admits the following characteristics of symmetries (e.g. \cite{Hydon2014})
\begin{equation}
Q_1=u_0,\quad Q_2=nu_0,\quad Q_3=u_n\ln|u|.
\end{equation}
In \cite{Peng2015}, the equation was solved by obtaining all of its invariant first integrals with respect to the first two generators. It was also noted therein that through a change of variables $v_0=\ln u_0$, the resulting equation is governed by a Lagrangian
\begin{equation}
L_n=vv_1-v^2.
\end{equation}
The Euler-Lagrange equation is $v_1-2v+v_{-1}=0$.
The first two generators become variational symmetries of $L_n$:
\begin{equation}
X_1=\partial_{v},\quad X_2=n\partial_{v}.
\end{equation}
Therefore, we obtain two conservation laws in the characteristic form
\begin{equation}
(S-\operatorname{id})P_1=1(v_1-2v+v_{-1}),\quad (S-\operatorname{id})P_2=n(v_1-2v+v_{-1}).
\end{equation}
By using the homotopy method (see, e.g. \cite{Hydon2004}) or by inspection, we obtain the functions $P_1$ and $P_2$:
\begin{equation}
P_1=v-v_{-1},\quad P_2=(n-1)v-nv_{-1}.
\end{equation}
This gives the general solution $v=nc_1+c_2$ and hence $u=\exp(nc_1+c_2)$.
\end{exm}

In fact, the above example is related with the following theorem. 
\begin{thm}\label{phydon}
Assume that, by some point transformation, a system of ordinary difference equations can be put into Euler-Lagrange form which admits (Lie point) variational symmetries, then we can conclude that the original system has first integrals that are invariant under Lie point transformations which transform to these variational symmetries.
\end{thm}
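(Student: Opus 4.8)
The plan is to combine the discrete Noether's theorem recalled above with the naturality of point transformations, exploiting crucially that the equations are \emph{ordinary}. I would work first in the transformed coordinates, call them $v$, in which the system is an Euler--Lagrange system $\bold{E}^{\vartriangle}(L_n)=0$ for some Lagrangian $L_n=L(n,[v])$. By hypothesis this system admits a Lie point variational symmetry $X$ with characteristic $Q$, so the discrete Noether theorem immediately furnishes a conservation law in characteristic form, $\operatorname{Div}^{\vartriangle}P=Q^{\alpha}\bold{E}^{\vartriangle}_{\alpha}(L_n)$, with an explicit $p$-tuple $P=P(n,[v])$.

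Second, I would use that the system is ordinary, i.e. $p=1$, so that $\operatorname{Div}^{\vartriangle}P=(S-\operatorname{id})P$. On solutions the right-hand side vanishes, whence $SP=P$; thus $P$ is constant along every solution sequence and is therefore a \emph{first integral} of the Euler--Lagrange system. This is precisely the step that would fail for $p>1$, where a conservation law encodes no single invariant, and it explains why the hypothesis is restricted to ordinary difference equations.

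Third, I would transport everything back through the point transformation $\Phi\colon u\mapsto v=f(n,u)$, whose prolongation acts shift-by-shift via $v_{n+j}=f(n+j,u_{n+j})$. Since $\Phi$ is invertible and, by the assumption that the original system is carried into Euler--Lagrange form, it maps solutions of the original system bijectively onto solutions of the Euler--Lagrange system, the pullback $\Phi^{*}P$ is constant along solutions of the original system and hence is a first integral of it. Moreover point transformations send Lie point symmetries to Lie point symmetries, so $\Phi^{*}X$ is a Lie point symmetry of the original system; these are exactly the Lie point transformations ``which transform to these variational symmetries'' in the statement.

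The remaining and main obstacle is to establish the \emph{invariance}: that the Noether first integral is annihilated by the prolongation of its own generating symmetry, $\bold{pr}X(P)=0$ on solutions, for then the naturality of prolongation under $\Phi$ gives $\bold{pr}(\Phi^{*}X)(\Phi^{*}P)=\Phi^{*}(\bold{pr}X(P))=0$ and the conclusion follows. To prove $\bold{pr}X(P)=0$ I would combine the discrete first-variation identity $\bold{pr}X(L_n)=Q^{\alpha}\bold{E}^{\vartriangle}_{\alpha}(L_n)+(S-\operatorname{id})W$ (obtained by summation by parts) with the variational-symmetry condition $\bold{pr}X(L_n)=(S-\operatorname{id})R$, which together identify $P=R-W$ up to a trivial integral; applying $\bold{pr}X$ and using that $\bold{pr}X$ commutes with $S$ reduces $\bold{pr}X(P)$ to an antisymmetric self-pairing of $X$ in the variational structure underlying $\bold{E}^{\vartriangle}(L_n)$ (morally the discrete analogue of a Noether invariant generating the flow of its own symmetry, consistent with the self-adjointness of $\bold{D}^{\vartriangle}_{\bold{E}^{\vartriangle}(L_n)}$ noted above), which vanishes by antisymmetry. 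I expect the bookkeeping of the boundary term $W$ under $\bold{pr}X$ to be the delicate part; the example $L_n=vv_1-v^2$, where $\bold{pr}X_1(P_1)=0$ and $\bold{pr}X_2(P_2)=0$ while $\bold{pr}X_2(P_1)=1\neq0$, is a useful sanity check that invariance holds for the \emph{generating} symmetry but not for the others.
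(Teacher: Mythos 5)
Your overall route is the same as the paper's: apply the discrete Noether theorem in the transformed coordinates to get a conservation law in characteristic form, use that for an ordinary ($p=1$) system the vanishing of $(S-\operatorname{id})P$ on solutions makes $P$ a genuine first integral, and then pull both the first integral and the symmetry back through the invertible point transformation, relying on the naturality of prolongation so that invariance survives the change of variables. All of this matches the paper's proof step for step; the paper additionally writes out the transported linearized symmetry condition to confirm that the pulled-back generator is still a Lie point symmetry of the original system, which you assert rather than verify, but that part is routine.

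The one genuine gap is exactly where you flag it: the invariance $\bold{pr}\widetilde{X}(\widetilde{P})=0$ of the Noether first integral under its own generating symmetry. This is a real theorem, not an identity that falls out of the definitions, and the paper does not prove it within this argument either --- it invokes Theorem 3.5 of \cite{Peng2014}. Your proposed substitute (identify $P=R-W$ from the two expressions for $\bold{pr}X(L_n)$, then claim $\bold{pr}X(P)$ reduces to an ``antisymmetric self-pairing'' that vanishes) is a heuristic, not a proof: no such pairing has been defined in the discrete setting, $P$ is only determined up to trivial integrals, and the computation of $\bold{pr}X(R)$ and $\bold{pr}X(W)$ that you defer is precisely the content of the claim. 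Your sanity check with $L_n=vv_1-v^2$ (where $\bold{pr}X_1(P_1)=0$, $\bold{pr}X_2(P_2)=0$, but $\bold{pr}X_2(P_1)\neq 0$) is consistent with the statement but does not establish the general mechanism. To close the argument you should either cite the invariance result from \cite{Peng2014}, as the paper does, or supply a complete proof of $\bold{pr}\widetilde{X}(\widetilde{P})=0$ on solutions; everything else in your proposal then goes through as written.
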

\begin{proof}

Consider a system of ordinary difference equations $\{F_{\alpha}(n,[u])=0\}$, where $n\in\mathbb{Z}$. Let us assume that under an (invertible) point transformation $T:\widetilde{u}^{\alpha}_n=T^{\alpha}(n,u_n)$, the new system is
\begin{equation}\label{nds}
0=\widetilde{F}_{\alpha}(n,[\widetilde{u}])=F_{\alpha}(n,[T^{-1}(n,\widetilde{u}_n)]).
\end{equation}
Assume under the new coordinates $(n,[\widetilde{u}])$, the system is governed by a Lagrangian $\widetilde{L}_n=L(n,[\widetilde{u}])$, and there exists a (Lie point) variational symmetry $\widetilde{X}= Q^{\alpha}(n,\widetilde{u})\frac{\partial}{\partial \widetilde{u}^{\alpha}}$ such that
\begin{equation}
\bold{pr}\widetilde{X}(\widetilde{L}_n)=(S-\operatorname{id})P_0.
\end{equation}
Now $\widetilde{X}$ is also a symmetry generator for the new system (\ref{nds}). According to \cite{Peng2014}, there exists some function $\widetilde{P}=P(n,[\widetilde{u}])$ such that the following equality holds:
\begin{equation}
\bold{pr}\widetilde{X}(\widetilde{L}_n)-(S-\operatorname{id})P_0=Q^{\alpha}(n,\widetilde{u}) \widetilde{F}_{\alpha}+(S-\operatorname{id})\widetilde{P}.
\end{equation}
Namely $\widetilde{P}$ is a first integral of the new system (\ref{nds}) through Noether's theorem using the symmetry $\widetilde{X}$. Hence, it is invariant under $\widetilde{X}$ (see Theorem 3.5 of \cite{Peng2014}):
\begin{equation}
\bold{pr}\widetilde{X}(\widetilde{P})=0.
\end{equation}

Let us summarise here that, for the new system (\ref{nds}), there exists a symmetry generator $\widetilde{X}$ and a first integral $\widetilde{P}$ such that 
\begin{equation}\label{eqs}
\left\{
\begin{array}{l}
Q^{\alpha}(n,\widetilde{u})\widetilde{F}_{\alpha}+(S-\operatorname{id})\widetilde{P}=0,\vspace{0.25cm}\\
\bold{pr}\widetilde{X}(\widetilde{P})=0.
\end{array}
\right.
\end{equation}
By using the point transformation $T^{-1}$, the generator becomes
\begin{equation}
X=Q^{\alpha}(n,T(n,u_n))\frac{\partial u_n^{\beta}}{\partial \widetilde{u}_n^{\alpha}}\frac{\partial}{\partial u_n^{\beta}}.
\end{equation}
It is clearly a symmetry generator for the original system since if we write the linearized symmetry condition as
\begin{equation}
\bold{pr}\widetilde{X}(\widetilde{F}_{\alpha})=\sum_{\beta,k\in\mathbb{Z}}A^{\beta}_{\alpha,k}(n,[\widetilde{u}])(S^k\widetilde{F}_{\beta}),
\end{equation}
using $T^{-1}$, it still holds in the original coordinates as follows:
\begin{equation}
\bold{pr}X(F_{\alpha})=\sum_{\beta,k\in\mathbb{Z}}A^{\beta}_{\alpha,k}(n,[T(n,u_n)])(S^kF_{\beta}),
\end{equation}
These equalities in (\ref{eqs}) amount to 
\begin{equation}\label{eq999}
\left\{
\begin{array}{l}
Q^{\alpha}(n,T(n,u_n))F_{\alpha}+(S-\operatorname{id})P(n,[T(n,u_n)])=0,\vspace{0.25cm}\\
\bold{pr}X(P(n,[T(n,u_n)]))=0.
\end{array}
\right.
\end{equation}
 Therefore, we conclude that $P(n,[T(n,u_n)])$ is a first integral for the original system and it is invariant under the Lie point symmetry generator $X$, which can be transformed to a variational symmetry. 
\end{proof}

\begin{rem}
The first equality in (\ref{eq999}) can be rewritten as 
\begin{equation}
\begin{aligned}
0&=Q^{\alpha}(n,T(n,u_n))\frac{\partial u_n^{\beta}}{\partial \widetilde{u}_n^{\alpha}}\frac{\partial \widetilde{u}_n^{\alpha}}{\partial u_n^{\beta}}F_{\alpha}+(S-\operatorname{id})P(n,[T(n,u_n)])\\
&=\left(Q^{\alpha}(n,T(n,u_n))\frac{\partial u_n^{\beta}}{\partial \widetilde{u}_n^{\alpha}}\right)\left(\frac{\partial \widetilde{u}_n^{\gamma}}{\partial u_n^{\beta}}F_{\gamma}\right)+(S-\operatorname{id})P(n,[T(n,u_n)]),
\end{aligned}
\end{equation}
where 
\begin{equation}
Q^{\alpha}(n,T(n,u_n))\frac{\partial u_n^{\beta}}{\partial \widetilde{u}_n^{\alpha}}
\end{equation}
 is the characteristic for $X$. It is obvious that the difference system 
 \begin{equation}
 \left\{\frac{\partial \widetilde{u}_n^{\gamma}}{\partial u_n^{\beta}}F_{\gamma}=0\right\}
 \end{equation}
  is equivalent to the original system $\{F_{\alpha}=0\}$ since the transformation is invertible. This hence provides the relationship between characteristics of symmetries and first integrals.
\end{rem}

\subsection{Adjoint system and conservation laws}

For both differential and difference systems, we summarised above that for systems of the Kovalevskaya form, characteristics of symmetries and conservation laws, respectively, lie in the kernels of the Fr\'echet derivative and its adjoint. The {\bf self-adjointness} or {\bf formal Lagrangian} method by Ibragimov \cite{Ibragimov2007} (see also \cite{Anco1996,Anco1997}) provides a convenient approach for constructing conservation laws from symmetries; in some sense, it is a generalisation of self-adjointness of the Fr\'echet derivative and can be particularly interesting if the system at hand is not variational. It is interesting to notice that formal Lagrangians have a longer history; in literature they appeared as associated (or related) variational principles of differential equations (e.g. \cite{Bateman1931,AH1975}), where researchers paid much attention to inverse problems. Extension of this method to difference systems has also been investigated \cite{Peng2015}. Section 4 will be mainly based on this method, which turns out to be more convenient to the study of some integrable DDEs for the reason that their symmetries are often already well known.

Here, we summarise the algorithm briefly, no matter if the system $\{F_{\alpha}(\cdot,[u])=0\}$ is differential or difference, where $\cdot$ is either $x$ or $n$ . 
\begin{enumerate}
\item Assume the system $\{F_{\alpha}(\cdot,[u])=0\}$ admits a group of symmetries with infinitesimal generators $X$.
\item Introduce new variables $v$ and define a formal Lagrangian $L=v^{\alpha}F_{\alpha}$. At the moment, all symmetries $X$ of the system can be extended to variational symmetries $Y$ of $L$ in the coordinates $(\cdot,u,v)$ (see, \cite{Ibragimov2007} for a differential version and \cite{Peng2015} for a difference version). In Section \ref{section4}, we will show that one can obtain conservation laws without calculating extended variational symmetries $Y$ explicitly (see Remark \ref{withouty}). 
\item Calculate conservation laws $P(\cdot,[u],[v])$ through Noether's theorem.
\item Testify the self-adjointness\footnote{Self-adjointness of an equation is not the same as self-adjointness of the associated Fr\'echet derivative.} of the original system $\{F_{\alpha}=0\}$. We may move to the next step if the original system is self-adjoint, namely there exists a substitution $v=f(\cdot,[u])$ such that the adjoint system 
\begin{equation}
\left\{\frac{\delta L}{\delta u^{\alpha}}=0\right\}
\end{equation}
 holds for all solutions of the original system. Here $\delta u$ is the variation with respect to $u$. 
 \item Finally, one obtains conservation laws $P(\cdot,[u],[f(\cdot,[u])])$ for the original system $\{F_{\alpha}=0\}$.
 \end{enumerate}


It is worth mentioning the fact that for both differential and difference systems, the Lie bracket of two evolutionary symmetry generators is still an evolutionary symmetry generator (see, e.g. \cite{Olver1993} for a differential version and \cite{Peng2015} for a difference version). Namely, for two infinitesimal generators $X_1=Q_1^{\alpha}(\cdot,[u])\partial_{u^{\alpha}}$ and $X_2=Q_2^{\alpha}(\cdot,[u])\partial_{u^{\alpha}}$ of either a differential or difference system, their Lie bracket $[X_1,X_2]$ is still a symmetry generator whose characteristic is
\begin{equation}
Q(\cdot,[u])=\bold{pr}X_1(Q_2)-\bold{pr}X_2(Q_1).
\end{equation}
For a system with infinitely many (evolutionary) symmetries, we are then able to obtain infinitely many (divergence) variational symmetries of the evolutionary form for the formal Lagrangian; their characteristics are then characteristics of conservation laws for the combined system of the original system and its adjoint system. They contribute to conservation laws of the original system through, for instance, the homotopy method, if the system is self-adjoint. Concrete examples can be found in, for instance\cite{Ibragimov2007,Peng2015}. We illustrate the self-adjointness of the discrete KdV equation as a simple example here.

\begin{exm}[Self-adjointness of the discrete KdV equation]
The discrete KdV equation (H1 equation from the ABS list, see e.g. \cite{ABS2003,Hirota1977,Nijhoff1995}) reads 
\begin{equation}\label{dkdv00}
(u_{0,0}-u_{1,1})(u_{1,0}-u_{0,1})-a(m)+b(n)=0.
\end{equation}
Here we write $u_{0,0}=u(m,n)$ at the point $(m,n)\in\mathbb{Z}^2$. Its shifts are $u_{i,j}=S_m^iS_n^ju_{0,0}$ with $S_m$ and $S_n$ the unit forward shifts along the $m$ and $n$ directions, respectively. This system admits a variational structure (e.g. \cite{Capel1991,Lobb2009}), if we rewrite the equation as
\begin{equation}\label{dkdv}
u_{1,0}-u_{0,1}-\frac{a(m)-b(n)}{u_{0,0}-u_{1,1}}=0.
\end{equation}
The difference Euler-Lagrange equation with respect to the Lagrangian 
\begin{equation}
L(m,n,[u])=(u_{1,0}-u_{0,1})u_{0,0}-(a(m)-b(n))\ln(u_{1,0}-u_{0,1})
\end{equation}
contains two copies of the equation (\ref{dkdv}), namely
\begin{equation}
\begin{aligned}
0&=\bold{E}^{\vartriangle}(L(m,n,[u]))\\
&=u_{-1,0}-u_{0,1}-\frac{a(m-1)-b(n)}{u_{0,0}-u_{-1,1}}-\left(u_{0,-1}-u_{1,0}-\frac{a(m)-b(n-1)}{u_{1,-1}-u_{0,0}}\right).
\end{aligned}
\end{equation}


Alternatively, we define a (discrete) formal Lagrangian 
\begin{equation}
L=v_{0,0}\left(u_{1,0}-u_{0,1}-\frac{a(m)-b(n)}{u_{0,0}-u_{1,1}}\right),
\end{equation}
and the adjoint equation is 
\begin{equation}
\begin{aligned}
0&=\bold{E}^{\vartriangle}_{u}(L)\\
&=v_{-1,0}-v_{0,-1}+\frac{a(m)-b(n)}{(u_{0,0}-u_{1,1})^2}v_{0,0}-\frac{a(m-1)-b(n-1)}{(u_{-1,-1}-u_{0,0})^2}v_{-1,-1}.
\end{aligned}
\end{equation}
This becomes two copies of the discrete KdV equation (\ref{dkdv}) via the substitution 
\begin{equation}\label{dkdvsub}
v_{0,0}=(-1)^{m+n}(u_{1,1}-u_{0,0}).
\end{equation}
Now the adjoint equation becomes
\begin{equation}
(-1)^{m+n}\left[u_{1,0}-u_{0,1}-\frac{a(m)-b(n)}{u_{0,0}-u_{1,1}}-\left(u_{0,-1}-u_{-1,0}-\frac{a(m-1)-b(n-1)}{u_{-1,-1}-u_{0,0}}\right)\right]=0,
\end{equation}
which holds for all solutions of the discrete KdV equation. Hence the discrete KdV equation is self-adjoint.

\end{exm}

\section{Noether's theorem for differential-difference equations}
The theory of variational principle (least action) for differential-difference equations (DDEs) was introduced in \cite{Kuper1985} for the most general case, namely for a Lagrangian defined on dependent variables $u\in\mathbb{R}^q$ and finitely many of their derivatives and shifts, with multidimensional differential and difference variables $x\in\mathbb{R}^{p_1}$ and $n\in\mathbb{Z}^{p_2}$ playing as independent variables.  The derivations and applications (e.g. as an integrability criterion or for conducting reductions) of symmetries and conservation laws for DDEs have also been deeply investigated during the last few decades, see for instance \cite{Jiang1998,Li2008,Shen2010,Cherdantsev1995,Yamilov2006,Khanizadeh2014,Levi1993,Levi2010,Quispel1992,Ste-Marie2009} for symmetry analysis, \cite{Goktas1997,Hickman2003,Zhang2003} for derivation of conservation laws and \cite{Cherdantsev1995,Yamilov2006,Khanizadeh2014,Fu2013,Gari2016,Hu1998,Khanizadeh2013,Nijhoff1990} for integrability method.

To avoid the introduction of new notations, we will (often) use the same notations as the differential case unless otherwise explained. It is not difficult to specify the situation from the context.
For the dependent variables, we define derivatives and shifts simultaneously and we adopt the notation 
\begin{equation}
u_{J_1;J_2}=D_{J_1}S_{J_2}u=S_{J_2}D_{J_1}u.
\end{equation} 
Namely, the first subindex indicates derivatives while the second subindex indicates shifts. We still use $[u]$ to denote $u$ and finitely many of its derivatives and shifts. 
Here the {\bf total derivative in the differential-difference sense} is defined as
\begin{equation}
D_i=\partial_{x^i}+\frac{\partial u^{\alpha}}{\partial x^i}\partial_{u^{\alpha}}+\cdots+\sum_{\alpha,J_1,J_2}u^{\alpha}_{J_1+\bold{1}_i;J_2}\partial_{u^{\alpha}_{J_1;J_2}}.
\end{equation}
\begin{defn}
A system of DDEs 
\begin{equation}
\mathcal{A}=\{F_{\alpha}(x,n,[u])=0\}
\end{equation}
is of the {\bf Kovalevskaya form} if it can be rewritten of either the differential Kovalevskaya form
\begin{equation}\label{dKf1}
\frac{\partial^k u^{\alpha}}{\partial (x^1)^k}=f^{\alpha}\left(x,n,[u]_{x^1},\left[\frac{\partial^1 u}{\partial (x^1)^1}\right]_{x^1},\left[\frac{\partial^2 u}{\partial (x^1)^2}\right]_{x^1},\ldots,\left[\frac{\partial^{k-1} u}{\partial (x^1)^{k-1}}\right]_{x^1}\right),
\end{equation}
 or the difference Kovalevskaya form
\begin{equation}\label{dKf2}
S_{n^1}^{k}u^{\alpha}=h^{\alpha}\left(x,n,[u]_{n^1},\left[S_{n^1}^{1}u\right]_{n^1},\left[S_{n^1}^{2}u\right]_{n^1},\ldots,\left[S_{n^1}^{k-1}u\right]_{n^1}\right).
\end{equation}
Here $[\cdot]_{x^1}$ (respectively $[\cdot]_{n^1}$) denotes $\cdot$ itself and finitely many of its derivatives and shifts with respect to all independent variables but $x^1$ (respectively $n^1$).
  \end{defn}
For instance, both $u'=f(u_n,u_{n+1},u_{n+2})$ and $u_{n+1}=f(u',u'')$ are of the Kovalevskaya form where $t\in\mathbb{R}$ and $n\in\mathbb{Z}$. Here we use the notations 
\begin{equation}
u'=\frac{\operatorname{d}\! u(t,n)}{\operatorname{d}\! t},\quad u''=\frac{\operatorname{d}\!^2 u(t,n)}{\operatorname{d}\! t^2}.
\end{equation}
We call a system of DDEs of the {\bf bi-Kovalevskaya form} if it is of the Kovalevskaya form from both differential sense \eqref{dKf1} and difference sense \eqref{dKf2}.

Consider a generalised vector field $X=\xi^i(x,n,[u])\partial_{x^i}+\phi^{\alpha}(x,n,[u])\partial_{u^{\alpha}}$, which generates a one-parameter group of transformations $\widetilde{x}=\widetilde{x}(\varepsilon;x,n,[u])$ and $\widetilde{u}=\widetilde{u}(\varepsilon;x,n,[u])$ and assume the identity element is $\varepsilon=e$. Its {\bf prolongation in the differential-difference sense} is 
\begin{equation}\label{pdde}
\bold{pr}X=\xi^i\partial_{x^i}+\phi^{\alpha}\partial_{u^{\alpha}}+\cdots+\phi^{\alpha}_{J_1;J_2}\partial_{u^{\alpha}_{J_1;J_2}}+\cdots,
\end{equation} 
where
\begin{equation}
\xi^i(x,n,[u]):=\frac{\operatorname{d}\!\widetilde{x}^i}{\operatorname{d}\!\varepsilon}\Big|_{\varepsilon=e}
, \phi^{\alpha}(x,n,[u]):=\frac{\operatorname{d}\! \widetilde{u}^{\alpha}}{\operatorname{d}\!\varepsilon}\Big|_{\varepsilon=e} \quad \text{and} \quad \phi^{\alpha}_{J_1;J_2}(x,n,[u]):=\frac{\operatorname{d}\! \widetilde{u_{J_1;J_2}^{\alpha}}}{\operatorname{d}\!\varepsilon}\Big|_{\varepsilon=e}.
\end{equation}

Next we are going to investigate recursion relations for the coefficient functions of  $\bold{pr}X$. We will  consider first vector fields for Lie point transformations, i.e. $X=\xi^i(x,n)\partial_{x^i}+\phi^{\alpha}(x,n)\partial_{u^{\alpha}}$, and then generalised vector fields.
 It has been pointed out that simple shift is possibly not enough for the recursion relation for difference (or discrete) directions \cite{Levi2010}; in fact, the reason is due to the lack of general commutativity, namely $\widetilde{D}S\neq S\widetilde{D}$ where $\widetilde{D}$ is the total derivative with respect to new variables $\widetilde{x}$. Nevertheless, the prolongation formulae we end up with are still different from the one obtained in \cite{Levi2010}.
We will consider both cases (either $\widetilde{D}S$ or $S\widetilde{D}$) separately (see below). 
They are equal only when $\widetilde{x}$ is independent from $n$ and $[u]$. 

\begin{exm}
Let $x\in\mathbb{R}$ and $n\in\mathbb{Z}$, and let $u(t,n)$ be a scalar dependent variable. Consider the following local transformations
\begin{equation}
\widetilde{t}=\widetilde{t}(\varepsilon;t,n,u),\quad \widetilde{u}=\widetilde{u}(\varepsilon;t,n,u).
\end{equation}
There are two ways to understand, for instance, the notation $\widetilde{u}_{1}'(\widetilde{t},n)$. Do we do the derivative $\widetilde{D}={\operatorname{d}}/{\operatorname{d}\!\widetilde{t}}$ first or the shift $S$ first? Apparently they may lead to different conclusions since in general
\begin{equation}
S\frac{\operatorname{d}\!\widetilde{u}}{\operatorname{d}\!\widetilde{t}}=\frac{\operatorname{d}(S\widetilde{u})}{\operatorname{d}(S\widetilde{t})}\neq \frac{\operatorname{d}(S\widetilde{u})}{\operatorname{d}\!\widetilde{t}}.
\end{equation}
\end{exm}

Besides the total derivatives $D_i$, we also define {\bf multi-total derivatives in the differential-difference sense} for a given index $I$ as
\begin{equation}
D_{i;I}=\partial_{x^i}+\frac{\partial u_{\bold{0};I}^{\alpha}}{\partial x^i}\partial_{u_{\bold{0};I}^{\alpha}}+\cdots+\sum_{\alpha,J_1}u^{\alpha}_{J_1+\bold{1}_i;I}\partial_{u^{\alpha}_{J_1;I}}.
\end{equation}
They are related by 
\begin{equation}
D_i=\partial_{x^i}+\sum_{I}\left(D_{i;I}-\partial_{x^i}\right).
\end{equation}

Next the prolongation formulae of a generalised vector field $X=\xi^i(x,n,[u])\partial_{x^i}+\phi^{\alpha}(x,n,[u])\partial_{u^{\alpha}}$ are derived for the cases $\widetilde{D}S$ (i.e. all the shifts followed by all the derivatives) and $S\widetilde{D}$ (i.e. all the derivatives followed by all shifts) separately.

At the moment, let us first consider Lie point transformations, that is, $\xi^i=\xi^i(x,n,u)$ and $\phi^{\alpha}=\phi^{\alpha}(x,n,u)$.

{\bf Case I:  $\widetilde{D}S$.} 
Now the prolongations of transformations are given by
\begin{equation}\label{pot1}
\widetilde{u_{J_1;J_2}^{\alpha}}:=\widetilde{D_{J_1}}S_{J_2}\widetilde{u}^{\alpha}.
\end{equation}

\begin{prop}\label{prop33}
We have the following recursion relations
\begin{equation}\label{dvrr}
\phi^{\alpha}_{J_1+\bold{1}_i;J_2}=D_i\phi^{\alpha}_{J_1;J_2}-(D_i\xi^j)u^{\alpha}_{J_1+\bold{1}_j;J_2}
\end{equation}
and
\begin{equation}\label{prsk}
\phi^{\alpha}_{\bold{0};J_2+\bold{1}_k}=S_k\phi^{\alpha}_{\bold{0;}J_2}
\end{equation}
for all valid indices $\alpha,i,j,k,J_1,J_2$.
\end{prop}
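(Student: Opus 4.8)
The plan is to establish the two recursion relations independently, since they control increments of the differential multi-index $J_1$ and of the shift multi-index $J_2$ separately, and each follows directly from differentiating the defining prolongation \eqref{pot1} in $\varepsilon$ at $\varepsilon=e$ (no induction is needed, because each relation only adds a single derivative or a single shift to an otherwise arbitrary base). The shift relation \eqref{prsk} is the more elementary one. In Case I the innermost operation is the shift, so taking $J_1=\bold{0}$ in \eqref{pot1} gives $\widetilde{u^{\alpha}_{\bold{0};J_2}}=S_{J_2}\widetilde{u}^{\alpha}$. The key observation is that $S_k$ acts only on the discrete argument $n$ (and on the correspondingly shifted values of $u$), whereas $\operatorname{d}/\operatorname{d}\!\varepsilon$ acts only on the group parameter; since $n$ and $\varepsilon$ are independent, the two operations commute. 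Differentiating $S_{J_2}\widetilde{u}^{\alpha}$ at $\varepsilon=e$ and pulling $S_{J_2}$ outside the $\varepsilon$-derivative yields $\phi^{\alpha}_{\bold{0};J_2}=S_{J_2}\phi^{\alpha}$, and then $\phi^{\alpha}_{\bold{0};J_2+\bold{1}_k}=S_kS_{J_2}\phi^{\alpha}=S_k\phi^{\alpha}_{\bold{0};J_2}$ follows from the composition law $S_{J_2+\bold{1}_k}=S_kS_{J_2}$ for commuting shifts.

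For the differential relation \eqref{dvrr} I would adapt the classical chain-rule derivation of the ordinary differential prolongation formula, carrying the fixed shift index $J_2$ along for the ride. First, because the total derivatives $\widetilde{D_i}$ commute among themselves, \eqref{pot1} gives $\widetilde{u^{\alpha}_{J_1+\bold{1}_i;J_2}}=\widetilde{D_i}\,\widetilde{u^{\alpha}_{J_1;J_2}}$; the crucial point, and the reason Case I is the clean case, is that here the extra derivative sits outermost, so it leaves the inner shift structure intact. Next I would invoke the chain rule relating old and new total derivatives, namely $D_j=(D_j\widetilde{x}^i)\,\widetilde{D_i}$ acting on functions of the new variables, which is legitimate near the identity where $\widetilde{x}$ is an invertible change of the differential coordinates (the discrete variable $n$ being inert). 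Applying this to $\widetilde{u^{\alpha}_{J_1;J_2}}$ produces
\begin{equation}
D_j\,\widetilde{u^{\alpha}_{J_1;J_2}}=(D_j\widetilde{x}^i)\,\widetilde{u^{\alpha}_{J_1+\bold{1}_i;J_2}}.
\end{equation}

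The final step is to differentiate this identity with respect to $\varepsilon$ at $\varepsilon=e$. On the left, $D_j$ commutes with $\operatorname{d}/\operatorname{d}\!\varepsilon$, giving $D_j\phi^{\alpha}_{J_1;J_2}$. On the right I would apply the product rule and evaluate at the identity, where $\widetilde{x}^i=x^i$ forces $D_j\widetilde{x}^i=\delta^i_j$ and $\widetilde{u^{\alpha}_{J_1+\bold{1}_i;J_2}}=u^{\alpha}_{J_1+\bold{1}_i;J_2}$, together with $\operatorname{d}(D_j\widetilde{x}^i)/\operatorname{d}\!\varepsilon=D_j\xi^i$ at $\varepsilon=e$. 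Collecting the two terms and relabelling the dummy indices $i\leftrightarrow j$ returns precisely \eqref{dvrr}.

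I expect the main obstacle to be conceptual rather than computational: one must justify that the pure-differential chain-rule argument really does survive the presence of the discrete index, given the paper's own warning that $\widetilde{D}S\neq S\widetilde{D}$ in general. The resolution is exactly the Case I convention, namely that all shifts are performed before all derivatives, which lets one treat $S_{J_2}\widetilde{u}^{\alpha}$ as a single object to which the standard reasoning applies, so the non-commutativity never intervenes. A secondary point worth verifying explicitly is that the chain-rule relation $D_j=(D_j\widetilde{x}^i)\widetilde{D_i}$ genuinely holds within the differential-difference total-derivative calculus defined above; this reduces to the ordinary differential statement because $n$ is held fixed by the transformation, and the multi-total derivatives $D_{i;I}$ differ from $D_i$ only by shift-indexed bookkeeping that is untouched in the step incrementing $J_1$.
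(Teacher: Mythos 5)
Your proof is correct and follows essentially the same route as the paper: both derive \eqref{dvrr} by differentiating, at $\varepsilon=e$, the chain-rule relation between the old and new total derivatives (you differentiate the implicit form $D_j\widetilde{u^{\alpha}_{J_1;J_2}}=(D_j\widetilde{x}^i)\,\widetilde{u^{\alpha}_{J_1+\bold{1}_i;J_2}}$, while the paper differentiates the explicit form involving $(D_x\widetilde{x})^{-1}$ via the derivative-of-an-inverse formula, which yields the identical result after evaluating at the identity). Likewise, both obtain \eqref{prsk} from $S_k\widetilde{u^{\alpha}_{\bold{0};J_2}}=\widetilde{u^{\alpha}_{\bold{0};J_2+\bold{1}_k}}$ together with the commutation of $S_k$ with $\operatorname{d}/\operatorname{d}\!\varepsilon$.
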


\begin{proof}
The equality (\ref{dvrr}) is a consequence of the fact that (for any matrix $M(\varepsilon)$)
\begin{equation}
D_{\widetilde{x}}\widetilde{u^{\alpha}_{J_1;J_2}}=\left(D_x\widetilde{u^{\alpha}_{J_1;J_2}}\right) (D_x\widetilde{x})^{-1} \text{ and } \frac{\operatorname{d}}{\operatorname{d}\!\varepsilon}(M(\varepsilon)^{-1})=-M(\varepsilon)^{-1}\frac{\operatorname{d}\!M(\varepsilon)}{\operatorname{d}\!\varepsilon}M(\varepsilon)^{-1}.
\end{equation}
Hence for any $j$,
\begin{equation}
\frac{\operatorname{d}\! \widetilde{u^{\alpha}_{J_1+\bold{1}_i;J_2}}}{\operatorname{d}\! \varepsilon}\Big|_{\varepsilon=e}=D_i\frac{\operatorname{d}\! \widetilde{u^{\alpha}_{J_1;J_2}}}{\operatorname{d}\!\varepsilon}\Big|_{\varepsilon=e}-\left(D_i\frac{\operatorname{d}\! \widetilde{x}^j}{\operatorname{d}\! \varepsilon}\Big|_{\varepsilon=e}\right)u_{J_1+\bold{1}_j;J_2}^{\alpha},
\end{equation}
which is exactly (\ref{dvrr}).

The second part is due to the fact that 
\begin{equation}
S_k\widetilde{u^{\alpha}_{\bold{0};J_2}}=\widetilde{u^{\alpha}_{\bold{0};J_2+\bold{1}_k}}.
\end{equation}
\end{proof}

\begin{prop}\label{prop34}
For any given indices $J_1,J_2$, the following equality holds 
\begin{equation}
\phi^{\alpha}_{J_1;J_2+\bold{1}_k}=S_k\phi^{\alpha}_{J_1;J_2}+D_{J_1}\left(\left(S_k\xi^i-\xi^i\right)u^{\alpha}_{\bold{1}_i;J_2+\bold{1}_k}\right)-\left(S_k\xi^i-\xi^i\right)u^{\alpha}_{J_1+\bold{1}_i;J_2+\bold{1}_k}.
\end{equation}
In particular when $J_1=\bold{0}$, this agrees with the result in Proposition \ref{prop33}.
\end{prop}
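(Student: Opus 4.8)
The plan is to prove the identity by induction on the order $|J_1|$ of the differential multi-index, leaning on the two recursion relations (\ref{dvrr}) and (\ref{prsk}) already established in Proposition~\ref{prop33}, together with the fact that in the original (untilded) coordinates the continuous total derivative and the discrete shift commute, $D_iS_k=S_kD_i$. This commutation holds because $D_i$ differentiates along the continuous variable $x^i$ while $S_k$ shifts the independent discrete variable $n^k$, and it is exactly the feature that distinguishes the untilded operators from their tilded counterparts (for which $\widetilde{D}S\neq S\widetilde{D}$).

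For the base case $J_1=\bold{0}$, the operator $D_{J_1}=D_{\bold{0}}$ is the identity, so the two correction terms $(S_k\xi^i-\xi^i)u^{\alpha}_{\bold{1}_i;J_2+\bold{1}_k}$ and $-(S_k\xi^i-\xi^i)u^{\alpha}_{\bold{1}_i;J_2+\bold{1}_k}$ cancel identically, and the claimed formula collapses to $\phi^{\alpha}_{\bold{0};J_2+\bold{1}_k}=S_k\phi^{\alpha}_{\bold{0};J_2}$. This is precisely the second recursion relation (\ref{prsk}), so the statement reduces to Proposition~\ref{prop33} at $J_1=\bold{0}$, as asserted.

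For the inductive step, I would assume the formula for a multi-index $J_1$ and derive it for $J_1+\bold{1}_i$. Starting from $\phi^{\alpha}_{J_1+\bold{1}_i;J_2+\bold{1}_k}$, I would apply the derivative recursion (\ref{dvrr}) at shift multi-index $J_2+\bold{1}_k$ to write it as $D_i\phi^{\alpha}_{J_1;J_2+\bold{1}_k}-(D_i\xi^j)u^{\alpha}_{J_1+\bold{1}_j;J_2+\bold{1}_k}$, then substitute the inductive hypothesis into $\phi^{\alpha}_{J_1;J_2+\bold{1}_k}$ (renaming its summed index to $\ell$ to keep it disjoint from the distinguished direction $i$) and push $D_i$ through by the Leibniz rule. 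Using $D_iD_{J_1}=D_{J_1+\bold{1}_i}$ converts the second term of the hypothesis into the required second term of the target; using $D_iS_k=S_kD_i$ converts $D_iS_k\phi^{\alpha}_{J_1;J_2}$ into $S_kD_i\phi^{\alpha}_{J_1;J_2}$; and the Leibniz expansion of the third term yields the required third term (at index $J_1+\bold{1}_i+\bold{1}_{\ell}$) together with a leftover piece $-(S_k(D_i\xi^{\ell})-D_i\xi^{\ell})u^{\alpha}_{J_1+\bold{1}_{\ell};J_2+\bold{1}_k}$.

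It then remains to confirm that the non-correction residue — namely $S_kD_i\phi^{\alpha}_{J_1;J_2}$, the leftover just described, and the term $-(D_i\xi^j)u^{\alpha}_{J_1+\bold{1}_j;J_2+\bold{1}_k}$ inherited from (\ref{dvrr}) — reassembles into $S_k\phi^{\alpha}_{J_1+\bold{1}_i;J_2}$, the first term of the target. This is verified by applying $S_k$ to the recursion $\phi^{\alpha}_{J_1+\bold{1}_i;J_2}=D_i\phi^{\alpha}_{J_1;J_2}-(D_i\xi^j)u^{\alpha}_{J_1+\bold{1}_j;J_2}$ and again invoking $D_iS_k=S_kD_i$ together with $S_ku^{\alpha}_{J_1+\bold{1}_j;J_2}=u^{\alpha}_{J_1+\bold{1}_j;J_2+\bold{1}_k}$. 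The main obstacle is purely the index bookkeeping, and the decisive point is a single cancellation: the $+D_i\xi^{\ell}$ contribution from differentiating the correction term cancels the $-(D_i\xi^j)$ term inherited from (\ref{dvrr}), leaving exactly $-S_k(D_i\xi^{\ell})u^{\alpha}_{J_1+\bold{1}_{\ell};J_2+\bold{1}_k}$, which is precisely what is missing to complete $S_k\phi^{\alpha}_{J_1+\bold{1}_i;J_2}$. Once this cancellation is spotted and the dummy index $\ell$ is kept disjoint from $i$, the remaining manipulations are entirely mechanical.
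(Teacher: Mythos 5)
Your proof is correct. The base case is exactly (\ref{prsk}), since $D_{\bold{0}}$ is the identity and the two correction terms cancel; and the inductive step goes through as you describe: after substituting the hypothesis into $D_i\phi^{\alpha}_{J_1;J_2+\bold{1}_k}$, the Leibniz expansion of the correction term produces $+(D_i\xi^{\ell})u^{\alpha}_{J_1+\bold{1}_{\ell};J_2+\bold{1}_k}$, which cancels the $-(D_i\xi^{j})u^{\alpha}_{J_1+\bold{1}_{j};J_2+\bold{1}_k}$ inherited from (\ref{dvrr}), and the residue $S_kD_i\phi^{\alpha}_{J_1;J_2}-(S_kD_i\xi^{\ell})u^{\alpha}_{J_1+\bold{1}_{\ell};J_2+\bold{1}_k}$ is precisely $S_k$ applied to the right-hand side of (\ref{dvrr}), i.e.\ $S_k\phi^{\alpha}_{J_1+\bold{1}_i;J_2}$. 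The difference from the paper is one of packaging rather than substance: the paper first iterates (\ref{dvrr}) into the closed-form characteristic expression $\phi^{\alpha}_{J_1;J_2}=D_{J_1}\bigl(\phi^{\alpha}_{\bold{0};J_2}-\xi^iu^{\alpha}_{\bold{1}_i;J_2}\bigr)+\xi^iu^{\alpha}_{J_1+\bold{1}_i;J_2}$, writes it at both $J_2$ and $J_2+\bold{1}_k$, and subtracts, using (\ref{prsk}) and the commutativity $S_kD_{J_1}=D_{J_1}S_k$ to collapse the difference in one stroke. Your induction unpacks that closed form step by step; it costs more index bookkeeping per step but makes the single decisive cancellation explicit, whereas the paper's subtraction absorbs the same cancellation into the telescoping of the closed-form identity. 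Both arguments rest on exactly the same two inputs: Proposition \ref{prop33} and the fact that $S_k$ commutes with the untilded total derivatives.
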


\begin{proof}
The recursion relations (\ref{dvrr}) in Proposition \ref{prop33} are exactly the same as those in the differential case for all given indices $J_2$. Then we have that 
\begin{equation}\label{dvrr1}
\phi^{\alpha}_{J_1;J_2}=D_{J_1}\left(\phi^{\alpha}_{\bold{0};J_2}-\xi^iu^{\alpha}_{\bold{1}_i;J_2}\right)+\xi^iu^{\alpha}_{J_1+\bold{1}_i;J_2}
\end{equation}
and 
\begin{equation}
\phi^{\alpha}_{J_1;J_2+\bold{1}_k}=D_{J_1}\left(\phi^{\alpha}_{\bold{0};J_2+\bold{1}_k}-\xi^iu^{\alpha}_{\bold{1}_i;J_2+\bold{1}_k}\right)+\xi^iu^{\alpha}_{J_1+\bold{1}_i;J_2+\bold{1}_k}.
\end{equation}
Immediately we obtain 
\begin{equation}
\begin{aligned}
\phi^{\alpha}_{J_1;J_2+\bold{1}_k}-S_k\phi^{\alpha}_{J_1;J_2}=&~D_{J_1}\left(\phi^{\alpha}_{\bold{0};J_2+\bold{1}_k}-\xi^iu^{\alpha}_{\bold{1}_i;J_2+\bold{1}_k}\right)+\xi^iu^{\alpha}_{J_1+\bold{1}_i;J_2+\bold{1}_k}\\
&~-S_kD_{J_1}\left(\phi^{\alpha}_{\bold{0};J_2}-\xi^iu^{\alpha}_{\bold{1}_i;J_2}\right)-(S_k\xi^i)u^{\alpha}_{J_1+\bold{1}_i;J_2+\bold{1}_k}\\
=&~D_{J_1}\left(\left(S_k\xi^i-\xi^i\right)u^{\alpha}_{\bold{1}_i;J_2+\bold{1}_k}\right)-\left(S_k\xi^i-\xi^i\right)u^{\alpha}_{J_1+\bold{1}_i;J_2+\bold{1}_k}.
\end{aligned}
\end{equation}
\end{proof}

\begin{thm}\label{thm33}
In this case, the prolongation formula (\ref{pdde}) is then 
\begin{equation}
\bold{pr}X=\xi^iD_i+\sum_{\alpha,J_1,J_2}\left(D_{J_1}Q_{J_2}^{\alpha}\right)\partial_{u^{\alpha}_{J_1;J_2}},
\end{equation}
where the {\bf multi-characteristics} are defined by
\begin{equation}
Q^{\alpha}_{J_2}:=S_{J_2}\phi^{\alpha}-\xi^iu^{\alpha}_{\bold{1}_i;J_2}.
\end{equation}
\end{thm}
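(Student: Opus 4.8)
The plan is to strip off the $\xi^i D_i$ part of $\bold{pr}X$ and show that what remains acts on each coordinate $u^\alpha_{J_1;J_2}$ with coefficient exactly $D_{J_1}Q^\alpha_{J_2}$. First I would expand the differential-difference total derivative to write
\[
\xi^i D_i=\xi^i\partial_{x^i}+\sum_{\alpha,J_1,J_2}\xi^i u^\alpha_{J_1+\bold{1}_i;J_2}\,\partial_{u^\alpha_{J_1;J_2}},
\]
so that subtracting this from the prolongation (\ref{pdde}) reduces the whole statement to the single termwise identity
\[
\phi^\alpha_{J_1;J_2}-\xi^i u^\alpha_{J_1+\bold{1}_i;J_2}=D_{J_1}Q^\alpha_{J_2}
\]
holding for every admissible triple $(\alpha,J_1,J_2)$.

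The next step is to recognise that the two preceding propositions already supply everything needed, so no new computation arises. Equation (\ref{dvrr1}), derived inside the proof of Proposition \ref{prop34}, gives
\[
\phi^\alpha_{J_1;J_2}-\xi^i u^\alpha_{J_1+\bold{1}_i;J_2}=D_{J_1}\!\left(\phi^\alpha_{\bold{0};J_2}-\xi^i u^\alpha_{\bold{1}_i;J_2}\right),
\]
which already pulls the entire $J_1$-dependence outside as an application of $D_{J_1}$. It then remains only to identify the bracket with the multi-characteristic $Q^\alpha_{J_2}=S_{J_2}\phi^\alpha-\xi^i u^\alpha_{\bold{1}_i;J_2}$, i.e. to establish $\phi^\alpha_{\bold{0};J_2}=S_{J_2}\phi^\alpha$. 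I would prove this by induction on $|J_2|$: the base case $\phi^\alpha_{\bold{0};\bold{0}}=\phi^\alpha$ is just the leading coefficient of the prolongation, while the inductive step is precisely the shift recursion (\ref{prsk}) of Proposition \ref{prop33}. Substituting back then yields the theorem.

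I do not expect a genuine obstacle here, since the result is in effect a repackaging of Propositions \ref{prop33} and \ref{prop34} into evolutionary form. The one point that must be handled carefully is the implicit commutativity $D_i S_{J_2}=S_{J_2}D_i$ used when iterating (\ref{prsk}) and when reorganising (\ref{dvrr1}); this is legitimate because, in the \emph{original} coordinates, $u^\alpha_{J_1;J_2}=D_{J_1}S_{J_2}u^\alpha=S_{J_2}D_{J_1}u^\alpha$ is symmetric in the order of differentiation and shifting. This is exactly the symmetry that fails for the \emph{tilde} operators $\widetilde{D}$ and $S$ and that necessitated treating $\widetilde{D}S$ and $S\widetilde{D}$ separately; but since the evolutionary rewriting takes place entirely in the untransformed variables, that difficulty never enters, and the argument closes cleanly.
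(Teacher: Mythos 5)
Your proposal is correct and follows essentially the same route as the paper: both rest on substituting (\ref{dvrr1}) into the prolongation formula and identifying $\phi^{\alpha}_{\bold{0};J_2}=S_{J_2}\phi^{\alpha}$ via the shift recursion (\ref{prsk}), the only difference being that you subtract $\xi^iD_i$ first and verify the resulting termwise identity, whereas the paper regroups the $\xi^iu^{\alpha}_{J_1+\bold{1}_i;J_2}$ terms into $\xi^iD_i$ at the end. Your explicit induction on $|J_2|$ and the remark on why $D_iS_{J_2}=S_{J_2}D_i$ is harmless in the untransformed variables are sound and merely make explicit what the paper leaves implicit.
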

\begin{proof}
By substituting (\ref{dvrr1}) in the prolongation formula, direct calculation gives
\begin{equation}
\begin{aligned}
\bold{pr}X=&~\xi^i\partial_{x^i}+\phi^{\alpha}\partial_{u^{\alpha}}+\cdots+(S_{J_2}\phi^{\alpha})\partial_{u^{\alpha}_{\bold{0};J_2}}+\cdots\\
&~+\cdots\\
&~+\left(D_{J_1}Q^{\alpha}+\xi^iu^{\alpha}_{J_1+\bold{1}_i;\bold{0}}\right)\partial_{u^{\alpha}_{J_1;\bold{0}}}+\cdots+\left(D_{J_1}Q^{\alpha}_{J_2}+\xi^iu^{\alpha}_{J_1+\bold{1}_i;J_2}\right)\partial_{u^{\alpha}_{J_1;J_2}}+\cdots\\
&~+\cdots\\
=&~\xi^iD_i+Q^{\alpha}\partial_{u^{\alpha}}+\cdots+Q^{\alpha}_{J_2}\partial_{u^{\alpha}_{\bold{0}:J_2}}+\cdots\\
&~+\cdots\\
&~+\left(D_{J_1}Q^{\alpha}\right)\partial_{u^{\alpha}_{J_1;\bold{0}}}+\cdots+\left(D_{J_1}Q^{\alpha}_{J_2}\right)\partial_{u^{\alpha}_{J_1;J_2}}+\cdots\\
&~+\cdots\\
=&~\xi^iD_i+\sum_{\alpha,J_1,J_2}\left(D_{J_1}Q_{J_2}^{\alpha}\right)\partial_{u^{\alpha}_{J_1;J_2}}.
\end{aligned}
\end{equation}
\end{proof}

From Theorem \ref{thm33}, we conclude that the prolonged vector field is equivalent to one in multi-characteristic form using the multi-characteristics $Q^{\alpha}_{J_2}$, namely
\begin{equation}
\sum_{\alpha,J_1,J_2}\left(D_{J_1}Q_{J_2}^{\alpha}\right)\partial_{u^{\alpha}_{J_1;J_2}}.
\end{equation}
 However, it is not yet evolutionary as $Q_{J_2+\bold{1}_k}^{\alpha}\neq S_kQ_{J_2}^{\alpha}$ in general. Prolongation of a vector field is equivalent to an evolutionary vector field only when $\xi=\xi(x)$; in this situation
\begin{equation}
\begin{aligned}
\bold{pr}X
&=\xi^iD_i+\sum_{\alpha,J_1,J_2}\left(D_{J_1}Q^{\alpha}_{J_2}\right)\partial_{u_{J_1;J_2}^{\alpha}}\\
&=\xi^iD_i+\sum_{\alpha,J_1,J_2}\left(D_{J_1}S_{J_2}Q^{\alpha}\right)\partial_{u_{J_1;J_2}^{\alpha}},
\end{aligned}
\end{equation}
where $Q^{\alpha}=\phi^{\alpha}-\xi^iu^{\alpha}_{\bold{1}_i;\bold{0}}$.

{\bf Case II: $S\widetilde{D}$.} In this case, the prolongations of transformations are given by
\begin{equation}\label{pot2}
\widetilde{u_{J_1;J_2}^{\alpha}}:=S_{J_2}\widetilde{D_{J_1}}\widetilde{u}^{\alpha}.
\end{equation}

\begin{prop}\label{prop31}
We have the following recursion relations 
\begin{equation}\label{dvrrcase2}
\phi^{\alpha}_{J_1+\bold{1}_i;J_2}=D_i\phi^{\alpha}_{J_1;J_2}-\left(S_{J_2}D_i\xi^j\right)u_{J_1+\bold{1}_j;J_2}^{\alpha}
\end{equation}
and
\begin{equation}\label{eqsk108}
S_k\phi_{J_1;J_2}^{\alpha}=\phi_{J_1;J_2+\bold{1}_k}^{\alpha}
\end{equation}
for all valid indices $\alpha,i,j,k,J_1,J_2$.
\end{prop}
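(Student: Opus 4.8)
The plan is to reduce the whole statement to the purely differential prolongation, exploiting the fact that in Case II every derivative is taken before any shift. The two facts I would lean on are commutativity facts: because the transformation leaves $n$ fixed, the shift $S_{J_2}$ commutes with the group-parameter derivative $\frac{\operatorname{d}}{\operatorname{d}\!\varepsilon}\big|_{\varepsilon=e}$; and, by the defining relation $u_{J_1;J_2}=D_{J_1}S_{J_2}u=S_{J_2}D_{J_1}u$, the total derivative commutes with every shift, $S_{J_2}D_i=D_iS_{J_2}$.

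First I would isolate the base case $J_2=\bold{0}$. Since (\ref{pot2}) reads $\widetilde{u^{\alpha}_{J_1;\bold{0}}}=\widetilde{D_{J_1}}\widetilde{u}^{\alpha}$, the coefficient $\phi^{\alpha}_{J_1;\bold{0}}$ is exactly the ordinary differential prolongation coefficient, with $n$ treated as an inert parameter. Consequently it obeys the standard differential recursion of Section 2.1, i.e. (\ref{dvrr}) evaluated at $J_2=\bold{0}$:
\[
\phi^{\alpha}_{J_1+\bold{1}_i;\bold{0}}=D_i\phi^{\alpha}_{J_1;\bold{0}}-(D_i\xi^j)u^{\alpha}_{J_1+\bold{1}_j;\bold{0}}.
\]

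Next I would propagate the shift. Using the definition (\ref{pot2}) together with the first commutativity fact,
\[
\phi^{\alpha}_{J_1;J_2}=\frac{\operatorname{d}}{\operatorname{d}\!\varepsilon}\Big|_{\varepsilon=e}S_{J_2}\widetilde{D_{J_1}}\widetilde{u}^{\alpha}=S_{J_2}\,\phi^{\alpha}_{J_1;\bold{0}},
\]
so that every prolongation coefficient is the base coefficient simply carried along by $S_{J_2}$. The shift recursion (\ref{eqsk108}) then drops out at once, $S_k\phi^{\alpha}_{J_1;J_2}=S_kS_{J_2}\phi^{\alpha}_{J_1;\bold{0}}=S_{J_2+\bold{1}_k}\phi^{\alpha}_{J_1;\bold{0}}=\phi^{\alpha}_{J_1;J_2+\bold{1}_k}$, and — in contrast to Proposition \ref{prop33} — this now holds for every $J_1$, not merely $J_1=\bold{0}$. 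For the derivative recursion (\ref{dvrrcase2}) I would apply $S_{J_2}$ to the base recursion and push it through with $S_{J_2}D_i=D_iS_{J_2}$ and $S_{J_2}u^{\alpha}_{J_1+\bold{1}_j;\bold{0}}=u^{\alpha}_{J_1+\bold{1}_j;J_2}$:
\[
\begin{aligned}
\phi^{\alpha}_{J_1+\bold{1}_i;J_2}&=S_{J_2}\phi^{\alpha}_{J_1+\bold{1}_i;\bold{0}}=S_{J_2}\left(D_i\phi^{\alpha}_{J_1;\bold{0}}-(D_i\xi^j)u^{\alpha}_{J_1+\bold{1}_j;\bold{0}}\right)\\
&=D_i\phi^{\alpha}_{J_1;J_2}-(S_{J_2}D_i\xi^j)u^{\alpha}_{J_1+\bold{1}_j;J_2}.
\end{aligned}
\]

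The one genuinely delicate point, and the one I would be most careful to get right, is the survival of the shift on $S_{J_2}D_i\xi^j$ rather than a bare $D_i\xi^j$. This is the signature of the $S\widetilde{D}$ ordering: the coefficient $D_i\xi^j$ is assembled in the tilde variables before the shift acts, so $S_{J_2}$ lands on it and cannot be removed. The argument itself is short once the two commutativity facts are in place; the main obstacle is therefore purely the bookkeeping that ensures the extra $S_{J_2}$ is attached to $D_i\xi^j$ and to nothing else, which is exactly what distinguishes this proposition from its Case I counterpart.
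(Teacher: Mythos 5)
Your argument is correct and reaches both identities, but it is organised differently from the paper's proof. The paper establishes \eqref{dvrrcase2} directly at general $J_2$: it writes $\widetilde{u^{\alpha}_{J_1+\bold{1}_i;J_2}}=\bigl(D_x\widetilde{u^{\alpha}_{J_1;J_2}}\bigr)\,S_{J_2}(D_x\widetilde{x})^{-1}$ and differentiates in $\varepsilon$ at $e$ using the matrix-inverse derivative formula, the shifted factor $S_{J_2}(D_x\widetilde{x})^{-1}$ being exactly what deposits the shift on $D_i\xi^j$; the relation \eqref{eqsk108} is then obtained just as you obtain it, from $S_k\widetilde{u^{\alpha}_{J_1;J_2}}=\widetilde{u^{\alpha}_{J_1;J_2+\bold{1}_k}}$. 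You instead factor everything through the intermediate identity $\phi^{\alpha}_{J_1;J_2}=S_{J_2}\phi^{\alpha}_{J_1;\bold{0}}$ and then push $S_{J_2}$ through the purely differential base recursion, which is \eqref{dvrrcase2} at $J_2=\bold{0}$ and coincides with \eqref{dvrr} of Proposition \ref{prop33} since the two orderings agree when no shift is present. This is legitimate: the two commutation facts you rely on both hold --- $S_{J_2}$ commutes with $\operatorname{d}/\operatorname{d}\!\varepsilon|_{\varepsilon=e}$ because the transformation leaves $n$ fixed, and $S_{J_2}D_i=D_iS_{J_2}$ follows from the convention $u_{J_1;J_2}=D_{J_1}S_{J_2}u=S_{J_2}D_{J_1}u$ together with the multiplicativity of $S$. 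Your route is somewhat more modular, and it surfaces explicitly the identity $\phi^{\alpha}_{J_1;J_2}=S_{J_2}\phi^{\alpha}_{J_1;\bold{0}}$ that the paper only extracts afterwards inside the proof of Theorem \ref{thmcase1}; the price is that you must justify the base case separately (the paper's single computation handles all $J_2$ at once). One small caution: the $D_i$ in your base recursion has to be read as the full differential-difference total derivative, carrying the $\partial_{u^{\alpha}_{J_1;J_2}}$ terms for all shifts, rather than the continuous total derivative with $n$ merely ``inert''; for Lie point coefficients $\xi(x,n,u)$, $\phi(x,n,u)$ the distinction is harmless, but it is worth stating if the argument is to extend verbatim to generalised vector fields.
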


\begin{proof}
First we have
\begin{equation}
\begin{aligned}
\widetilde{u^{\alpha}_{J_1+\bold{1};J_2}}&=S_{J_2}\left(D_{\widetilde{x}}\widetilde{u^{\alpha}_{J_1;\bold{0}}}\right)=(S_{J_2}D_{\widetilde{x}})\widetilde{u^{\alpha}_{J_1;J_2}}\\
&=\left(D_x\widetilde{u^{\alpha}_{J_1;J_2}}\right)S_{J_2}(D_x\widetilde{x})^{-1}.
\end{aligned}
\end{equation}
The derivative with respect to $\varepsilon$ at $e$ on both sides yields
\begin{equation}
\phi^{\alpha}_{J_1+\bold{1}_i;J_2}=D_i\phi^{\alpha}_{J_1;J_2}-\left(S_{J_2}D_i\xi^j\right)u_{J_1+\bold{1}_j;J_2}^{\alpha},
\end{equation}
which finishes the proof of the first piece of results.
The equality (\ref{eqsk108}) can be immediately proved by noting that
\begin{equation}
S_k\widetilde{u^{\alpha}_{J_1;J_2}}=S_k\left(S_{J_2}\left(\widetilde{D_{J_1}}\widetilde{u}^{\alpha}\right)\right)=S_{J_2+\bold{1}_k}\left(\widetilde{D_{J_1}}\widetilde{u}^{\alpha}\right)=\widetilde{u^{\alpha}_{J_1;J_2+\bold{1}_k}}.
\end{equation}
\end{proof}

\begin{thm}\label{thmcase1}
The prolongation formula (\ref{pdde}) can be expressed in terms of its {\bf characteristic} $Q^{\alpha}=\phi^{\alpha}-\xi^iu^{\alpha}_{\bold{1}_i;\bold{0}}$ as
\begin{equation}\label{eq33}
\bold{pr}X=\xi^i\partial_{x^i}+\sum_{i,I}(S_I\xi^i)\left(D_{i;I}-\partial_{x^i}\right)+\sum_{\alpha,J_1,J_2}(D_{J_1}S_{J_2}Q^{\alpha})\partial_{u^{\alpha}_{J_1;J_2}}.
\end{equation}
\end{thm}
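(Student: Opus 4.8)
The plan is to obtain a closed-form expression for each coefficient $\phi^{\alpha}_{J_1;J_2}$ of the prolongation (\ref{pdde}) by solving the two recursion relations of Proposition \ref{prop31}, and then to substitute that expression and collect terms. First I would specialise the derivative recursion (\ref{dvrrcase2}) to $J_2=\bold{0}$, where the shift $S_{J_2}$ disappears and it reduces to the purely differential recursion $\phi^{\alpha}_{J_1+\bold{1}_i;\bold{0}}=D_i\phi^{\alpha}_{J_1;\bold{0}}-(D_i\xi^j)u^{\alpha}_{J_1+\bold{1}_j;\bold{0}}$, which is identical to the one recalled in Section 2 for the differential case. A straightforward induction on $|J_1|$ then yields $\phi^{\alpha}_{J_1;\bold{0}}=D_{J_1}Q^{\alpha}+\xi^iu^{\alpha}_{J_1+\bold{1}_i;\bold{0}}$ with $Q^{\alpha}=\phi^{\alpha}-\xi^iu^{\alpha}_{\bold{1}_i;\bold{0}}$; the base case $J_1=\bold{0}$ reproduces $\phi^{\alpha}$, and the inductive step uses that the coefficient $(D_k\xi^j)u^{\alpha}_{J_1+\bold{1}_j;\bold{0}}$ generated by Leibniz cancels the subtracted term in the recursion.

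Next I would propagate this to arbitrary $J_2$ using the shift recursion (\ref{eqsk108}), which iterates to $\phi^{\alpha}_{J_1;J_2}=S_{J_2}\phi^{\alpha}_{J_1;\bold{0}}$. Applying $S_{J_2}$ to the closed form above, using that $S_{J_2}$ is an algebra homomorphism and that $D_{J_1}$ commutes with $S_{J_2}$ on the original variables (since $u_{J_1;J_2}=D_{J_1}S_{J_2}u=S_{J_2}D_{J_1}u$), I arrive at $\phi^{\alpha}_{J_1;J_2}=D_{J_1}S_{J_2}Q^{\alpha}+(S_{J_2}\xi^i)\,u^{\alpha}_{J_1+\bold{1}_i;J_2}$. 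The crucial point, specific to the prolongation (\ref{pot2}), is that $S_{J_2}$ lands on $\xi^i$, producing the shifted coefficient $S_{J_2}\xi^i$ together with $u^{\alpha}_{J_1+\bold{1}_i;\bold{0}}\mapsto u^{\alpha}_{J_1+\bold{1}_i;J_2}$, rather than being absorbed into a total derivative.

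Substituting this into (\ref{pdde}) separates $\bold{pr}X$ into $\xi^i\partial_{x^i}$, the evolutionary part $\sum_{\alpha,J_1,J_2}(D_{J_1}S_{J_2}Q^{\alpha})\partial_{u^{\alpha}_{J_1;J_2}}$, and a remainder $\sum_{\alpha,J_1,J_2}(S_{J_2}\xi^i)u^{\alpha}_{J_1+\bold{1}_i;J_2}\partial_{u^{\alpha}_{J_1;J_2}}$. The final step is to identify this remainder with $\sum_{i,I}(S_I\xi^i)(D_{i;I}-\partial_{x^i})$: reading off the definition of the multi-total derivative gives $D_{i;I}-\partial_{x^i}=\sum_{\alpha,J_1}u^{\alpha}_{J_1+\bold{1}_i;I}\partial_{u^{\alpha}_{J_1;I}}$, so multiplying by $S_I\xi^i$, summing over $i$ and $I$, and relabelling $I=J_2$ reproduces the remainder term by term, completing the proof.

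I expect the main obstacle to be purely organisational rather than conceptual: carefully tracking, in the second step, that pushing $S_{J_2}$ through the differential identity attaches the shift to $\xi^i$ and raises the shift index on the accompanying derivative of $u$, and confirming the commutation of $D_{J_1}$ with $S_{J_2}$ on untilded quantities. Everything else is routine index relabelling against the definition of $D_{i;I}$ and the standard differential induction.
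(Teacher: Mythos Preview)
Your proposal is correct and follows essentially the same approach as the paper: both derive the closed form $\phi^{\alpha}_{J_1;J_2}=D_{J_1}S_{J_2}Q^{\alpha}+(S_{J_2}\xi^i)\,u^{\alpha}_{J_1+\bold{1}_i;J_2}$ from the two recursions of Proposition~\ref{prop31} and then substitute and collect against the definition of $D_{i;I}$. The only cosmetic difference is the order in which the recursions are applied: the paper solves the derivative recursion~(\ref{dvrrcase2}) directly at general $J_2$ and then invokes $\phi^{\alpha}_{\bold{0};J_2}=S_{J_2}\phi^{\alpha}$, whereas you first specialise to $J_2=\bold{0}$ and then shift---the end result and the substantive steps are identical.
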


\begin{proof}
The recursion relations in Proposition \ref{prop31} lead to that
\begin{equation}
\begin{aligned}
\phi^{\alpha}_{J_1;J_2}&=D_{J_1}\left(\phi^{\alpha}_{\bold{0};J_2}-\left(S_{J_2}\xi^i\right)u^{\alpha}_{\bold{1}_i;J_2}\right)+\left(S_{J_2}\xi^i\right)u^{\alpha}_{J_1+\bold{1}_i;J_2}\\
&=D_{J_1}S_{J_2}Q^{\alpha}+\left(S_{J_2}\xi^i\right)u^{\alpha}_{J_1+\bold{1}_i;J_2}\\
&=S_{J_2}\left(D_{J_1}Q^{\alpha}+\xi^iu^{\alpha}_{J_1+\bold{1}_i;\bold{0}}\right).
\end{aligned}
\end{equation}
Substituting these back to the prolongation formula we get 
\begin{equation}
\begin{aligned}
\bold{pr}X=&~\xi^i\partial_{x^i}+\left(Q^{\alpha}+\xi^iu^{\alpha}_{\bold{1}_i;\bold{0}}\right)\partial_{u^{\alpha}}+\cdots+S_{J_2}\left(Q^{\alpha}+\xi^iu^{\alpha}_{\bold{1}_i;\bold{0}}\right)\partial_{u^{\alpha}_{\bold{0};J_2}}+\cdots\\
&+\cdots\\
&+\left(D_{J_1}Q^{\alpha}+\xi^iu_{J_1+\bold{1}_i;\bold{0}}^{\alpha}\right)\partial_{u^{\alpha}_{J_1;\bold{0}}}+\cdots+S_{J_2}\left(D_{J_1}Q^{\alpha}+\xi^iu_{J_1+\bold{1}_i;\bold{0}}^{\alpha}\right)\partial_{u^{\alpha}_{J_1;J_2}}+\cdots\\
&+\cdots\\
=&~ \xi^i\partial_{x^i}+\xi^i\left(u^{\alpha}_{\bold{1}_i;\bold{0}}\partial_{u^{\alpha}}+\cdots+u^{\alpha}_{J_1+\bold{1}_i;\bold{0}}\partial_{u^{\alpha}_{J_1;\bold{0}}}+\cdots\right)\\
&+\cdots\\
&+ \left(S_{J_2}\xi^i\right)\left(u^{\alpha}_{\bold{1}_i;J_2}\partial_{u^{\alpha}_{\bold{0};J_2}}+\cdots+u^{\alpha}_{J_1+\bold{1}_i;J_2}\partial_{u^{\alpha}_{J_1;J_2}}+\cdots\right)\\
&+\cdots\\
&+ \sum_{\alpha,J_1,J_2}(D_{J_1}S_{J_2}Q^{\alpha})\partial_{u^{\alpha}_{J_1;J_2}}\\
=&~\xi^i\partial_{x^i}+\sum_{i,I}(S_I\xi^i)D_{i;I}-\sum_{i,I}(S_I\xi^i)\partial_{x^i}+\sum_{\alpha,J_1,J_2}(D_{J_1}S_{J_2}Q^{\alpha})\partial_{u^{\alpha}_{J_1;J_2}}.
\end{aligned}
\end{equation}
This finishes the proof.
\end{proof}

From Theorem \ref{thmcase1}, it is clear that in this case:
\begin{itemize}
\item If $\xi$ depend on $(n,[u])$, it is often not possible to write the prolonged vector field equivalently in the evolutionary form since the first two terms on the right hand side of \eqref{eq33} do not contribute to a total derivative term.
\item If $\xi=\xi(x)$, the prolongation can be equivalently written in the evolutionary form because now we have
\begin{equation}\label{eqhaha127}
\begin{aligned}
\bold{pr}X&=\xi^i\left(\partial_{x^i}+\sum_I\left(D_{i;I}-\partial_{x^i}\right)\right)+\sum_{\alpha,J_1,J_2}(D_{J_1}S_{J_2}Q^{\alpha})\partial_{u^{\alpha}_{J_1;J_2}}\\
&=\xi^iD_i+\sum_{\alpha,J_1,J_2}\left(D_{J_1}S_{J_2}Q^{\alpha}\right)\partial_{u_{J_1;J_2}^{\alpha}},
\end{aligned}
\end{equation}
where $Q^{\alpha}=\phi^{\alpha}-\xi^iu^{\alpha}_{\bold{1}_i,\bold{0}}$.
\end{itemize}

\begin{rem}
From the above analysis, we conclude that, for both Case I: $\widetilde{D}S$ and Case II: $S\widetilde{D}$, their prolonged vector fields are of the same form and are equivalent to an evolutionary vector field only when $\xi=\xi(x)$. It is interesting to note that there are also intermediate cases, for instance, $S\widetilde{D}S\widetilde{D}$ or $S\widetilde{D}\widetilde{D}S$. However, we do not attempt to be exhaustive in this paper.
\end{rem}

For a generalised vector field $X=\xi^i(x,n,[u])\partial_{x^i}+\phi^{\alpha}(x,n,[u])\partial_{u^{\alpha}}$, its prolongation $\bold{pr}X$ can then be defined in accordance with either Theorem \ref{thm33} for {\bf Case I:} $\widetilde{D}S$ or Theorem \ref{thmcase1} for {\bf Case II:} $S\widetilde{D}$. When $\xi=\xi(x)$, the prolongation is given uniquely in accordance with \eqref{eqhaha127}.

\begin{defn}
A local group of transformations is called a {\bf symmetry group} for a system of  DDEs $\mathcal{A}=\{F_{\alpha}(x,n,[u])=0\}$ if for all the corresponding generalised infinitesimal generators $X=\xi^i(x,n,[u])\partial_{x^i}+\phi^{\alpha}(x,n,[u])\partial_{u^{\alpha}}$, the following {\bf linearized symmetry condition} is satisfied:
\begin{equation}\label{lscddes}
\bold{pr}X(F_{\alpha})=0, \text{ whenever } \mathcal{A} \text{ holds}.
\end{equation}
Here the prolongation formula is defined in accordance with either Theorem \ref{thm33} or Theorem \ref{thmcase1}.
We say that $X$ generates a group of {\bf Lie point symmetries} if the coefficients $\xi^i$ and $\phi^{\alpha}$ depend only on $x,n$ and $u$.
\end{defn}

It is clear that for DDEs $\mathcal{A}=\{F_{\alpha}(x,n,[u])=0\}$ of the form
\begin{equation}
F_{\alpha}(x,n,[u])=F_{\alpha}\left(x,n,u^{\beta}_{\bold{1}_i;\bold{0}},\ldots,u^{\beta}_{J_1;\bold{0}},u^{\beta}_{\bold{0},\bold{1}_i},u^{\beta}_{\bold{0};-\bold{1}_i},\ldots,u^{\beta}_{\bold{0};J_2}\right)
\end{equation}
for some finite indices $J_1,J_2$, both cases ($\widetilde{D}S$ and $S\widetilde{D}$) are valid for the calculation of the prolongation formula of an infinitesimal generator. The reason is that no term with a mixture of derivatives and shifts appear in $\mathcal{A}$. Particular examples are these of the form
\begin{equation}
u'=f(x,n,u_{-l_1},u_{-l_1+1},\ldots,u,\ldots,u_{l_2}),
\end{equation}
for some integers $l_1,l_2$.
Yamilov's classification of integrable DDEs $u'=f(u_1,u,u_1)$ is clearly in this form \cite{Yamilov2006}. 

For a system of DDEs of the bi-Kovalevskaya form,  the linearized symmetry condition is then equivalent to the existence of functions $K^{\beta}_{\alpha;J_1,J_2}(x,n,[u])$, such that the following identity holds
\begin{equation}
\bold{pr}X(F_{\alpha})=\sum_{\beta,J_1,J_2}K^{\beta}_{\alpha;J_1,J_2}(D_{J_1}S_{J_2}F_{\beta}).
\end{equation}

 In the next running example, the Volterra equation, we show how symmetries can be calculated from the linearized symmetry condition.

\begin{exm}[Volterra equation]\label{volterra}
Let us consider Lie point symmetries for the Volterra equation (e.g. \cite{Yamilov2006,Khanizadeh2013})
\begin{equation}\label{vvveq}
u'-u(u_1-u_{-1})=0,
\end{equation}
which is one of the most simple integrable Volterra type of equations of the form
\begin{equation}
u'=f(u_{-1},u,u_1).
\end{equation}
The equation (\ref{vvveq}) has also been called the KvM lattice (e.g. \cite{Hickman2003,Kac1975}).
Here $t\in\mathbb{R}$ is the differential independent variable while $n\in\mathbb{Z}$ is the difference independent variable. 
 For the purpose of convenience (which will be clear in Section 4), we rewrite the equation as
\begin{equation}
\frac{u'}{u}-u_1+u_{-1}=0.
\end{equation}
Consider a Lie point symmetry generator $X=\xi(t,n,u)\partial_t+\phi(t,n,u)\partial_u$. Its prolongation is 
\begin{equation}
\bold{pr}X=\xi\partial_t+\phi\partial_u+(S\phi)\partial_{u_1}+\left(S_{-1}\phi\right)\partial_{u_{-1}}+\left(D_t\phi-\left(D_t\xi\right)u'\right)\partial_{u'}+\cdots,
\end{equation}
whose first several terms are the same according to either Theorem \ref{thm33} or Theorem \ref{thmcase1}.
From the linearized symmetry condition (\ref{lscddes}), after direct calculations, we have
\begin{equation}\label{vel1}
\begin{aligned}
0&=\bold{pr}X\left(\frac{u'}{u}-u_1+u_{-1}\right)\Big|_{\frac{u'}{u}-u_1+u_{-1}=0}\\
&=-\frac{\phi}{u}(u_1-u_{-1})+\frac{\phi_t}{u}+\phi_u(u_1-u_{-1})-(\xi_t+\xi_uu(u_1-u_{-1}))(u_1-u_{-1})-S\phi+S_{-1}\phi.
\end{aligned}
\end{equation}
By differentiating (\ref{vel1}) consecutively with respect to $u_1$ and $u_{-1}$, we obtain  
\begin{equation}
2\xi_uu=0.
\end{equation}
Hence we have
\begin{equation}
\xi=\xi(t,n),
\end{equation}
and the linearized symmetry condition (\ref{vel1}) simplifies to
\begin{equation}\label{vel2}
0=-\frac{\phi}{u}(u_1-u_{-1})+\frac{\phi_t}{u}+\phi_u(u_1-u_{-1})-\xi_t(u_1-u_{-1})-S\phi+S_{-1}\phi.
\end{equation}
Differentiate (\ref{vel2}) with respect to $u_1$ (or $u_{-1}$) twice gives that 
\begin{equation}
S(\phi_{uu})=0,
\end{equation}
which can be solved, namely
\begin{equation}
\phi=a(t,n)u+b(t,n).
\end{equation}
The resulting $\xi$ and $\phi$ are then substituted back to the linearized symmetry condition (\ref{vel2}); this leads to 
\begin{equation}
0=\frac{b'-b(u_1-u_{-1})}{u}-\left[\xi_t+a(t,n+1)\right]u_1+\left[\xi_t+a(t,n-1)\right]u_{-1}+a'-b(t,n+1)+b(t,n-1).
\end{equation}
The following equations are then achieved
\begin{equation}
\begin{aligned}
b=0,\quad \xi_t+a(t,n+1)=0,\quad a(t,n+1)=a(t,n-1),\quad a'=0,
\end{aligned}
\end{equation}
whose solution is
\begin{equation}
a=c_1+c_2(-1)^n,\quad b=0,\quad \xi=\left(-c_1+c_2(-1)^n\right)t+c_3(n).
\end{equation}
Hence
\begin{equation}
\phi=\left(c_1+c_2(-1)^n\right)u.
\end{equation}
Here $c_1,c_2$ are constants; $c_3(n)$ is an arbitrary function of $n$. Therefore, we obtained all Lie point symmetries for the Volterra equation, namely
\begin{equation}
X_1=-t\partial_t+u\partial_u,\quad X_2=(-1)^nt\partial_t+(-1)^nu\partial_u,\quad X_3=c_3(n)\partial_t.
\end{equation}
\end{exm}

\begin{lem}\label{lemevo}
Given a vector field
\begin{equation}
Y=\xi^i(x,n,[u])\partial_{x^i}+\sum_{\alpha,J_1,J_2}\phi^{\alpha}_{J_1;J_2}(x,n,[u])\partial_{u^{\alpha}_{J_1;J_2}},
\end{equation}
we have $[Y,D_i]=0$ for $i=1,2,\ldots,p_1$ and $[Y,S_j]$ for $j=1,2,\ldots,p_2$ if and only if 
\begin{equation}\label{yevo}
Y=\bold{pr}X+c^i\partial_{x^i},
\end{equation}
where $\bold{pr}X$ is an evolutionary vector field and $c^i$ are constants.
\end{lem}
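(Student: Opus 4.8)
The plan is to prove both implications by reducing the two operator identities to the action of $Y$, $D_i$ and $S_j$ on the coordinate functions $x^k$, $n^l$ and $u^{\alpha}_{J_1;J_2}$ of the (infinite) differential-difference jet space. Since $Y$ and each $D_i$ are derivations of the algebra of smooth functions, the commutator $[Y,D_i]$ is again a derivation and hence vanishes if and only if it annihilates every coordinate. The shift $S_j$ is not a derivation but an algebra automorphism; nevertheless $[Y,S_j]=YS_j-S_jY$ obeys the twisted Leibniz rule $[Y,S_j](fg)=[Y,S_j](f)\,S_jg+S_jf\,[Y,S_j](g)$, so it too is determined by its values on the coordinates. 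Thus it suffices to evaluate the four commutators on $x^k$ and $u^{\alpha}_{J_1;J_2}$, the action on $n^l$ being trivially zero.

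Next I would record these elementary computations. Using $D_ix^k=\delta^k_i$, $D_iu^{\alpha}_{J_1;J_2}=u^{\alpha}_{J_1+\bold{1}_i;J_2}$, $S_jx^k=x^k$, $S_jn^l=n^l+\delta^l_j$ and $S_ju^{\alpha}_{J_1;J_2}=u^{\alpha}_{J_1;J_2+\bold{1}_j}$, a direct calculation gives
\begin{equation}
[Y,D_i](x^k)=-D_i\xi^k, \qquad [Y,D_i](u^{\alpha}_{J_1;J_2})=\phi^{\alpha}_{J_1+\bold{1}_i;J_2}-D_i\phi^{\alpha}_{J_1;J_2},
\end{equation}
and
\begin{equation}
[Y,S_j](x^k)=\xi^k-S_j\xi^k, \qquad [Y,S_j](u^{\alpha}_{J_1;J_2})=\phi^{\alpha}_{J_1;J_2+\bold{1}_j}-S_j\phi^{\alpha}_{J_1;J_2}.
\end{equation}
Hence $[Y,D_i]=[Y,S_j]=0$ for all $i,j$ is equivalent to the four conditions $D_i\xi^k=0$, $S_j\xi^k=\xi^k$, $\phi^{\alpha}_{J_1+\bold{1}_i;J_2}=D_i\phi^{\alpha}_{J_1;J_2}$ and $\phi^{\alpha}_{J_1;J_2+\bold{1}_j}=S_j\phi^{\alpha}_{J_1;J_2}$.

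For the forward implication I would first use $S_j\xi^k=\xi^k$. Writing the shift as $(S_j\xi^k)(x,n,\{u^{\alpha}_{J_1;J_2}\})=\xi^k(x,n+\bold{1}_j,\{u^{\alpha}_{J_1;J_2+\bold{1}_j}\})$ and recalling that $\xi^k$ depends on only finitely many coordinates, invariance under every $S_j$ forces $\xi^k$ to be free of all $u$-coordinates (otherwise it would have to depend on an entire infinite orbit $u^{\alpha}_{J_1;J_2+m\bold{1}_j}$) and free of $n$; hence $\xi^k=\xi^k(x)$. Then $D_i\xi^k=0$ reduces to $\partial_{x^i}\xi^k=0$, so $\xi^k=c^k$ is constant. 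The two recursions on $\phi^{\alpha}_{J_1;J_2}$ are solved by telescoping: fixing $J_2$ and inducting on $J_1$ gives $\phi^{\alpha}_{J_1;J_2}=D_{J_1}\phi^{\alpha}_{\bold{0};J_2}$, while inducting on $J_2$ at $J_1=\bold{0}$ gives $\phi^{\alpha}_{\bold{0};J_2}=S_{J_2}\phi^{\alpha}_{\bold{0};\bold{0}}$, and together, setting $Q^{\alpha}:=\phi^{\alpha}_{\bold{0};\bold{0}}$, we obtain $\phi^{\alpha}_{J_1;J_2}=D_{J_1}S_{J_2}Q^{\alpha}$; these recursions are mutually consistent because the $D_i$ commute among themselves and with the $S_j$. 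This is precisely $Y=c^i\partial_{x^i}+\bold{pr}X$ with $\bold{pr}X$ evolutionary. The converse is the easy direction: for $\xi^k=c^k$ the first two conditions are immediate, and for $\phi^{\alpha}_{J_1;J_2}=D_{J_1}S_{J_2}Q^{\alpha}$ the remaining two follow from $D_iD_{J_1}=D_{J_1+\bold{1}_i}$ together with the commutativity $D_iS_j=S_jD_i$ already recorded in $u^{\alpha}_{J_1;J_2}=D_{J_1}S_{J_2}u=S_{J_2}D_{J_1}u$.

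I expect the only genuinely delicate point to be the step $S_j\xi^k=\xi^k\Rightarrow\xi^k=\xi^k(x)$: one must argue carefully that shift-invariance of a function with finite differential-difference support leaves no room for dependence on $n$ or on any shifted jet coordinate. The remaining ingredients—reducing the identities to the coordinate functions (with the twisted Leibniz rule handling the non-derivation $S_j$) and the two telescoping inductions—are routine.
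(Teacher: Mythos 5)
Your proposal is correct and follows essentially the same route as the paper's proof: both directions reduce to the four conditions $D_i\xi^k=0$, $S_j\xi^k=\xi^k$, $\phi^{\alpha}_{J_1+\bold{1}_i;J_2}=D_i\phi^{\alpha}_{J_1;J_2}$, $\phi^{\alpha}_{J_1;J_2+\bold{1}_j}=S_j\phi^{\alpha}_{J_1;J_2}$, obtained in the paper by evaluating the commutators on an arbitrary function $f$ and in your write-up by evaluating them on coordinates, which is the same computation. The only cosmetic differences are that you use shift-invariance first (rather than $D_i\xi^k=0$) to kill the $u$- and $n$-dependence of $\xi^k$, and you spell out the telescoping to $\phi^{\alpha}_{J_1;J_2}=D_{J_1}S_{J_2}Q^{\alpha}$ which the paper states directly.
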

\begin{proof}
We first perform $Y$ in the form (\ref{yevo}). Then $[Y,D_i]=0$ immediately follows from a similar derivation as Olver did in \cite{Olver1993} (Lemma 5.12). For a given function $f(x,n,[u])$, direct calculation yields
\begin{equation}
\begin{aligned}
\ [Y,S_j]f&=\left(\bold{pr}X+c^i\partial_{x^i}\right)(S_jf)-S_j\left(\bold{pr}X(f)+c^i\frac{\partial f}{\partial x^i}\right)\\
&= \bold{pr}X(S_jf)-S_j(\bold{pr}X(f))\\
&= \sum_{\alpha,J_1,J_2}\left(S_{J_2}D_{J_1}Q^{\alpha}\right)\frac{\partial (S_jf)}{\partial u^{\alpha}_{J_1;J_2}}-S_j\left(\sum_{\alpha,J_1',J_2'}\left(S_{J_2'}D_{J_1'}Q^{\alpha}\right)\frac{\partial f}{\partial u^{\alpha}_{J_1';J_2'}}\right)\\
&= \sum_{\alpha,J_1,J_2}\left(S_{J_2}D_{J_1}Q^{\alpha}\right)\frac{\partial (S_jf)}{\partial u^{\alpha}_{J_1;J_2}}-\sum_{\alpha,J_1',J_2'}\left(S_{J_2'+\bold{1}_j}D_{J_1'}Q^{\alpha}\right)\frac{\partial (S_jf)}{\partial u^{\alpha}_{J_1';J_2'+\bold{1}_j}}\\
&=0
\end{aligned}
\end{equation}
via changes of indices $J_1'=J_1, J_2'=J_2-\bold{1}_j$. 

Conversely, if $[Y,D_i]=0$ holds for $i=1,2,\ldots,p_1$, then we have, similarly to \cite{Olver1993}, that
\begin{equation}
D_i\xi^j=0,\quad \phi^{\alpha}_{J_1+\bold{1}_i;J_2}=D_i\phi^{\alpha}_{J_1;J_2}
\end{equation}
for all indices $\alpha,i,j,J_1,J_2$. Further if $[Y,S_j]=0$ holds for $j=1,2,\ldots,p_2$, we obtain
\begin{equation}
\begin{aligned}
0&=[Y,S_j]f\\
&=\left(\xi^i\partial_{x^i}+\sum_{\alpha,J_1,J_2}\phi^{\alpha}_{J_1;J_2}\partial_{u^{\alpha}_{J_1;J_2}}\right)(S_jf)-S_j\left(\xi^i\frac{\partial f}{\partial x^i}+\sum_{\alpha,J_1',J_2'}\phi^{\alpha}_{J_1';J_2'}\frac{\partial f}{\partial u^{\alpha}_{J_1';J_2'}}\right)\\
&=(\xi^i-S_j\xi^i)\frac{\partial (S_jf)}{\partial x^i}+\sum_{\alpha,J_1,J_2}\left(\phi^{\alpha}_{J_1;J_2}-S_j\phi^{\alpha}_{J_1;J_2-\bold{1}_j}\right)\frac{\partial (S_jf)}{\partial u^{\alpha}_{J_1;J_2}},
\end{aligned}
\end{equation}
where the same changes of indices $J_1'=J_1, J_2'=J_2-\bold{1}_j$ are used. Therefore, we have
\begin{equation}
\xi^i-S_j\xi^i=0,\quad \phi^{\alpha}_{J_1;J_2}=S_j\phi^{\alpha}_{J_1;J_2-\bold{1}_j}
\end{equation}
for all indices $\alpha,i,j,J_1,J_2$. We can conclude now that $\xi^i=c^i$ are constants and 
\begin{equation}
\phi^{\alpha}_{J_1;J_2}=D_{J_1}S_{J_2}\phi^{\alpha}
\end{equation}
for all indices $\alpha,J_1,J_2$. This finishes the proof.
\end{proof}

\begin{thm}
For two evolutionary symmetry generators $X_i=Q^{\alpha}_i(x,n,[u])\partial_{u^{\alpha}}$ ($i=1,2$) for a system of DDEs, their Lie bracket $[X_1,X_2]$ is still a symmetry generator of the evolutionary form where
\begin{equation}
[X_1,X_2]:=\left(\bold{pr}X_1(Q_2^{\alpha})-\bold{pr}X_2(Q_1^{\alpha})\right)\partial_{u^{\alpha}}.
\end{equation}
Its prolongation is 
\begin{equation}
\bold{pr}[X_1,X_2]=[\bold{pr}X_1,\bold{pr}X_2].
\end{equation}
\end{thm}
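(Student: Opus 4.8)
The plan is to establish the structural identity $\bold{pr}[X_1,X_2]=[\bold{pr}X_1,\bold{pr}X_2]$ first and then read off both assertions from it. Since $X_1$ and $X_2$ are evolutionary, each $\bold{pr}X_i$ is a prolongation with $\xi^i=0$ and $c^i=0$, so Lemma \ref{lemevo} guarantees $[\bold{pr}X_i,D_k]=0$ for all $k=1,\ldots,p_1$ and $[\bold{pr}X_i,S_j]=0$ for all $j=1,\ldots,p_2$. First I would transport this commutation property to the bracket by means of the Jacobi identity: since
\begin{equation}
[[\bold{pr}X_1,\bold{pr}X_2],D_k]=[[\bold{pr}X_1,D_k],\bold{pr}X_2]+[\bold{pr}X_1,[\bold{pr}X_2,D_k]]=0,
\end{equation}
and the analogous identity holds with $S_j$ in place of $D_k$, the commutator $[\bold{pr}X_1,\bold{pr}X_2]$ commutes with every total derivative and every shift.

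The second step invokes the converse direction of Lemma \ref{lemevo}: a vector field commuting with all $D_k$ and all $S_j$ must equal $\bold{pr}X+c^i\partial_{x^i}$ for some evolutionary prolongation $\bold{pr}X$ and constants $c^i$. Because both $\bold{pr}X_1$ and $\bold{pr}X_2$ carry no $\partial_{x^i}$ component, neither does their commutator, forcing $c^i=0$; hence $[\bold{pr}X_1,\bold{pr}X_2]=\bold{pr}X$ is itself the prolongation of an evolutionary vector field $X=Q^{\alpha}\partial_{u^{\alpha}}$. To identify its characteristic I would apply the commutator to the coordinate function $u^{\alpha}$: since $\bold{pr}X_i(u^{\alpha})=Q_i^{\alpha}$, one obtains
\begin{equation}
Q^{\alpha}=[\bold{pr}X_1,\bold{pr}X_2](u^{\alpha})=\bold{pr}X_1(Q_2^{\alpha})-\bold{pr}X_2(Q_1^{\alpha}),
\end{equation}
which is exactly the characteristic asserted in the statement, establishing $\bold{pr}[X_1,X_2]=[\bold{pr}X_1,\bold{pr}X_2]$.

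Finally, to confirm that $[X_1,X_2]$ is again a symmetry, I would apply the bracket to each $F_{\alpha}$ and use the characterisation of the linearized symmetry condition valid for the bi-Kovalevskaya form, $\bold{pr}X_i(F_{\alpha})=\sum_{\beta,J_1,J_2}K^{(i)\beta}_{\alpha;J_1,J_2}(D_{J_1}S_{J_2}F_{\beta})$. Expanding $\bold{pr}X_1(\bold{pr}X_2(F_{\alpha}))$ by the Leibniz rule and using $[\bold{pr}X_1,D_{J_1}S_{J_2}]=0$ shows that every resulting term either carries an explicit factor $D_{J_1}S_{J_2}F_{\beta}$ or contains $D_{J_1}S_{J_2}(\bold{pr}X_1(F_{\beta}))$; both families lie in the differential-difference ideal generated by the derivatives and shifts of the $F_{\beta}$, and subtracting $\bold{pr}X_2(\bold{pr}X_1(F_{\alpha}))$ preserves this, so $[\bold{pr}X_1,\bold{pr}X_2](F_{\alpha})$ vanishes on solutions.

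I expect the delicate step to be the second one. Justifying that $c^i=0$ and that the commutator is genuinely the evolutionary prolongation determined solely by its $\partial_{u^{\alpha}}$-component requires the full strength of the converse in Lemma \ref{lemevo}, applied to a field whose coefficients are built from $D_{J_1}S_{J_2}Q_i^{\alpha}$ in the mixed jet variables, where the $\widetilde{D}S$ versus $S\widetilde{D}$ ambiguity flagged earlier could in principle intrude. Since the generators are evolutionary (so $\xi=0$), however, that ambiguity disappears and the recursion $\phi^{\alpha}_{J_1;J_2}=D_{J_1}S_{J_2}\phi^{\alpha}$ holds unambiguously, so the apparent obstacle dissolves and the argument closes.
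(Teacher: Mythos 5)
Your proposal is correct and follows essentially the same route as the paper: establish $[[\bold{pr}X_1,\bold{pr}X_2],D_i]=0$ and $[[\bold{pr}X_1,\bold{pr}X_2],S_j]=0$, invoke Lemma \ref{lemevo} to conclude the commutator is an evolutionary prolongation, and read off the characteristic from the $\partial_{u^{\alpha}}$-coefficient. Your use of the Jacobi identity is a cleaner way to get the commutation relations than the paper's direct expansion, and your closing paragraph verifying that the bracket still satisfies the linearized symmetry condition (via the bi-Kovalevskaya characterisation) fills in a step the paper takes for granted; these are refinements rather than a different argument.
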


\begin{proof}
This is a differential-difference counterpart to the conclusions of the differential version (e.g. \cite{Olver1993}) and the difference version (e.g. \cite{Peng2015}). 

We only need prove that $[\bold{pr}X_1,\bold{pr}X_2]$ is of the evolutionary form and its characteristic is the same as that of $\bold{pr}[X_1,X_2]$. First we have that 
\begin{equation}
\begin{aligned}
\ [[\bold{pr}X_1,&\bold{pr}X_2],D_i]=[\bold{pr}X_1\bold{pr}X_2-\bold{pr}X_2\bold{pr}X_1,D_i]\\
&=\left(\bold{pr}X_1\bold{pr}X_2\right)\cdot D_i-\left(\bold{pr}X_2\bold{pr}X_1\right)\cdot D_i-D_i\cdot\left(\bold{pr}X_1\bold{pr}X_2\right)+D_i\cdot \left(\bold{pr}X_2\bold{pr}X_1\right)\\
&=\bold{pr}X_1\left(D_i\cdot \bold{pr}X_2\right)-\bold{pr}X_2\left(\bold{pr}X_1\cdot D_i\right)-\bold{pr}X_1\left(D_i\cdot\bold{pr}X_2\right)+\bold{pr}X_2\left(D_i\cdot \bold{pr}X_1\right)\\
&=-\bold{pr}X_2[\bold{pr}X_1,D_i]\\
&=0
\end{aligned}
\end{equation}
and 
\begin{equation}
\begin{aligned}
\ [[\bold{pr}X_1,&\bold{pr}X_2],S_j]=[\bold{pr}X_1\bold{pr}X_2-\bold{pr}X_2\bold{pr}X_1,S_j]\\
&=\left(\bold{pr}X_1\bold{pr}X_2\right)\cdot S_j-\left(\bold{pr}X_2\bold{pr}X_1\right)\cdot S_j-S_j\cdot\left(\bold{pr}X_1\bold{pr}X_2\right)+S_j\cdot \left(\bold{pr}X_2\bold{pr}X_1\right)\\
&=\bold{pr}X_1\left(S_j\cdot \bold{pr}X_2\right)-\bold{pr}X_2\left(\bold{pr}X_1\cdot S_j\right)-\bold{pr}X_1\left(S_j\cdot\bold{pr}X_2\right)+\bold{pr}X_2\left(S_j\cdot \bold{pr}X_1\right)\\
&=-\bold{pr}X_2[\bold{pr}X_1,S_j]\\
&=0
\end{aligned}
\end{equation}
Here we used the properties that $[\bold{pr}X_k,D_i]=0$ and $[\bold{pr}X_k,S_j]=0$ for all $i=1,2,\ldots,p_1$, $j=1,2,\ldots,p_2$ and $k=1,2$, see Lemma \ref{lemevo} . Therefore, we conclude that $[\bold{pr}X_1,\bold{pr}X_2]$ must be of the evolutionary form since both $\bold{pr}X_1$ and $\bold{pr}X_2$ do not have $\partial_{x^i}$ components. Characteristic of $[\bold{pr}X_1,\bold{pr}X_2]$ can be directly calculated and it is indeed 
\begin{equation}
\bold{pr}X_1(Q_2^{\alpha})-\bold{pr}X_2(Q_1^{\alpha}).
\end{equation}
This finishes the proof.
\end{proof}

As we concluded above that only prolongations of vector fields of the form 
\begin{equation}
X=\xi^i(x)\partial_{x^i}+\phi^{\alpha}(x,n,[u])\partial_{u^{\alpha}}
\end{equation}
 can be equivalently written as the evolutionary form; such type of vector fields deserves a special name and we call them {\bf regular vector fields} or {\bf regular infinitesimal generators} as they agree with the prolongation property of both differential and difference cases. Symmetries generated by regular vector fields will then be called {\bf regular symmetries} for DDEs. For instance, of all the Lie point symmetries of the Volterra equation in Example \ref{volterra}, the vector fields $X_1=-t\partial_t+u\partial_u$ and $X_3=\partial_t$ are regular infinitesimal generators.

From now on, we will only be focused on regular vector fields unless otherwise specified, namely
\begin{equation}\label{proppro}
\bold{pr}X=\xi^i(x)D_i+\sum_{\alpha,J_1,J_2}(D_{J_1}S_{J_2}Q^{\alpha}(x,n,[u]))\partial_{u^{\alpha}_{J_1;J_2}}.
\end{equation}

\begin{rem}\label{evvf}
From the prolongation formula (\ref{proppro}), it is clear that any regular symmetry generator $X=\xi^i(x)\partial_{x^i}+\phi^{\alpha}(x,n,[u])\partial_{u^{\alpha}}$ is equivalent to an evolutionary symmetry generator  $Q^{\alpha}(x,n,[u])\partial_{u^{\alpha}}$, where
\begin{equation}
Q^{\alpha}=\phi^{\alpha}-\xi^iu^{\alpha}_{\bold{1}_i;\bold{0}}.
\end{equation}

\end{rem}

\begin{defn}
For two tuples of functions $A(x,n,[u])$ and $B(x,n,[u])$, we define the {\bf differential-difference Fr\'echet derivative} as
\begin{equation}\label{ddfd}
\bold{D}_A(B):=\frac{\operatorname{d}}{\operatorname{d}\!\varepsilon}\Big|_{\varepsilon=0}A(x,n,[u+\varepsilon B(x,n,[u])]).
\end{equation}
Locally it reads
\begin{equation}
(\bold{D}_A)_{\alpha\beta}=\sum_{J_1,J_2}\frac{\partial A^{\alpha}}{\partial u_{J_1;J_2}^{\beta}}D_{J_1}S_{J_2}.
\end{equation}
\end{defn}

\begin{thm}
For a system of DDEs $\mathcal{A}=\{F_{\alpha}(x,n,[u])=0\}$, its linearized symmetry condition for an evolutionary generator $X=Q^{\alpha}(x,n,[u])\partial_{u^{\alpha}}$ is exactly  the Fr\'echet derivative of $F$ in the direction $Q$, namely
\begin{equation}
\bold{D}_F(Q)=0
\end{equation}
on all solutions.
\end{thm}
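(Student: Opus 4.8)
The plan is to show that, for an evolutionary generator, the two quantities $\bold{pr}X(F_{\beta})$ and $\bold{D}_F(Q)_{\beta}$ coincide \emph{identically} as differential-difference functions, before any restriction to the solution set. Once this identity is in hand, the equivalence is immediate: the linearized symmetry condition demands $\bold{pr}X(F_{\alpha})=0$ whenever $\mathcal{A}$ holds, and this is the same as demanding $\bold{D}_F(Q)=0$ on all solutions. So the whole content of the theorem reduces to a chain-rule computation.

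First I would record the prolongation of an evolutionary generator. Putting $\xi^i=0$ in the prolongation formula (\ref{proppro}) (equivalently specialising Theorem \ref{thm33} or Theorem \ref{thmcase1}), one has
\begin{equation}
\bold{pr}X=\sum_{\alpha,J_1,J_2}\left(D_{J_1}S_{J_2}Q^{\alpha}\right)\partial_{u^{\alpha}_{J_1;J_2}},
\end{equation}
so that, by the action of a vector field on a function,
\begin{equation}
\bold{pr}X(F_{\beta})=\sum_{\alpha,J_1,J_2}\left(D_{J_1}S_{J_2}Q^{\alpha}\right)\frac{\partial F_{\beta}}{\partial u^{\alpha}_{J_1;J_2}}.
\end{equation}

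Next I would compute the Fr\'echet derivative straight from the definition (\ref{ddfd}). The substitution $u\mapsto u+\varepsilon Q$ replaces each coordinate $u^{\alpha}_{J_1;J_2}$ appearing in $F_{\beta}$ by $D_{J_1}S_{J_2}(u^{\alpha}+\varepsilon Q^{\alpha})=u^{\alpha}_{J_1;J_2}+\varepsilon\,D_{J_1}S_{J_2}Q^{\alpha}$. Differentiating $F_{\beta}$ in $\varepsilon$ and evaluating at $\varepsilon=0$ then gives, by the chain rule,
\begin{equation}
\bold{D}_F(Q)_{\beta}=\sum_{\alpha,J_1,J_2}\frac{\partial F_{\beta}}{\partial u^{\alpha}_{J_1;J_2}}\,D_{J_1}S_{J_2}Q^{\alpha},
\end{equation}
which is exactly the local expression for $(\bold{D}_F)_{\beta\alpha}(Q^{\alpha})$ recorded in the definition. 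Comparing the two displays yields $\bold{pr}X(F_{\beta})=\bold{D}_F(Q)_{\beta}$ identically for every $\beta$, and the theorem follows.

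The computation is routine; the only point that genuinely needs justification is the commutation of the infinitesimal substitution with the operators $D_{J_1}$ and $S_{J_2}$, i.e. the claim $(u+\varepsilon Q)^{\alpha}_{J_1;J_2}=u^{\alpha}_{J_1;J_2}+\varepsilon\,D_{J_1}S_{J_2}Q^{\alpha}$. I expect this to be the main (if minor) obstacle. It holds because $\varepsilon$ is a scalar parameter carrying no dependence on the independent variables $x,n$, so that $D_{J_1}S_{J_2}(\varepsilon Q^{\alpha})=\varepsilon\,D_{J_1}S_{J_2}Q^{\alpha}$ by linearity of the total derivatives and shifts; this is precisely the analogue of the arguments already used for the purely differential Fr\'echet derivative (Remark \ref{glo}) and for its purely difference counterpart earlier in the paper.
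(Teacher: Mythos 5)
Your argument is correct and is essentially the paper's own proof: both expand $\bold{D}_F(Q)$ via its local form $\sum_{\beta,J_1,J_2}\frac{\partial F_{\alpha}}{\partial u_{J_1;J_2}^{\beta}}\left(D_{J_1}S_{J_2}Q^{\beta}\right)$ and identify this term-by-term with $\bold{pr}X(F_{\alpha})$ for an evolutionary generator. Your additional remark justifying the commutation of the substitution $u\mapsto u+\varepsilon Q$ with $D_{J_1}S_{J_2}$ is a reasonable elaboration of a step the paper leaves implicit, but it does not change the route.
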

\begin{proof}
It immediately follows from the expansion below
\begin{equation}
\left(\bold{D}_F\right)_{\alpha\beta}(Q^{\beta})=\sum_{\beta,J_1,J_2}\frac{\partial F_{\alpha}}{\partial u_{J_1;J_2}^{\beta}}\left(D_{J_1}S_{J_2}Q^{\beta}\right)
=\bold{pr}X(F_{\alpha}).
\end{equation}
\end{proof}

\begin{defn}
A {\bf conservation law} for a system of DDEs $\mathcal{A}=\{F_{\alpha}(x,n,[u])=0\}$ with $x\in\mathbb{R}^{p_1}$ and $n\in\mathbb{Z}^{p_2}$ is a $(p_1;p_2)$-tuple $P(x,n,[u])=(P_1(x,n,[u]);P_2(x,n,[u]))$ subject to the following vanishment of {\bf divergence in the differential-difference sense}, that is
\begin{equation}
\operatorname{Div}P_1+\operatorname{Div}^{\vartriangle}P_2=0
\end{equation}
on all solutions of the system, where $\operatorname{Div}$ and $\operatorname{Div}^{\vartriangle}$ are the differential and difference divergence operators, respectively.
\end{defn}
Particularly in the $(1;p_2)$-dimensional case, namely $p_1=1$, a conservation law is
\begin{equation}\label{clsofdde}
D_tP_1+\operatorname{Div}^{\vartriangle}P_2=0,
\end{equation}
which vanishes on all solutions of a system of DDEs, where the function $P_1$ is often referred as a (conserved) density and the tuple $P_2$ is the (associated) flux. Under proper boundary conditions, the summation of the second term of \eqref{clsofdde} vanishes, leading to a conserved quantity given by the summation of $P_1$.  Computational examples for conservation laws of this type can be found in, for instance \cite{Goktas1997,Hickman2003}.

For DDEs $\mathcal{A}=\{F_{\alpha}(x,n,[u])=0\}$ of the bi-Kovalevskaya form, a conservation law can be written as
\begin{equation}
\operatorname{Div}\widehat{P}_1+\operatorname{Div}^{\vartriangle}\widehat{P}_2=\sum_{\alpha,J_1,J_2}K_{J_1,J_2}^{\alpha}(D_{J_1}S_{J_2}F_{\alpha})
\end{equation}
for some functions $K^{\alpha}_{J_1,J_2}(x,n,[u])$. This can be integrated and summed by parts to achieve 
\begin{equation}
\operatorname{Div}P_1+\operatorname{Div}^{\vartriangle}P_2=Q^{\alpha}(x,n,[u])F_{\alpha},
\end{equation}
which is an equivalent {\bf conservation law of the characteristic form}. The tuple of functions $Q(x,n,[u])$ is called its {\bf characteristic}. For interested readers, various methods for the computation of conservation laws for DDEs can be found in \cite{Goktas1997,Hickman2003,Zhang2003,Hydon2005}, for instance.


Now we are ready to compute the {\bf adjoint of the differential-difference Fr\'echet derivative} (\ref{ddfd})
via both integration by parts and summation by parts, which is
\begin{equation}\label{adfddd}
\left(\bold{D}_A\right)^{\ast}_{\alpha\beta}=\sum_{J_1,J_2}\left((-D)_{J_1}S_{-J_2}\right)\cdot\frac{\partial A^{\beta}}{\partial u^{\alpha}_{J_1;J_2}}.
\end{equation}

Consider a {\bf differential-difference Lagrangian} $L(x,n,[u])$. The underlying {\bf differential-difference Euler-Lagrange equations} are 
\begin{equation}
\bold{E}_{\alpha}(L(x,n,[u]))=0,
\end{equation}
where the {\bf differential-difference Euler operator} $\bold{E}$ is defined by
\begin{equation}
\bold{E}_{\alpha}:=\sum_{J_1,J_2}(-D)_{J_1}S_{-J_2}\frac{\partial }{\partial u^{\alpha}_{J_1;J_2}}.
\end{equation}

\begin{thm}\label{nulll}
A Lagrangian $L(x,n,[u])$ is a null Lagrangian such that $\bold{E}_{\alpha}(L)\equiv 0$ if and only if $L$ is a total differential-difference divergence 
\begin{equation}
L(x,n,[u])=\operatorname{Div}P_1+\operatorname{Div}^{\vartriangle}P_2
\end{equation}
for some $(p_1;p_2)$-tuple $(P_1(x,n,[u]);P_2(x,n,[u]))$.
\end{thm}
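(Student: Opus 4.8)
The plan is to prove the two implications separately: the direction ``total divergence $\Rightarrow$ null Lagrangian'' rests on two commutation identities, while the converse rests on a linear homotopy in $u$. For the forward direction I would first record the commutation relations between the partial operator $\partial/\partial u^{\alpha}_{J_1;J_2}$ and the operators $D_i$, $S_j$, namely
\[
\frac{\partial}{\partial u^{\alpha}_{J_1;J_2}}D_i=D_i\frac{\partial}{\partial u^{\alpha}_{J_1;J_2}}+\frac{\partial}{\partial u^{\alpha}_{J_1-\bold{1}_i;J_2}},\qquad \frac{\partial}{\partial u^{\alpha}_{J_1;J_2}}S_j=S_j\frac{\partial}{\partial u^{\alpha}_{J_1;J_2-\bold{1}_j}},
\]
where the extra term in the first relation is dropped whenever $J_1-\bold{1}_i$ is not an admissible (non-negative) index. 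Substituting these into $\bold{E}_{\alpha}=\sum_{J_1,J_2}(-D)_{J_1}S_{-J_2}\,\partial/\partial u^{\alpha}_{J_1;J_2}$ and reindexing, I expect the two families of terms produced by $D_i$ to cancel in pairs, giving $\bold{E}_{\alpha}(D_if)=0$, while the shift relation reindexes exactly to give $\bold{E}_{\alpha}(S_jf)=\bold{E}_{\alpha}(f)$, hence $\bold{E}_{\alpha}((S_j-\operatorname{id})f)=0$. Since $\operatorname{Div}P_1=D_iP_1^i$ and $\operatorname{Div}^{\vartriangle}P_2=\sum_j(S_j-\operatorname{id})P_2^j$, linearity of $\bold{E}_{\alpha}$ then yields $\bold{E}_{\alpha}(\operatorname{Div}P_1+\operatorname{Div}^{\vartriangle}P_2)\equiv0$, which is the forward direction.

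For the converse I would first establish the differential-difference integration-and-summation-by-parts identity: for every tuple $Q$,
\[
\sum_{\alpha,J_1,J_2}\frac{\partial L}{\partial u^{\alpha}_{J_1;J_2}}\,D_{J_1}S_{J_2}Q^{\alpha}=Q^{\alpha}\bold{E}_{\alpha}(L)+\operatorname{Div}A_1+\operatorname{Div}^{\vartriangle}A_2
\]
for explicit tuples $A_1=A_1(Q,[L])$ and $A_2=A_2(Q,[L])$, obtained by transferring each $D_{J_1}$ off $Q^{\alpha}$ through iterated integration by parts (producing $\operatorname{Div}A_1$) and each $S_{J_2}$ off $Q^{\alpha}$ through iterated summation by parts based on \eqref{sbps} (producing $\operatorname{Div}^{\vartriangle}A_2$); the fully transferred terms reassemble into $Q^{\alpha}\bold{E}_{\alpha}(L)$ by the very definition of $\bold{E}_{\alpha}$. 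I would then apply this with $Q=u$ along the linear homotopy $[\lambda u]$, $\lambda\in[0,1]$. Because the scaling $u\mapsto\lambda u$ commutes with $D_i$ and $S_j$, one has $D_i(f([\lambda u]))=(D_if)([\lambda u])$ and $S_j(f([\lambda u]))=(S_jf)([\lambda u])$, so that
\[
\frac{\operatorname{d}}{\operatorname{d}\!\lambda}L([\lambda u])=\sum_{\alpha,J_1,J_2}\frac{\partial L}{\partial u^{\alpha}_{J_1;J_2}}([\lambda u])\,D_{J_1}S_{J_2}u^{\alpha}=u^{\alpha}\bold{E}_{\alpha}(L)([\lambda u])+\operatorname{Div}B_1(\lambda)+\operatorname{Div}^{\vartriangle}B_2(\lambda).
\]
Since $\bold{E}_{\alpha}(L)\equiv0$ holds identically on the jet space, the first term vanishes even after substituting $[\lambda u]$. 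Integrating over $\lambda$ and interchanging $\int_0^1\operatorname{d}\!\lambda$ with the $x$- and $n$-operators (legitimate as they act on disjoint variables) produces $L([u])-L(x,n,[0])=\operatorname{Div}P_1+\operatorname{Div}^{\vartriangle}P_2$ with $P_1=\int_0^1B_1\operatorname{d}\!\lambda$ and $P_2=\int_0^1B_2\operatorname{d}\!\lambda$.

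Finally I would absorb the base term $L(x,n,[0])$, a function of the independent variables alone, into a divergence: assuming at least one independent variable is present, it equals $\partial_{x^1}\big(\int^{x^1}L(\cdot,[0])\big)$ in the differential case, or a telescoping difference divergence $\sum_{k<n^1}L(\cdot,[0])$ in the purely difference case, completing $L=\operatorname{Div}P_1+\operatorname{Div}^{\vartriangle}P_2$. I expect the principal obstacle to be the by-parts identity: the simultaneous bookkeeping of integration by parts over the non-negative differential multi-index $J_1$ and summation by parts over the signed difference multi-index $J_2\in\mathbb{Z}^{p_2}$ (so that both forward shifts $S_{J_2}$ and backward shifts $S_{-J_2}$ occur), together with the verification that the transferred boundary pieces genuinely assemble into $\operatorname{Div}$ and $\operatorname{Div}^{\vartriangle}$ terms. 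The commutation identities, the chain rule along the homotopy, and the treatment of the base term are all routine by comparison.
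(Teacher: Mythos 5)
Your proposal is correct and follows essentially the same route as the paper: the forward direction is the standard fact that $\bold{E}_{\alpha}$ annihilates $D_i f$ and $(S_j-\operatorname{id})f$ (which the paper merely recalls and you justify via the commutation relations between $\partial/\partial u^{\alpha}_{J_1;J_2}$ and $D_i$, $S_j$), and the converse is the same linear homotopy $[\varepsilon u]$ combined with differential-difference integration and summation by parts, followed by absorbing the base term $L(x,n,[0])$ into a divergence. No gaps.
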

\begin{proof}
To prove $\bold{E}_{\alpha}(\operatorname{Div}P_1+\operatorname{Div}^{\vartriangle}P_2)\equiv 0$, we only need to recall that for fixed index $J_1$
\begin{equation}
\sum_{J_2}S_{-J_2}\frac{\partial }{\partial u^{\alpha}_{J_1;J_2}}(\operatorname{Div}^{\vartriangle}P_2)\equiv 0,
\end{equation}
and for fixed $J_2$
\begin{equation}
\sum_{J_1}(-D)_{J_1}\frac{\partial }{\partial u^{\alpha}_{J_1;J_2}}(\operatorname{Div}P_1)\equiv 0.
\end{equation}

The converse can be similarly proved following \cite{Olver1993}. Suppose $L(x,n,[u])$ is a null Lagrangian and consider the derivative
\begin{equation}
\frac{\operatorname{d}}{\operatorname{d}\!\varepsilon}L(x,n,[\varepsilon u])=\sum_{\alpha,J_1,J_2}u_{J_1;J_2}^{\alpha}\frac{\partial }{\partial u^{\alpha}_{J_1;J_2}}L(x,n,[\varepsilon u]).
\end{equation}
Each term can be re-arranged using integration by parts and summation by parts, and the above equality becomes
\begin{equation}\label{nlend}
\begin{aligned}
\frac{\operatorname{d}}{\operatorname{d}\!\varepsilon}L(x,n,[\varepsilon u])&=\sum_{\alpha}u^{\alpha}\sum_{J_1,J_2}(-D)_{J_1}S_{-J_2}\frac{\partial }{\partial u^{\alpha}_{J_1;J_2}}L(x,n,[\varepsilon u])+\operatorname{Div}R_1+\operatorname{Div}^{\vartriangle}R_2\\
&=u^{\alpha}\bold{E}_{\alpha}(L)(x,n,[\varepsilon u])+\operatorname{Div}R_1+\operatorname{Div}^{\vartriangle}R_2
\end{aligned}
\end{equation}
for some $(p_1;p_2)$-tuple $(R_1(\varepsilon;x,n,[u]);R_2(\varepsilon;x,n,[u]))$. Since $L$ is a null Lagrangian and therefore $\bold{E}(L)\equiv 0$. Thus, we can integrate (\ref{nlend}) with respect to $\varepsilon$ from $0$ to $1$, 
\begin{equation}
L(x,n,[u])-L(x,n,[0])=\operatorname{Div}\widehat{R}_1+\operatorname{Div}^{\vartriangle}\widehat{R}_2,
\end{equation}
where ($k=1,2$) 
\begin{equation}
\widehat{R}_k(x,n,[u])=\int_0^1 R_k(\varepsilon;x,n,[u])\operatorname{d}\!\varepsilon.
\end{equation}
Viewing $n$ as a parameter, there always exist $p_1$ numbers of functions $B(x,n)$ such that 
\begin{equation}
\operatorname{Div}B(x,n)=L(x,n,[0]).
\end{equation}
Thus a null Lagrangian $L(x,n,[u])$ is in the total differential-difference divergence form
\begin{equation}
L(x,n,[u])=\operatorname{Div}P_1+\operatorname{Div}^{\vartriangle}P_2,
\end{equation}
where 
\begin{equation}
P_1(x,n,[u])=\widehat{R}_1+B(x,n),\quad P_2(x,n,[u])=\widehat{R}_2.
\end{equation}
\end{proof}

\begin{thm}\label{leib}
For two $r$-tuples $A(x,n,[u])$ and $B(x,n,[u])$, we have that 
\begin{equation}
\bold{E}(A\cdot B)=\bold{D}_A^{\ast}(B)+\bold{D}_B^{\ast}(A).
\end{equation}
\end{thm}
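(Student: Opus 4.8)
The plan is to mirror the direct computations behind the differential identity in Remark \ref{ccld} and the difference identity in Lemma \ref{lempeng}, exploiting the fact that the differential-difference Euler operator $\bold{E}_{\alpha}$ and the adjoint Fr\'echet derivative \eqref{adfddd} are assembled from exactly the same operator $(-D)_{J_1}S_{-J_2}$. Writing the dot product of the two $r$-tuples as $A\cdot B=A^{\gamma}B^{\gamma}$ (summation over the tuple index $\gamma=1,\ldots,r$), I would first apply the definition of the Euler operator to obtain, componentwise in the dependent-variable index $\alpha$,
\begin{equation}
\bold{E}_{\alpha}(A\cdot B)=\sum_{\gamma,J_1,J_2}(-D)_{J_1}S_{-J_2}\left(\frac{\partial (A^{\gamma}B^{\gamma})}{\partial u^{\alpha}_{J_1;J_2}}\right).
\end{equation}

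Next I would invoke the ordinary Leibniz rule for the partial derivative $\partial/\partial u^{\alpha}_{J_1;J_2}$, which treats each jet coordinate $u^{\alpha}_{J_1;J_2}$ as an independent variable, to split the integrand as
\begin{equation}
\frac{\partial (A^{\gamma}B^{\gamma})}{\partial u^{\alpha}_{J_1;J_2}}=\frac{\partial A^{\gamma}}{\partial u^{\alpha}_{J_1;J_2}}B^{\gamma}+A^{\gamma}\frac{\partial B^{\gamma}}{\partial u^{\alpha}_{J_1;J_2}}.
\end{equation}
Substituting this back and separating the two contributions yields a pair of sums, each of the shape $\sum_{\gamma,J_1,J_2}(-D)_{J_1}S_{-J_2}\bigl(\tfrac{\partial(\cdot)}{\partial u^{\alpha}_{J_1;J_2}}\,(\cdot)\bigr)$.

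The final step is merely to read these sums against the definition of the adjoint Fr\'echet derivative \eqref{adfddd}: upon renaming $\gamma$ to the component index $\beta$, the first sum is precisely the $\alpha$-component of $\bold{D}_A^{\ast}(B)$ and the second is the $\alpha$-component of $\bold{D}_B^{\ast}(A)$, which establishes the claimed identity for every $\alpha$ and hence as an equality of tuples.

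I do not expect any genuine obstacle here, since no integration by parts nor summation by parts is required --- the adjoint structure is already encoded inside $\bold{E}_{\alpha}$, exactly as in the differential and difference cases. The only point demanding care is the index bookkeeping: confirming that the contracted tuple index $\gamma$ in $A\cdot B$ indeed plays the role of the $A$- (respectively $B$-) component index $\beta$ appearing in $(\bold{D}_A)^{\ast}_{\alpha\beta}$, and that the operator $(-D)_{J_1}S_{-J_2}$ is understood to act on the whole product standing to its right rather than on the partial derivative alone. Once these conventions are matched, the argument reproduces the differential identity of Remark \ref{ccld} and the difference identity of Lemma \ref{lempeng} verbatim, and the proof closes.
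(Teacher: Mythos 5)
Your proposal is correct and follows exactly the paper's own argument: the paper also proves the identity by expanding $\bold{E}_{\alpha}(A\cdot B)$ via the Leibniz rule for $\partial/\partial u^{\alpha}_{J_1;J_2}$ and matching the two resulting sums against the definition \eqref{adfddd} of the adjoint Fr\'echet derivative. Your remarks on index bookkeeping and on $(-D)_{J_1}S_{-J_2}$ acting on the entire product are exactly the right points of care, and nothing further is needed.
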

\begin{proof}
This is the differential-difference correspondence to the results in Remark \ref{ccld} (the differential version) and Lemma \ref{lempeng} (the difference version). It can be immediately proved using the adjoint operator of the Fr\'echet derivative (\ref{adfddd}) and the Leibniz rule
\begin{equation}
\bold{E}_{\alpha}(A\cdot B)=\sum_{\beta,J_1,J_2}(-D)_{J_1}S_{-J_2}\left(\frac{\partial A^{\beta}}{\partial u^{\alpha}_{J_1;J_2}}B^{\beta}+\frac{\partial B^{\beta}}{\partial u^{\alpha}_{J_1;J_2}}A^{\beta}\right).
\end{equation}
\end{proof}

\begin{cor}
Let $\mathcal{A}=\{F_{\alpha}(x,n,[u])=0\}$ be a system of DDEs. Let $(P_1;P_2)$ be a conservation law of the characteristic form:
\begin{equation}
\operatorname{Div}P_1+\operatorname{Div}^{\vartriangle}P_2=Q^{\alpha}F_{\alpha}.
\end{equation}
Then the characteristic $Q$ satisfies  
\begin{equation}
\bold{D}_F^{\ast}(Q)=0
\end{equation}
on all solutions of the system $\mathcal{A}$.
\end{cor}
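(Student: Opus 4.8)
The plan is to follow the short argument used for the difference case after Lemma~\ref{lempeng}, now powered by the differential-difference machinery assembled above. First I would specialise the Leibniz-type identity of Theorem~\ref{leib} to the tuples $A=Q$ and $B=F$, obtaining
\begin{equation}
\bold{E}(Q\cdot F)=\bold{D}_Q^{\ast}(F)+\bold{D}_F^{\ast}(Q).
\end{equation}
By the hypothesis that $(P_1;P_2)$ is a conservation law of the characteristic form, the product $Q\cdot F=Q^{\alpha}F_{\alpha}$ equals $\operatorname{Div}P_1+\operatorname{Div}^{\vartriangle}P_2$, that is, a total differential-difference divergence.

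Next I would invoke Theorem~\ref{nulll}: any total differential-difference divergence is a null Lagrangian, so $\bold{E}(Q\cdot F)\equiv 0$ holds identically, independently of any solution constraint. Substituting this into the displayed identity collapses it to the global relation
\begin{equation}
\bold{D}_Q^{\ast}(F)=-\bold{D}_F^{\ast}(Q),
\end{equation}
valid everywhere, not merely on solutions.

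It then remains to show that $\bold{D}_Q^{\ast}(F)$ vanishes on the solution section, whence $\bold{D}_F^{\ast}(Q)=0$ there as well. For this I would use the explicit adjoint~\eqref{adfddd}, writing in components
\begin{equation}
\left(\bold{D}_Q^{\ast}(F)\right)_{\alpha}=\sum_{\beta,J_1,J_2}(-D)_{J_1}S_{-J_2}\left(\frac{\partial Q^{\beta}}{\partial u^{\alpha}_{J_1;J_2}}\,F_{\beta}\right).
\end{equation}
Expanding each operator $(-D)_{J_1}S_{-J_2}$ by the Leibniz rule, every resulting summand carries a factor of the form $D_{K_1}S_{K_2}F_{\beta}$. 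I expect the main (mild) obstacle to lie precisely here: one must read \emph{on all solutions} in the prolonged sense, so that not only $F_{\beta}=0$ but every differential-difference consequence $D_{K_1}S_{K_2}F_{\beta}=0$ holds on the solution section of the bi-Kovalevskaya system. Granting this, each summand vanishes on solutions, hence $\bold{D}_Q^{\ast}(F)=0$ there, and the global relation yields $\bold{D}_F^{\ast}(Q)=0$ on all solutions of $\mathcal{A}$, as claimed.
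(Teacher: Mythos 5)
Your proposal is correct and follows essentially the same route as the paper's proof: apply Theorem~\ref{leib} with $A=Q$, $B=F$, use Theorem~\ref{nulll} to get $\bold{E}(Q\cdot F)\equiv 0$, and then observe from the explicit form of the adjoint that $\bold{D}_Q^{\ast}(F)$ vanishes on solutions, so $\bold{D}_F^{\ast}(Q)=0$ there. Your added remark that ``on all solutions'' must be read in the prolonged sense (so that all derivatives and shifts of $F_{\beta}$ also vanish) is a correct and slightly more careful articulation of the final step than the paper gives, but it is not a different argument.
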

\begin{proof}
From Theorem \ref{nulll}, because $Q\cdot F$ is in the total divergence form, we have $\bold{E}(Q\cdot F)\equiv 0$. Then Theorem \ref{leib} gives
\begin{equation}
0=\bold{D}_Q^{\ast}(F)+\bold{D}_F^{\ast}(Q).
\end{equation}
Further, $\bold{D}_Q^{\ast}(F)$ always vanishes on all solutions of the system $\mathcal{A}$ as it reads
\begin{equation}
\bold{D}_Q^{\ast}(F)=\sum_{\beta,J_1,J_2}(-D)_{J_1}S_{-J_2}\left(\frac{\partial Q^{\beta}}{\partial u_{J_1;J_2}}F_{\beta}\right).
\end{equation}
This completes the proof. 
\end{proof}

\begin{defn}[Variational symmetry criterion]
A regular vector field 
\begin{equation}
X=\xi^{i}(x)\partial_{x^i}+\phi^{\alpha}(x,n,[u])\partial_{u^{\alpha}}
\end{equation}
generates a group of {\bf (divergence) variational symmetries} for a Lagrangian $L(x,n,[u])$ if there exists a $(p_1;p_2)$-tuple $(P_1(x,n,[u]);P_2(x,n,[u]))$ subject to 
\begin{equation}\label{dvsdde}
\bold{pr}X(L)+L(D_i\xi^i)=\operatorname{Div}P_1+\operatorname{Div}^{\vartriangle}P_2.
\end{equation}
\end{defn}
Taking the prolongation formula (\ref{proppro}) into consideration, equation (\ref{dvsdde}) is equivalent to the existence of a $(p_1;p_2)$-tuple $(P_1(x,n,[u]);P_2(x,n,[u]))$ such that 
\begin{equation}\label{evovs}
\bold{pr}X(L)=\operatorname{Div}P_1+\operatorname{Div}^{\vartriangle}P_2,
\end{equation}
where now $X=Q^{\alpha}(x,n,[u])\partial_{u^{\alpha}}$ is an evolutionary vector field. A practical method for calculating (divergence) variational symmetries is due to Theorem \ref{nulll}: Applying the differential-difference Euler operator to the equality \eqref{evovs}, we have
\begin{equation}\label{facteo}
\bold{E}\left(\bold{pr}X(L)\right)\equiv 0.
\end{equation}
To get the whole classification of (divergence) variational symmetries is often challenging; however, it can be simplified when particular ansatz is used, namely to search for symmetries of particular form.

%
%
%
%

\begin{thm}[Noether's theorem for DDEs]\label{NTDDE}
Consider a regular infinitesimal generator $X$ of a group of symmetries for a differential-difference variational problem with Lagrangian $L(x,n,[u])$:
\begin{equation}
X=\xi^i(x)\partial_{x^i}+\phi^{\alpha}(x,n,[u])\partial_{u^{\alpha}}.
\end{equation}
Its characteristic is 
\begin{equation}
Q^{\alpha}(x,n,[u])=\phi^{\alpha}-\xi^iu^{\alpha}_{\bold{1}_i;\bold{0}}.
\end{equation}
Then $Q$ is also a characteristic of a conservation law for the corresponding Euler-Lagrange equations. Namely there exists a $(p_1;p_2)$-tuple $(P_1(x,n,[u]);P_2(x,n,[u]))$ such that 
\begin{equation}
\operatorname{Div}P_1+\operatorname{Div}^{\vartriangle}P_2=Q^{\alpha}\bold{E}_{\alpha}(L).
\end{equation}
\end{thm}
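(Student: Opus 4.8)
The plan is to mirror the classical Noether argument (as in the differential and difference cases reviewed in Section~2), carrying it to the differential-difference setting by combining integration by parts with summation by parts. First I would use regularity to pass to the evolutionary picture: since $X=\xi^i(x)\partial_{x^i}+\phi^\alpha\partial_{u^\alpha}$ is regular, Remark~\ref{evvf} allows me to replace $X$ by the equivalent evolutionary generator $Q^\alpha\partial_{u^\alpha}$ with $Q^\alpha=\phi^\alpha-\xi^iu^\alpha_{\bold{1}_i;\bold{0}}$, and the variational symmetry condition reduces to the evolutionary form \eqref{evovs}, namely
\begin{equation}
\bold{pr}X(L)=\operatorname{Div}P_1+\operatorname{Div}^\vartriangle P_2
\end{equation}
for some $(p_1;p_2)$-tuple $(P_1;P_2)$.

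The heart of the proof is to rewrite the left-hand side in characteristic form. Writing out the prolongation gives
\begin{equation}
\bold{pr}X(L)=\sum_{\alpha,J_1,J_2}\left(D_{J_1}S_{J_2}Q^\alpha\right)\frac{\partial L}{\partial u^\alpha_{J_1;J_2}},
\end{equation}
which is exactly the Fr\'echet derivative $\bold{D}_L(Q)$. I would then transfer every differential operator $D_{J_1}$ off $Q^\alpha$ by integration by parts and every shift $S_{J_2}$ off $Q^\alpha$ by summation by parts, precisely the manipulation already performed in the proof of Theorem~\ref{nulll} (compare \eqref{nlend}). Since each transfer of an operator contributes a total differential divergence or a total difference divergence, the result is
\begin{equation}
\bold{pr}X(L)=Q^\alpha\sum_{J_1,J_2}(-D)_{J_1}S_{-J_2}\frac{\partial L}{\partial u^\alpha_{J_1;J_2}}+\operatorname{Div}\widetilde R_1+\operatorname{Div}^\vartriangle\widetilde R_2
\end{equation}
for some tuple $(\widetilde R_1;\widetilde R_2)$. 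The coefficient multiplying $Q^\alpha$ is, by the definition of the differential-difference Euler operator, exactly $\bold{E}_\alpha(L)$; equivalently, this is the adjoint relation built into \eqref{adfddd} together with the elementary observation that $\bold{D}_L^\ast(1)=\bold{E}(L)$.

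It then remains only to combine the two displays. Subtracting the characteristic identity from the variational symmetry condition gives
\begin{equation}
\operatorname{Div}\left(P_1-\widetilde R_1\right)+\operatorname{Div}^\vartriangle\left(P_2-\widetilde R_2\right)=Q^\alpha\bold{E}_\alpha(L),
\end{equation}
so putting $\widehat P_1=P_1-\widetilde R_1$ and $\widehat P_2=P_2-\widetilde R_2$ yields the asserted conservation law in characteristic form. I expect the only genuine obstacle to lie in the central integration/summation-by-parts step: one must check that $D_{J_1}S_{J_2}$ transposes to $(-D)_{J_1}S_{-J_2}$ under the combined pairing (using that $-D_i$ is the adjoint of $D_i$ and $S_{-1}$ the adjoint of $S$), and --- crucially for the differential-difference case --- that the accumulated boundary terms split cleanly into a differential divergence $\operatorname{Div}\widetilde R_1$ plus a difference divergence $\operatorname{Div}^\vartriangle\widetilde R_2$. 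This separation succeeds because the derivative and shift operators commute and act on disjoint families of indices, but it is the one place where the mixed character, rather than either pure case, must be handled with care.
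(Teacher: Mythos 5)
Your proposal is correct and follows essentially the same route as the paper's proof: substitute the prolongation formula into the variational symmetry criterion, then transfer $D_{J_1}S_{J_2}$ onto the coefficient $\partial L/\partial u^{\alpha}_{J_1;J_2}$ as $(-D)_{J_1}S_{-J_2}$ via integration and summation by parts, recognise the differential-difference Euler operator, and absorb the accumulated divergences into the conserved tuple. The only cosmetic difference is that you invoke the evolutionary form \eqref{evovs} from the outset, whereas the paper starts from \eqref{dvsdde} and collects the term $\xi^iD_iL+L\,D_i\xi^i=\operatorname{Div}(L\xi)$ explicitly before integrating by parts; the two are equivalent by the paper's own remark.
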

\begin{proof}
Substituting the prolongation formula (\ref{proppro}) into the variational symmetry criterion (\ref{dvsdde}), we get that there exists some $(p_1;p_2)$-tuple $(\widehat{P}_1(x,n,[u]);\widehat{P}_2(x,n,[u]))$ such that 
\begin{equation}
\begin{aligned}
\operatorname{Div}\widehat{P}_1+\operatorname{Div}^{\vartriangle}\widehat{P}_2&=\bold{pr}X(L)+L(D_i\xi^i)\\
&=\sum_{\alpha,J_1,J_2}(D_{J_1}S_{J_2}Q^{\alpha})\frac{\partial L}{\partial u^{\alpha}_{J_1;J_2}}+\xi^i(D_iL)+L(D_i\xi^i)\\
&=\sum_{\alpha,J_1,J_2}(D_{J_1}S_{J_2}Q^{\alpha})\frac{\partial L}{\partial u^{\alpha}_{J_1;J_2}}+\operatorname{Div}(L\xi).
\end{aligned}
\end{equation}
The first term on the right hand side of the equality can be integrated and summed by parts:
\begin{equation}
\begin{aligned}
\sum_{\alpha,J_1,J_2}(D_{J_1}S_{J_2}Q^{\alpha})\frac{\partial L}{\partial u^{\alpha}_{J_1;J_2}}&=\sum_{\alpha,J_1,J_2}(S_{J_1}Q^{\alpha})(-D)_{J_1}\left(\frac{\partial L}{\partial u^{\alpha}_{J_1;J_2}}\right)+\operatorname{Div}R_1\\
&=\sum_{\alpha}Q^{\alpha}\sum_{J_1,J_2}S_{-J_2}(-D)_{J_1}\left(\frac{\partial L}{\partial u^{\alpha}_{J_1;J_2}}\right)+\operatorname{Div}R_1+\operatorname{Div}^{\vartriangle}R_2\\
&=Q^{\alpha}\bold{E}_{\alpha}(L)+\operatorname{Div}R_1+\operatorname{Div}^{\vartriangle}R_2.
\end{aligned}
\end{equation}
Therefore, we obtain that
\begin{equation}
\operatorname{Div}P_1+\operatorname{Div}^{\vartriangle}P_2=Q^{\alpha}\bold{E}_{\alpha}(L),
\end{equation}
where 
\begin{equation}
P_1=\widehat{P}_1-R_1-L\xi,\quad P_2=\widehat{P}_2-R_2.
\end{equation}
\end{proof}

In the following examples, unless otherwise specified, we let $t$ be the continuous independent variable and $n$ be the discrete independent variable. Both of them are assumed to be one-dimensional. 

\begin{exm}
Consider the following Lagrangian 
\begin{equation}
L=-\frac{(u')^2}{2}+\frac{au^2}{2}+\frac{b+cn}{2}\left(u_1-u\right)^2,
\end{equation}
where $a,b,c$ are positive constants. Its Euler-Lagrange equation is
\begin{equation}
u''+au-(b+cn)(u_1-u)+(b+c(n-1))(u-u_{-1})=0,
\end{equation}
which describes a small vibration of a compound pendulum consisting of a light string and a large mass at the end (e.g. \cite{Bateman1943}). One can immediately check that $\cos (\sqrt{a}t)$, $\sin(\sqrt{a}t)$ and $u'$ are characteristics of divergence variational symmetries, for example, by checking $\bold{E}(\bold{pr}X(L))=0$  (i.e. equality \eqref{facteo}), where $X=Q(t,n,[u])\partial_u$ and $Q$ are the characteristics. As a consequence of Theorem \ref{NTDDE}, they contribute to three characteristics of conservation laws for the Euler-Lagrange equation. They are
\begin{equation}
\begin{aligned}
D_t\{\cos(\sqrt{a}t)u'+\sqrt{a}\sin(\sqrt{a}t)u\}+(S-\operatorname{id})\{-\cos(\sqrt{a}t)(b+c(n-1))(u-u_{-1})\}&=\cos(\sqrt{a}t)\bold{E}(L),\\
D_t\{\sin(\sqrt{a}t)u'-\sqrt{a}\cos(\sqrt{a}t)u\}+(S-\operatorname{id})\{-\sin(\sqrt{a}t)(b+c(n-1))(u-u_{-1})\}&=\sin(\sqrt{a}t)\bold{E}(L),\\
D_t\left\{\frac{(u')^2}{2}+\frac{au^2}{2}+\frac{b+cn}{2}(u_1-u)^2\right\}+(S-\operatorname{id})\{-u'(b+c(n-1))(u-u_{-1})\}&=u'\bold{E}(L).
\end{aligned}
\end{equation}
\end{exm}

\begin{exm}
Consider the Toda lattice (e.g. \cite{Flaschka1974a,Flaschka1974b,Toda1970})
\begin{equation}
u''+\exp(u-u_1)-\exp(u_{-1}-u)=0,
\end{equation}
which is one of the well-known discretisations for the KdV equation. It is integrable and admits infinitely many symmetries \cite{Yamilov2006}. Here we consider its conservation laws using the governing Lagrangian 
\begin{equation}
L=-\frac{(u')^2}{2}+\exp(u-u_1).
\end{equation}
It admits variational symmetries with the following characteristics 
\begin{equation}
Q_1=1,~Q_2=t,~Q_3=u',
\end{equation}
leading to conservation laws
\begin{equation}
\begin{aligned}
D_t(u')+(S-\operatorname{id})\exp(u_{-1}-u)&=\bold{E}(L),\\
D_t(tu'-u)+(S-\operatorname{id})(t\exp(u_{-1}-u))&=t\bold{E}(L),\\
D_t\left(\frac{(u')^2}{2}+\exp(u-u_1)\right)+(S-\operatorname{id})(u'\exp(u_{-1}-u))&=u'\bold{E}(L).
\end{aligned}
\end{equation}
\end{exm}

\begin{exm}
The KdV equation 
\begin{equation}
u_t+uu_x+u_{xxx}=0
\end{equation} 
can be rewritten as
\begin{equation}
v_{tx}+v_xv_{xx}+v_{xxxx}=0,
\end{equation} 
introducing $v_x=u$. The latter is governed by a Lagrangian 
\begin{equation}\label{lofkdv}
L=-\frac{v_tv_x}{2}-\frac{v_x^3}{6}+\frac{v_{xx}^2}{2},
\end{equation}
which admits the following symmetries 
\begin{equation}
Q_1=1,~Q_2=v_x,~Q_3=v_x^2+2v_{xxx},~Q_4=t.
\end{equation}
Hence they contribute to four distinct conservation laws. The first three can be changed back to conservation laws of the original equation using the same transformation $v_x=u$ and they become
\begin{equation}\label{clkdv}
\begin{aligned}
D_tu+D_x\left(\frac{1}{2}u^2+u_{xx}\right)&=F,\\
D_t\left(\frac{1}{2}u^2\right)+D_x\left(\frac{1}{3}u^3+uu_{xx}-\frac{1}{2}u_x^2\right)&=uF,\\
D_t\left(\frac{1}{3}u^3-u_x^2\right)+D_x\left(\frac{1}{4}u^4+u^2u_{xx}+2u_xu_t+u_{xx}^2\right)&=\left(u^2+2u_{xx}\right)F,
\end{aligned}
\end{equation}
where $F=u_t+uu_x+u_{xxx}$. However, the last one with characteristic $Q_4=t$ can not be transformed back because its flux depends on $v$. These conservation laws \eqref{clkdv} are respectively (equivalent to) the conservation of mass, the conservation of momentum and the conservation of energy (e.g. \cite{DJ1989}).  

Next we consider semi-discretisations of the KdV equation. As we will see, by properly choosing semi-discretisations, we may preserve multiple symmetries and/or multiple conservation laws simultaneously.

We start with semi-discretisations of the Lagrangian \eqref{lofkdv}, for instance
\begin{equation}
L_1=-\frac{v'}{2}(v_1-v)-\frac{(v_1-v)^3}{6}+\frac{(v_1-2v+v_{-1})^2}{2}.
\end{equation}
Now $v'=v_t$. The underlying DDE (i.e. the Euler-Lagrange equation $\bold{E}(L_1)=0$) is
\begin{equation}
\frac{v_1'-v_{-1}'}{2}+\frac{(v_1-v)^2-(v-v_{-1})^2}{2}+v_2-4v_1+6v-4v_{-1}+v_{-2}=0.
\end{equation}
It becomes a semi-discretisation of the original KdV equation, introducing $v-v_{-1}=u$, and it reads
\begin{equation}\label{ddekdv1}
\frac{u'_1+u'}{2}+\frac{u_1^2-u^2}{2}+u_2-3u_1+3u-u_{-1}=0.
\end{equation}
In this case, symmetries with characteristics $Q_1=1$ and $Q_4=t$ are preserved, namely they are still variational symmetries of $L_1$ and hence contributes to conservation laws of the Euler-Lagrange equation. The first one becomes a conservation law of the semi-discretised equation \eqref{ddekdv1}:
\begin{equation}
D_t\left(\frac{u_1+u}{2}\right)+(S-\operatorname{id})\left(\frac{1}{2}u^2+u_1-2u+u_{-1}\right)=F_1,
\end{equation}
where $F_1$ is the left hand side of \eqref{ddekdv1}.

On the other side, let us consider semi-discretisations by discretising time $t$. For instance, consider the following differential-difference Lagrangian
\begin{equation}
L_2=-\frac{v_1-v}{2}\frac{v_1'+v'}{2}-\frac{(v')^3}{6}+\frac{(v'')^2}{2}.
\end{equation}
Now `dash' denotes derivatives with respect to $x$, for example $v'=v_x$ and so forth, while $n$ is the discretised time. Its Euler-Lagrange equation is 
\begin{equation}
\frac{v_1'-v_{-1}'}{2}+v'v''+v''''=0,
\end{equation}
which becomes a semi-discretisation of the original KdV equation using $v'=u$, namely 
\begin{equation}\label{kdvdde2}
\frac{u_1-u_{-1}}{2}+uu'+u'''=0.
\end{equation}
Now symmetries with characteristics $Q_1,Q_2,Q_4$ are preserved and they become 
\begin{equation}
Q_1=1,~~Q_2=v',~~Q_4=n.
\end{equation} 
They yield three conservation laws of the Euler-Lagrange equation; the first two become conservation laws of the DDE \eqref{kdvdde2}:
\begin{equation}
\begin{aligned}
(S-\operatorname{id})\left(\frac{u_1+u}{2}\right)+D_x\left(\frac{1}{2}u^2+u''\right)&=F_2,\\
(S-\operatorname{id})\left(\frac{uu_{-1}}{2}\right)+D_x\left(\frac{1}{3}u^3+uu''-\frac{1}{2}(u')^2\right)&=uF_2,\\
\end{aligned}
\end{equation}
Here $F_2$ is the left hand side of \eqref{kdvdde2}.

\end{exm}

\begin{exm}\label{vsve}
Consider the Volterra equation (Example \ref{volterra}) again. Introduce a new variable via
\begin{equation}
u=\exp(v_1-v_{-1}),
\end{equation}
and we have a new differential-difference equation
\begin{equation}
v_1'-v_{-1}'=\exp(v_2-v)-\exp(v-v_{-2}),
\end{equation}
which admits a differential-difference Lagrangian 
\begin{equation}
L=v(v_1'-v')+\exp(v_2-v).
\end{equation}
The following Table \ref{clv} includes several regular (evolutionary) variational symmetries $X=Q\partial_v$ and corresponding conservation laws
\begin{equation}
D_tP_1+(S-\operatorname{id})P_2=Q\bold{E}(L).
\end{equation}

\begin{table}[h]
\caption{Some conservation laws of the Volterra equation}\label{clv} 
\centering 
\begin{tabular}{l l l l} 
\vspace{0.05cm}
 \textbf{Characteristics} & \textbf{Conservation laws} \\ [0.5ex] 
\hline 
$Q=1$ & $P_1=v_1-v_{-1}$\vspace{0.08cm}\\
& $P_2=-\exp(v_1-v_{-1})-\exp(v-v_{-2})$\\[1ex]
\hline
$Q=(-1)^n$ & $P_1=(-1)^n(v_1-v_{-1})$\vspace{0.08cm}\\
 &  $P_2=(-1)^n\exp(v_2-v)-(-1)^n\exp(v-v_{-2})$\\
 [1ex]
 \hline
 $Q=f(t)$ &$P_1=0$\vspace{0.08cm}\\
 & $P_2=f(t)\left(v'+v_{-1}'-\exp(v_1-v_{-1})-\exp(v-v_{-2})\right)$\\
[1ex] 
\hline 
\end{tabular}
\label{table:nonlin} 
\end{table}
\end{exm}

There exist many other integrable DDEs, which are variational either as what they stand or after introducing a change of variables. For instance,
\begin{itemize}
\item The modified Volterra equation (e.g. \cite{Yamilov2006,Khanizadeh2013})
\begin{equation}
u'=u^2(u_1-u_{-1}).
\end{equation}
A change of variables is
\begin{equation}
u=\frac{1}{v_1-v_{-1}};
\end{equation}
 the Lagrangian is
\begin{equation}
L=v(v'_1-v')-\ln (v_2-v).
\end{equation}

\item The Ablowitz-Ramani-Segur (Gerdjikov-Ivanov) lattice equation (e.g. \cite{Khanizadeh2013,Tsu2002})
\begin{equation}
\left\{
\begin{array}{l}
u'=(au_1-bu_{-1})(1+uv)(1-uv_1)\vspace{0.25cm}\\
v'=(bv_1-av_{-1})(1+uv)(1-u_{-1}v).
\end{array}
\right.
\end{equation}
Itself is variational governed by the Lagrangian 
\begin{equation}
 L = \ln((1+ u v) (1- u v_1)) \frac{u'}{u}  - a ( u v_{-1} - u v - u u_1 v v_1 )- b ( u_{-1} v_1 - u v_1 + u_{-1} u v v_1 ).
\end{equation}
\item The Kaup-Newell lattice equation (e.g. \cite{Khanizadeh2013,Tsu2002})
\begin{equation}
\left\{
\begin{array}{l}
u'=a\left(\frac{u_1}{1-u_1v_1}-\frac{u}{1-uv}\right)+b\left(\frac{u}{1+uv_1}-\frac{u_{-1}}{1+u_{-1}v}\right)\vspace{0.25cm}\\
v'=a\left(\frac{v}{1-uv}-\frac{v_{-1}}{1-u_{-1}v_{-1}}\right)+b\left(\frac{v_1}{1+uv_1}-\frac{v}{1+u_{-1}v}\right).
\end{array}
\right.
\end{equation}
A change of variables is
\begin{equation}
u=\mu_1-\mu,\quad v=\nu-\nu_{-1};
\end{equation}
 the Lagrangian is
\begin{equation}
L=\nu(\mu_1'-\mu')-a\ln\left\{1-(\mu_2-\mu_1)(\nu_1-\nu)\right\}+b\ln\left\{1+(\mu_1-\mu)(\nu_1-\nu)\right\}.
\end{equation}
\end{itemize}



\section{Self-adjointness and conservation laws of non-variational differential-difference equations}

\label{section4}

In many situations, people are more interested at symmetries and conservation laws for  integrable DDEs themselves rather than the governing variational principles in terms of other variables. The applicability of Noether's theorem relies on variational symmetries rather than symmetries of the original integrable DDEs, while the latter, however, are often better known. Is it possible to construct conservation laws directly from symmetries of the original DDEs without introducing new variables? The self-adjointness method answers this question and we construct the differential-difference version in this section. The reader should refer to \cite{Ibragimov2007} for the differential version and \cite{Peng2015} for the difference version.

Consider a system of DDEs
\begin{equation}
\mathcal{A}=\{F_{\alpha}(x,n,[u])=0\}
\end{equation}
and its regular symmetry generators 
\begin{equation}
X=\xi^i(x)\partial_{x^i}+\phi^{\alpha}(x,n,[u])\partial_{u^{\alpha}}.
\end{equation}

\begin{defn}
The {\bf adjoint system} to the differential-difference system $\mathcal{A}$ is defined as
\begin{equation}
0=F_{\alpha}^{\ast}(x,n,[u],[v]):=\bold{E}_{u^{\alpha}}(v_n^{\beta}F_{\beta}).
\end{equation}
Here we introduce a new variable $v_n$, which is to be determined.  The function 
\begin{equation}
L=v_n^{\beta}F_{\beta}
\end{equation}
 is sometimes called a {\bf formal Lagrangian}.
\end{defn}

\begin{defn}
\label{sadde}
A system of DDEs $\mathcal{A}$ is said to be {\bf self-adjoint} if its adjoint system  
\begin{equation}
F_{\alpha}^{\ast}(x,n,[u],[v])=0
\end{equation}
holds on all solutions of the system $\mathcal{A}$ by some substitution $v^{\alpha}=f^{\alpha}(x,n,[u])$. 
\end{defn}
It is important to distinguish self-adjointness of a system of DDEs in Definition \ref{sadde} from self-adjointness of the associated Fr\'echet derivative; the former is weaker.

To be consistent with the differential and difference versions, a system of DDEs is, respectively, called strict, quasi and weak self-adjoint corresponding to the substitutions $v^{\alpha}=u^{\alpha}$, $v^{\alpha}=f^{\alpha}([u])$ and $v^{\alpha}=f^{\alpha}(x,n,[u])$. 

\begin{exm}\label{savol}
Again consider the running example, the Volterra equation in Example \ref{volterra}.
The formal Lagrangian is given by
\begin{equation}
L=v\left(\frac{u'}{u}-u_{1}+u_{-1}\right),
\end{equation}
the Euler-Lagrange equations of which provide the Volterra equation itself and its adjoint equation
\begin{equation}
-\frac{v'}{u}+v_1-v_{-1}=0.
\end{equation}
The adjoint equation becomes the Volterra equation itself by the substitution $v=-u$. Hence the Volterra equation is self-adjoint.
\end{exm}

\begin{exm}
Consider the following integrable DDE classified by Yamilov \cite{Yamilov2006}:
\begin{equation}
u'=\frac{1}{u_1-u}+\frac{1}{u-u_{-1}}.
\end{equation}
Its adjoint equation is
\begin{equation}
-v'-\frac{v_1+v}{(u_1-u)^2}+\frac{v+v_{-1}}{(u-u_{-1})^2}=0.
\end{equation}
Substituting $v=(-1)^nu$ inside, the adjoint equation becomes
\begin{equation}
(-1)^{n+1}\left(u'-\frac{1}{u_1-u}-\frac{1}{u-u_{-1}}\right)=0,
\end{equation}
which is equivalent to the original equation. 
\end{exm}

The next theorem is essential in constructing conservation laws for non-variational but self-adjoint DDEs  from their symmetries. The differential and difference counterparts can be respectively found in \cite{Ibragimov2007} and \cite{Peng2015}.

\begin{thm}
 Any regular symmetries of a system of DDEs of the bi-Kovalevskaya form can be extended to variational symmetries of the formal Lagrangian. 
\end{thm}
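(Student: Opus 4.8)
The strategy is to pass to the evolutionary representative of the given symmetry and then to define explicitly its action on the auxiliary variable $v$ so that the enlarged generator preserves $L=v^{\beta}F_{\beta}$ up to a differential-difference divergence. By Remark~\ref{evvf} a regular symmetry $X=\xi^{i}(x)\partial_{x^{i}}+\phi^{\alpha}\partial_{u^{\alpha}}$ is equivalent to the evolutionary generator $X_{Q}=Q^{\alpha}\partial_{u^{\alpha}}$ with $Q^{\alpha}=\phi^{\alpha}-\xi^{i}u^{\alpha}_{\bold{1}_i;\bold{0}}$, and the linearized symmetry condition then reads $\bold{pr}X(F_{\beta})=(\bold{D}_{F}Q)_{\beta}$. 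Since $\mathcal{A}$ is of the bi-Kovalevskaya form, the linearized symmetry condition recorded before Example~\ref{volterra} guarantees functions $K^{\gamma}_{\beta;J_1,J_2}(x,n,[u])$ with
\begin{equation}\label{lscbik}
(\bold{D}_{F}Q)_{\beta}=\sum_{\gamma,J_1,J_2}K^{\gamma}_{\beta;J_1,J_2}\,(D_{J_1}S_{J_2}F_{\gamma}).
\end{equation}
This off-shell representation is the essential input, and it is exactly where the bi-Kovalevskaya hypothesis is used.

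I would then extend $X_{Q}$ to the generator $Y=Q^{\alpha}\partial_{u^{\alpha}}+R^{\gamma}\partial_{v^{\gamma}}$ on the enlarged jet space over $(x,n,u,v)$, choosing the $v$-characteristic to be
\begin{equation}\label{Rdef}
R^{\gamma}:=-\sum_{\beta,J_1,J_2}(-D)_{J_1}S_{-J_2}\!\left(v^{\beta}K^{\gamma}_{\beta;J_1,J_2}\right).
\end{equation}
As $\xi^{i}=\xi^{i}(x)$ is left unchanged and $R^{\gamma}$ depends only on $x,n,[u],[v]$, the generator $Y$ is again regular, so its prolongation takes the form \eqref{proppro} and the evolutionary criterion \eqref{evovs} is the relevant notion of variational symmetry.

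It remains to evaluate $\bold{pr}Y(L)$ for $L=v^{\beta}_{\bold{0};\bold{0}}F_{\beta}$. Because $L$ is linear in $v$ and contains only the undifferentiated, unshifted $v^{\beta}_{\bold{0};\bold{0}}$, the $v$-part of the prolongation contributes precisely $R^{\beta}F_{\beta}$, while the $u$-part reproduces the Fr\'echet derivative of $F$ contracted with $v$, giving
\begin{equation}
\bold{pr}Y(L)=v^{\beta}(\bold{D}_{F}Q)_{\beta}+R^{\beta}F_{\beta}.
\end{equation}
Inserting \eqref{lscbik} and moving each $D_{J_1}S_{J_2}$ off $F_{\gamma}$ by integration and summation by parts (the identity underlying the adjoint \eqref{adfddd}, whose combined adjoint is $(-D)_{J_1}S_{-J_2}$), the first term equals $-R^{\gamma}F_{\gamma}+\operatorname{Div}R_{1}+\operatorname{Div}^{\vartriangle}R_{2}$ by the very definition \eqref{Rdef} of $R$. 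This cancels $R^{\beta}F_{\beta}$, leaving $\bold{pr}Y(L)=\operatorname{Div}R_{1}+\operatorname{Div}^{\vartriangle}R_{2}$. By \eqref{evovs} this is exactly the statement that $Y$ is a (divergence) variational symmetry of the formal Lagrangian, which proves the claim.

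The main obstacle is the off-shell step \eqref{lscbik}: the passage from ``$\bold{pr}X(F_{\beta})$ vanishes on all solutions'' to the explicit multiplier identity. This nondegeneracy is precisely the bi-Kovalevskaya condition invoked throughout Section~2, and without it one would only obtain $\bold{pr}Y(L)=0$ on solutions rather than an exact divergence. Once \eqref{lscbik} is available the remaining manipulation is routine bookkeeping with the by-parts identity; the only subtlety to watch is that, since no derivatives or shifts of $v$ occur in $L$, the $v$-contribution to $\bold{pr}Y(L)$ is exactly $R^{\beta}F_{\beta}$ with no additional terms.
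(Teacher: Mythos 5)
Your proposal is correct and follows essentially the same route as the paper: both pass to the evolutionary representative, use the bi-Kovalevskaya hypothesis to write the linearized symmetry condition off-shell as a multiplier identity in $D_{J_1}S_{J_2}F_{\gamma}$, and then define the $v$-characteristic (your $R^{\gamma}$, the paper's $Q_{\ast}^{\beta}$) by the identical adjoint formula $-\sum(-D)_{J_1}S_{-J_2}(v^{\beta}K^{\gamma}_{\beta;J_1,J_2})$ so that integration and summation by parts leaves a total differential-difference divergence. The only cosmetic difference is that you posit $R^{\gamma}$ up front and verify, whereas the paper computes $\bold{pr}Y(L)$ first and reads off the required choice.
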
 

\begin{proof}
Let $\mathcal{A}=\{F_{\alpha}(x,n,[u])=0\}$ be a system of DDEs of the bi-Kovalevskaya form. To prove the theorem, it is enough to consider a regular symmetry generator in the evolutionary form (see Remark \ref{evvf})
\begin{equation}
X=Q^{\alpha}(x,n,[u])\partial_{u^{\alpha}}.
\end{equation}
Consider an extended evolutionary vector field
\begin{equation}\label{xtoy}
Y=X+Q_{\ast}^{\alpha}(x,n,[u],[v])\partial_{v^{\alpha}}
\end{equation}
for some functions $Q^{\alpha}_*$ to be determined. Next we consider the infinitesimal invariance principle (\ref{dvsdde}) for the formal Lagrangian $L=v^{\alpha}F_{\alpha}$. Since the system is in the bi-Kovalevskaya form, there exist functions $K^{\beta}_{\alpha;J_1,J_2}(x,n,[u])$ such that 
\begin{equation}\label{eqfl13}
\begin{aligned}
\bold{pr} Y(L)&=Q^{\alpha}_{\ast}F_{\alpha}+v^{\alpha}\bold{pr}X(F_{\alpha})\\
&=Q^{\alpha}_{\ast}F_{\alpha}+\sum_{\alpha,\beta,J_1,J_2}v^{\alpha}K^{\beta}_{\alpha;J_1,J_2}(D_{J_1}S_{J_2}F_{\beta})\\
&=Q_{\ast}^{\beta}F_{\beta}+\sum_{\alpha,\beta,J_1,J_2}(-D)_{J_1}S_{-J_2}\left(v^{\alpha}K^{\beta}_{\alpha;J_1,J_2}\right)F_{\beta}+\operatorname{Div}P_1+\operatorname{Div}^{\vartriangle}P_2\\
&=\left(Q_{\ast}^{\beta}+\sum_{\alpha,J_1,J_2}(-D)_{J_1}S_{-J_2}\left(v^{\alpha}K^{\beta}_{\alpha;J_1,J_2}\right)\right)F_{\beta}+\operatorname{Div}P_1+\operatorname{Div}^{\vartriangle}P_2
\end{aligned}
\end{equation}
for some $(p_1;p_2)$-tuple $(P_1(x,n,[u],[v]);P_2(x,n,[u],[v]))$. Hence the extended evolutionary vector field $Y$ is a variational symmetry for the formal Lagrangian if we define
\begin{equation}\label{qstar}
Q_{\ast}^{\beta}(x,n,[u],[v])=-\sum_{\alpha,J_1,J_2}(-D)_{J_1}S_{-J_2}\left(v^{\alpha}K^{\beta}_{\alpha;J_1,J_2}\right).
\end{equation}
This completes the proof.
\end{proof}

The above theorem allows us to apply Noether's theorem for DDEs to compute conservation laws for the Euler-Lagrange equations governed by a formal Lagrangian. In particular, the resulting conservation laws with arguments $(x,n,[u],[v])$  become conservation laws of the original system via a certain substitution $v^{\alpha}=f^{\alpha}(x,n,[u])$, for which the original system is self-adjoint.

\begin{exm}\label{exmsadde}
Consider the Volterra equation, which is self-adjoint as shown in Example \ref{savol}.  It is  well known that it admits infinitely many  symmetries. In particular, its Lie point symmetries have been calculated in Example \ref{volterra}. Take the regular infinitesimal generator $X_1=-t\partial_t+u\partial_u$ for example. Now the characteristic is $Q=u+tu'$. From (\ref{qstar}), we obtain $Q_{\ast}=tv'$. A conservation law $(\widehat{P}_1;\widehat{P}_2)$ can then be obtained  in the characteristic form:
\begin{equation}
D_t\widehat{P}_1+(S-\operatorname{id})\widehat{P}_2=Q\left(-\frac{v'}{u}+v_1-v_{-1}\right)+Q_{\ast}\left(\frac{u'}{u}-u_1+u_{-1}\right).
\end{equation}
Substituting $v=-u$ inside, this becomes a conservation law of the Volterra equation in the characteristic form:
\begin{equation}
D_tP_1+(S-\operatorname{id})P_2=u\left(\frac{u'}{u}-u_1+u_{-1}\right).
\end{equation}
One choice of the components is
\begin{equation}
P_1=u,\quad P_2=-uu_{-1}.
\end{equation}
\end{exm}

\begin{rem}\label{withouty}
In Example \ref{exmsadde} above, we first extended symmetries of a system DDEs to variational symmetries of the formal Lagrangian and then apply Noether's theorem. However, it is not necessary to calculate the extended variational symmetries explicitly. 

Again consider a system of DDEs $\mathcal{A}=\{F_{\alpha}(x,n,[u])=0\}$ and its infinitesimal generator $X$, which is assumed in the evolutionary form $X=Q^{\alpha}(x,n,[u])\partial_{u^{\alpha}}$. Let $Y=X+Q_{\ast}^{\alpha}(x,n,[u],[v])\partial_{v^{\alpha}}$ be the extended symmetry generator of the formal Lagrangian $L=v^{\alpha}F_{\alpha}$. Recall that the following equality holds (equation \eqref{eqfl13})
\begin{equation}
Q^{\alpha}_{\ast}F_{\alpha}=-v^{\alpha}\bold{pr}X(F_{\alpha})+\operatorname{Div}P_1+\operatorname{Div}^{\vartriangle}P_2
\end{equation}
for some $(p_1;p_2)$-tuple $(P_1(x,n,[u],[v]);P_2(x,n,[u],[v]))$, which is clearly a trivial conservation law of the original system of DDEs for arbitrary functions $v^{\alpha}(x,n,[u])$. The conservation law obtained from Noether's theorem becomes
\begin{equation}
\begin{aligned}
\operatorname{Div}R_1+\operatorname{Div}^{\vartriangle}R_2&=Q^{\alpha}F_{\alpha}^{\ast}+Q^{\alpha}_{\ast}F_{\alpha}\\
&=Q^{\alpha}F_{\alpha}^{\ast}-v^{\alpha}\bold{pr}X(F_{\alpha})+\operatorname{Div}P_1+\operatorname{Div}^{\vartriangle}P_2.
\end{aligned}
\end{equation}
This can be further written as an equivalent conservation law as
\begin{equation}\label{eqclsf}
\operatorname{Div}\widehat{P}_1+\operatorname{Div}^{\vartriangle}\widehat{P}_2=Q^{\alpha}F_{\alpha}^{\ast}-v^{\alpha}\bold{pr}X(F_{\alpha}),
\end{equation}
where $\widehat{P}_i(x,n,[u],[v])=R_i(x,n,[u],[v])-P_i(x,n,[u],[v])$ ($i=1,2$). With the proper substitution $v^{\alpha}=f^{\alpha}(x,n,[u])$ such that the system is self-adjoint, this becomes a conservation law of the original system. This approach applies to differential equations and difference equations in the same manner.
\end{rem}

For instance, we can consider Example \ref{exmsadde} in the following way without calculating $Q_{\ast}$; now $X_1=-t\partial_t+u\partial_u$, which is equivalent to one generator in the evolutionary form with characteristic $Q=u+tu'$. The conservation law is (according to \eqref{eqclsf}) 
\begin{equation}
\begin{aligned}
D_tP_1+(S-\operatorname{id})P_2&=(u+tu')\left(\frac{u'}{u}-u_1+u_{-1}\right)+u \cdot \bold{pr}\left(Q\partial_u\right)\left(\frac{u'}{u}-u_1+u_{-1}\right)\\
&=(u+tu')\left(\frac{u'}{u}-u_1+u_{-1}\right)+(u+tuD_t)\left(\frac{u'}{u}-u_1+u_{-1}\right)\\
&=u\left(\frac{u'}{u}-u_1+u_{-1}\right)+D_t\left(tu\left(\frac{u'}{u}-u_1+u_{-1}\right)\right).
\end{aligned}
\end{equation}
Therefore, the same result is obtained as in Example \ref{exmsadde} that $u$ is a characteristic of conservation law for the Volterra equation. Here we used the substitution $v=-u$.

\section{Conclusion}
Noether's theorem is a celebrated result which establishes a relation between variational symmetries and conservation laws of the underlying Euler-Lagrange equations; the differential and difference versions have already been well studied. As the first main result of this paper, we extended Noether's theorem to differential-difference equations (DDEs).  Prolongation formulae of continuous symmetries were first investigated and the equivalence of regular symmetry generators and evolutionary vector fields was understood. The latter allows us to connect variational symmetries and conservation laws through their characteristics. For non-variational DDEs, we adapted the self-adjointness method for calculating conservation laws, which had been extensively studied for differential equations during the last decade. Its extension to difference equations was made by the author in \cite{Peng2015}. Defining a formal Lagrangian, the self-adjointness method allows one to achieve infinitely many conservation laws directly from infinitely many (non-variational) symmetries. Though further work may be needed to examine if the resulting conservation laws are distinct. Illustrative examples were provided. 

%
%

\section*{ Acknowledgements} 
The author is indebted to Cheng Zhang and Da-jun Zhang for their hospitality during his visit at Shanghai University, when part of this work was done. The author is grateful to Peter Hydon, whose comments on \cite{Peng2015}  led to Theorem \ref{phydon}. The author would also like to thank Pavlos Xenitidis for insightful discussions on variational principle of DDEs. This work was partially supported by Grant-in-Aid for Scientific Research (16KT0024) and Waseda University Grants for Special Research Projects (2016B-119).


\end{document}